\pgfplotsset{width=10cm,compat=1.9}
\newtheorem{theorem}{Theorem}
\newtheorem*{theorem*}{Theorem}
\newtheorem{lemma}[theorem]{Lemma}
\newtheorem*{lemma*}{Lemma}
\newtheorem{corollary}[theorem]{Corollary}
\newtheorem*{corollary*}{Corollary}
\newtheorem*{proposition*}{Proposition}
\newtheorem*{definition*}{Definition}
\newtheorem*{remark*}{Remark}
\newtheorem*{example*}{Example}
\DeclareMathOperator*{\lexmin}{lex\,min}
\newlist{assumption}{enumerate}{3}
\setlist[assumption]{
	label=(A\arabic*)
	,before=\it\color{black}
	,font=\normalfont
	,resume
	,leftmargin=*
	,align=left
}
\newlist{stability}{enumerate}{3}
\setlist[stability]{
	label=(S\arabic*)
	,before=\it\color{black}
	,font=\normalfont
	,resume
	,leftmargin=*
	,align=left
}
\begin{document}

\title{Prioritized Inverse Kinematics: Desired Task Trajectories in Nonsingular Task Spaces}

\author{Sang-ik An and Dongheui Lee
	\thanks{This work was supported in part by Technical University of Munich - Institute for Advanced Study, funded by the German Excellence Initiative. {\it(Corresponding Author: Sang-ik An.)}}
	\thanks{S. An was with the Human-Centered Assistive Robotics, Department of Electrical and Computer Engineering, Technical University of Munich, D-80333 (email: S.An.Robots@gmail.com).}
	\thanks{D. Lee is with the Human-Centered Assistive Robotics, Department of Electrical and Computer Engineering, Technical University of Munich, D-80333 Munich, Germany and also with Institute of Robotics and Mechatronics, German Aerospace Center (DLR) (email: dhlee@tum.de).}%
}

\maketitle

\begin{abstract}
	A prioritized inverse kinematics (PIK) solution can be considered as a (regulation or output tracking) control law of a dynamical system with prioritized multiple outputs.
	We propose a method that guarantees that a joint trajectory generated from a class of PIK solutions exists uniquely in a nonsingular configuration space.
	We start by assuming that desired task trajectories stay in nonsingular task spaces and find conditions for task trajectories to stay in a neighborhood of desired task trajectories in which we can guarantee existence and uniqueness of a joint trajectory in a nonsingular configuration space.
	Based on this result, we find a sufficient condition for task convergence and analyze various stability notions such as stability, uniform stability, uniform asymptotic stability, and exponential stability in both continuous and discrete times.
	We discuss why the number of tasks is limited in discrete time and show how preconditioning can be used in order to overcome this limitation.
\end{abstract}

\begin{IEEEkeywords}
Nonlinear systems, constrained control, robotics, optimization, prioritized inverse kinematics.
\end{IEEEkeywords}

\IEEEpeerreviewmaketitle

\section{Introduction}

\IEEEPARstart{T}{he} prioritized inverse kinematics (PIK) is a problem to find a joint trajectory that accomplishes goals of prioritized multiple tasks.
The PIK problem consists of two steps.
Firstly, we find a PIK solution that is a (regulation or output tracking) control law of a dynamical system with prioritized multiple outputs.
Then, we generate a joint trajectory by solving the differential equation whose right-hand side is the PIK solution.
When there are conflicts between tasks, available degrees of freedom (DOF) that are a common resource of multiple tasks are distributed according to the priority relations by means of projections of the joint velocity to the null spaces of higher priority tasks.
As a result, we can guarantee that the goal of a higher priority task is not touched by the actions of lower priority tasks.
The PIK problem has been studied intensively for decades in the robotics society and expanded into many areas such as constrained PIK \cite{Mansard2009}\cite{Kanoun2012}\cite{Escande2014}, task switching \cite{Lee2012}\cite{Petric2013}\cite{An2015}, prioritized control \cite{Khatib1987}\cite{Hsu1989}\cite{Sadeghian2014}\cite{Ott2015}, prioritized optimal control \cite{Nakamura1987a}\cite{Geisert2017}, learning prioritized tasks \cite{Saveriano2015}\cite{Calinon2018}\cite{Silverio2018}, etc.

In spite of its successful use in many practical applications, the theoretical aspect of the PIK problem has not been fully studied yet.
Antonelli \cite{Antonelli2009} analyzed task convergence of two popular PIK solutions under the assumptions that desired task trajectories are constant functions and a joint trajectory stays in a nonsingular configuration space.
However, desired task trajectories are usually not constant functions and it was not shown how we can guarantee that a joint trajectory stays in a nonsingular configuration space.
Bouyarmane and Kheddar \cite{Bouyarmane2018} showed that a PIK solution can be approximated in any accuracy by the multi-objective optimization with the weighted-sum scalarization and analyzed stability of the approximated PIK solution.
However, it was not proven that a joint trajectory generated from the approximated PIK solution converges to a joint trajectory generated from the PIK solution as the approximated PIK solution converges to the PIK solution.
Since a joint trajectory is generated by solving the differential equation whose right-hand side is the PIK solution, existence and uniqueness of a joint trajectory are related to smoothness of the PIK solution.
An and Lee \cite{An2019} proposed a generalization of the PIK problem as the multi-objective optimization with the lexicographical ordering by specifying three properties (dependence, uniqueness, and representation) for an objective function to be proper for the PIK problem.
Based on this definition, An and Lee \cite{An2019a} analyzed nonsmoothness, trajectory existence, task convergence, and stability of a class of PIK solutions in the general case that the PIK solution can possibly be nonsmooth; the task convergence theorem removes not only the assumption that desired task trajectories are constant functions but also the sufficient condition on the feedback gains found in Antonelli \cite{Antonelli2009}.
Even though the work by An and Lee \cite{An2019}\cite{An2019a} provides better understanding of the PIK problem theoretically, it is not easy to apply those results to practical applications.
Specifically, the existence theorem requires us to know the set of configurations in which the PIK solution is discontinuous.
Additionally, in order to guarantee task convergence, it still assumes that a joint trajectory stays in a nonsingular configuration space.
Also, stability analysis assumes that desired task trajectories are constant functions.
Lastly, there is a lack of analysis in discrete time in which most practical applications are solved.

In this paper, we propose a method that guarantees that a joint trajectory exists uniquly and stays in a nonsingular configuration space; analyze task convergence and stability in both continuous and discrete times; and show how preconditioning can be used in order to increase the number of tasks in discrete time.
In Section \ref{sec:prioritized_inverse_kinematics}, we briefly introduce the PIK problem and state the problem to solve.
In Section \ref{sec:continuous_time}, we present our method and analyze task convergence and stability in continuous time.
The basic idea comes from the fact that it is easier to assume that desired task trajectories stay in nonsingular task spaces because we can freely design desired task trajectories to have that property.
Then, we find conditions for task trajectories to stay in a neighborhood of desired task trajectories where a joint trajectory is guaranteed to exist uniquely and to stay in a nonsingular configuration space.
Once we have this property, we can find sufficient conditions for task convergence and various stability notions such as stability, uniform stability, uniform asymptotic stability, and exponential stability.
In Section \ref{sec:discrete_time}, we find sufficient conditions for task convergence and various stability notions in discrete time.
In Section \ref{sec:preconditioning}, we firstly analyze preconditioning and then show how preconditioning can be used in order to increase the number of tasks in discrete time.
We give concluding remarks in Section \ref{sec:conclusion}.

\section{Prioritized Inverse Kinematics}
\label{sec:prioritized_inverse_kinematics}

\subsection{Inverse Kinematics in Position}

Let $\mathbf{x} = (t,\mathbf{q})$ be a variable on $X = \mathbb{R}\times\mathbb{R}^n$.
We shall use $\mathbf{q}$ to denote both a point $\mathbf{q}\in\mathbb{R}^n$ and a trajectory $\mathbf{q}:[t_0,\infty)\to\mathbb{R}^n$ interchangeably and the meaning should be understood from the context.
We write $\mathbf{x}(t) = (t,\mathbf{q}(t))$ when $\mathbf{q}$ is a trajectory.
The inverse kinematics with multiple tasks in the position level is a problem to find a best joint trajectory $\mathbf{q}:[t_0,\infty)\to\mathbb{R}^n$ that satisfies the $a$-th forward kinematic equation
\begin{equation*}
\mathbf{f}_a(\mathbf{x}(t)) = \mathbf{p}_a(t)
\end{equation*}
at least approximately for all $a\in\overline{1,l} = \{1,2,\dots,l\}$ and $t\in[t_0,\infty)$ where $l\in\mathbb{N}\setminus\{1\}$ is the number of tasks, $t_0\in\mathbb{R}$ is the initial time, $\mathbf{f}_a:X\to\mathbb{R}^{m_a}$ is the $a$-th forward kinematic function that maps the joint position $\mathbf{q}\in\mathbb{R}^n$ into the $a$-th task position $\mathbf{f}_a(t,\mathbf{q})\in\mathbb{R}^{m_a}$ for each $t\in\mathbb{R}$, and $\mathbf{p}_a:\mathbb{R}\to\mathbb{R}^{m_a}$ is the $a$-th desired task trajectory.
Without loss of generality, we assume $m = m_1 + \cdots + m_l \le n$; if $m > n$, then we can redefine the forward kinematic functions as $\tilde{\mathbf{f}}_a(t,\tilde{\mathbf{q}}) = \tilde{\mathbf{f}}_a(t,\mathbf{q},q_{n+1},\dots,q_m) = \mathbf{f}_a(t,\mathbf{q})$ by introducing dummy variables or virtual joints $q_{n+1},\dots,q_m$.
The forward kinematic functions are defined by a mechanism in an environment and the desired task trajectories are designed according to a scenario.
Usually, various scenarios are applied for a mechanism in an environment, so we need to consider various $\mathbf{p} = (\mathbf{p}_1,\dots,\mathbf{p}_l)$ given $\mathbf{f} = (\mathbf{f}_1,\dots,\mathbf{f}_l)$.
Let $(\mathbb{R}^m)^{\mathbb{R}}$ be the set of all functions from $\mathbb{R}$ to $\mathbb{R}^m$ in which $\mathbf{p}$ is defined.
Solving the inverse kinematics with multiple tasks in the position level consists of two steps.
Initially, we find a map $\mathbf{u}:\mathbb{R}\times(\mathbb{R}^m)^\mathbb{R}\to\mathbb{R}^n$, called the inverse kinematics solution, whose value $\mathbf{u}(t,\mathbf{p})$ satisfies
\begin{equation}
\label{eqn:inverse_problem_position}
\mathbf{f}_a(t,\mathbf{u}(t,\mathbf{p})) = \mathbf{p}_a(t)
\end{equation}
at least approximately or equivalently minimizes the $a$-th task error
\begin{equation*}
	\mathbf{e}_a(t,\mathbf{q},\mathbf{p}) = \mathbf{p}_a(t) - \mathbf{f}_a(t,\mathbf{q})
\end{equation*}
with respect to $\mathbf{q}$ in some sense for all $(a,t,\mathbf{p})$.
Then, for every initial time $t_0\in\mathbb{R}$ and every desired task trajectory $\mathbf{p}\in(\mathbb{R}^m)^\mathbb{R}$, we can generate the best joint trajectory $\mathbf{q}:[t_0,\infty)\to\mathbb{R}^n$ by letting
\begin{equation*}
\mathbf{q}(t) = \mathbf{u}(t,\mathbf{p})
\end{equation*}
for all $t\in[t_0,\infty)$.
In most cases, solving the inverse problem \eqref{eqn:inverse_problem_position} is difficult because $\mathbf{f}$ is highly nonlinear and is not one-to-one.
Furthermore, it is hard to guarantee a certain level of smoothness of the joint trajectory (continuity, differentiability, etc) that is required for a mechanism to execute the joint trajectory.
So, Whitney \cite{Whitney1969} introduced the resolved motion rate control that solves the inverse kinematics in the velocity level.

\subsection{Inverse Kinematics in Velocity}

The inverse kinematics with multiple tasks in the velocity level is a problem to find a best joint trajectory $\mathbf{q}:[t_0,\infty)\to\mathbb{R}^n$ that is differentiable on $(t_0,\infty)$ and satisfies $\mathbf{q}(t_0) = \mathbf{q}_0$ and the $a$-th differential forward kinematic equation
\begin{equation*}
\mathbf{f}_{ta}(\mathbf{x}(t)) + \mathbf{F}_{qa}(\mathbf{x}(t))\dot{\mathbf{q}}(t) = \mathbf{r}_a(\mathbf{x}(t))
\end{equation*}
at least approximately for all $a\in\overline{1,l}$ and $t\in(t_0,\infty)$ where $(t_0,\mathbf{q}_0)\in X$ is the initial value, $\mathbf{F}_a = \begin{bmatrix}\mathbf{f}_{ta} & \mathbf{F}_{qa}\end{bmatrix}:X\to\mathbb{R}^{m_a\times(n+1)}$ is the $a$-th velocity mapping function that maps the joint velocity $\dot{\mathbf{q}}\in\mathbb{R}^n$ into the $a$-th task velocity $\mathbf{f}_{ta}(\mathbf{x}) + \mathbf{F}_{qa}(\mathbf{x})\dot{\mathbf{q}}\in\mathbb{R}^{m_a}$ for each $\mathbf{x}\in X$, and $\mathbf{r}_a:X\to\mathbb{R}^{m_a}$ is the $a$-th reference that represents the desired behavior of the $a$-th task velocity.
Similarly as before, we assume without loss of generality that $m\le n$.
Since we need to solve $l$ such inverse problems simultaneously, preconditioning matrix valued functions $\mathbf{F}_{q1},\dots,\mathbf{F}_{ql}$ may provide better solvability.
Let $\mathbf{R}:X\to GL_n(\mathbb{R}) = \{\mathbf{A}\in\mathbb{R}^{n\times n}\mid\det(\mathbf{A})\neq0\}$ be an arbitrary matrix-valued function and define $\mathbf{J}_a:X\to\mathbb{R}^{m_a\times n}$ as $\mathbf{J}_a(\mathbf{x}) = \mathbf{F}_{qa}(\mathbf{x})\mathbf{R}^{-1}(\mathbf{x})$.
Let $\mathbf{r}_a' = \mathbf{r}_a - \mathbf{f}_{ta}$.
Then, we can rewrite the $a$-th differential forward kinematic equation as
\begin{equation*}
\mathbf{J}_a(\mathbf{x}(t))\mathbf{R}(\mathbf{x}(t))\dot{\mathbf{q}}(t) = \mathbf{r}_a'(\mathbf{x}(t)).
\end{equation*}
A specific choice of $\mathbf{R}$ and its effect on the inverse kinematics will be discussed in Section \ref{sec:preconditioning}.
We denote $\mathbf{f}_t = (\mathbf{f}_{t1},\dots,\mathbf{f}_{tl})$, $\mathbf{F}_q = \begin{bmatrix} \mathbf{F}_{q1}^T & \cdots & \mathbf{F}_{ql}^T\end{bmatrix}^T$, $\mathbf{F} = \begin{bmatrix} \mathbf{f}_t & \mathbf{F}_q\end{bmatrix}$, $\mathbf{J} = \mathbf{F}_q\mathbf{R}^{-1}$, $\mathbf{r} = (\mathbf{r}_1,\dots,\mathbf{r}_l)$, and $\mathbf{r}' = \mathbf{r} - \mathbf{f}_t$.
We orthogonalize rows of $\mathbf{J}$ by performing the full QR decomposition of $\mathbf{J}^T(\mathbf{x})$ at each $\mathbf{x}\in X$ as in \cite[Lemma 1]{An2019}
\begin{equation}
\label{eqn:orthogonalization_of_J}
\underbrace{\begin{bmatrix} \mathbf{J}_1 \\ \vdots \\ \mathbf{J}_l \end{bmatrix}}_{\mathbf{J}(\mathbf{x})\in\mathbb{R}^{m\times n}} = \underbrace{\begin{bmatrix} \mathbf{C}_{11} & \cdots & \mathbf{0} & \mathbf{0} \\ \vdots & \ddots & \vdots & \vdots \\ \mathbf{C}_{l1} & \cdots & \mathbf{C}_{ll} & \mathbf{0} \end{bmatrix}}_{\mathbf{C}_e(\mathbf{x}) = [\mathbf{C}_{ij}(\mathbf{x})] \in \mathbb{R}^{m\times n}} \underbrace{\begin{bmatrix} \hat{\mathbf{J}}_1 \\ \vdots \\ \hat{\mathbf{J}}_{l+1} \end{bmatrix}}_{\hat{\mathbf{J}}_e(\mathbf{x})\in\mathbb{R}^{n\times n}}.
\end{equation}
$\mathbf{F}$ is defined by a mechanism in an environment, $\mathbf{R}$ is constructed from $\mathbf{F}_q$, and $\mathbf{r}$ is designed according to a scenario.
Usually, various scenarios are applied for a mechanism in an environment, so we need to consider various $\mathbf{r}$ given $\mathbf{F}$ and $\mathbf{R}$.
Let $(\mathbb{R}^m)^X$ be the set of all functions from $X$ to $\mathbb{R}^m$ in which $\mathbf{r}$ is defined.
Solving the inverse kinematics in the velocity level consists of two steps.
Initially, we find a map $\mathbf{u}:X\times(\mathbb{R}^m)^X\to\mathbb{R}^n$, called the inverse kinematics solution, whose value $\mathbf{u}(\mathbf{x},\mathbf{r})$ satisfies
\begin{equation}
\label{eqn:inverse_problem_velocity}
\mathbf{J}_a(\mathbf{x})\mathbf{R}(\mathbf{x})\mathbf{u}(\mathbf{x},\mathbf{r}) = \mathbf{r}'_a(\mathbf{x})
\end{equation}
at least approximately or equivalently minimizes the $a$-th residual
\begin{equation*}
	\mathbf{e}_a^\mathrm{res}(\mathbf{x},\dot{\mathbf{q}},\mathbf{r}) = \mathbf{r}_a'(\mathbf{x}) - \mathbf{J}_a(\mathbf{x})\mathbf{R}(\mathbf{x})\dot{\mathbf{q}}
\end{equation*}
with respect to $\dot{\mathbf{q}}$ in some sense for all $(a,\mathbf{x},\mathbf{r})$.
Then, for every initial condition $(t_0,\mathbf{q}_0)\in X$ and every reference $\mathbf{r}\in(\mathbb{R}^m)^X$, we can generate the best joint trajectory $\mathbf{q}:[t_0,\infty)\to\mathbb{R}^n$ by solving the differential equation
\begin{equation}
\label{eqn:differential_equation}
\dot{\mathbf{q}}(t) = \mathbf{u}(t,\mathbf{q}(t),\mathbf{r})
\end{equation}
for $t\in(t_0,\infty)$ with the initial value $\mathbf{q}(t_0) = \mathbf{q}_0$.

\subsection{Prioritized Inverse Kinematics}

When solving the inverse problem \eqref{eqn:inverse_problem_velocity}, the expression `at least approximately' can be understood very differently by each person.
We may specify the details on this expression by constructing an optimization problem.
Since we have multiple tasks, we need to introduce a vector-valued objective function $\boldsymbol{\pi} = (\pi_1,\dots,\pi_l):X\times\mathbb{R}^n\times(\mathbb{R}^m)^X\to[0,\infty]^l$ such that \eqref{eqn:inverse_problem_velocity} can be achieved for each $(\mathbf{x},\mathbf{r})$ at least approximately by choosing $\mathbf{u}(\mathbf{x},\mathbf{r})$ that minimizes $\pi_a(\mathbf{x},\mathbf{R}(\mathbf{x})\dot{\mathbf{q}},\mathbf{r})$ with respect to $\dot{\mathbf{q}}\in\mathbb{R}^n$.
Since $\mathbf{R}$ is invertible everywhere, it is equivalent to choose $\mathbf{R}(\mathbf{x})\mathbf{u}(\mathbf{x},\mathbf{r})$ that minimizes $\pi_a(\mathbf{x},\mathbf{y},\mathbf{r})$ with respect to $\mathbf{y}\in\mathbb{R}^n$.
There does not typically exist an optimal solution $\mathbf{y}^*\in\mathbb{R}^n$ that minimizes all objective functions $\pi_1(\mathbf{x},\mathbf{y},\mathbf{r}),\dots,\pi_l(\mathbf{x},\mathbf{y},\mathbf{r})$ with respect to $\mathbf{y}$ simultaneously.
Then, a canonical choice of $\mathbf{R}(\mathbf{x})\mathbf{u}(\mathbf{x},\mathbf{r})$ would be a Pareto optimal solution of the multi-objective optimization
\begin{equation*}
\min_{\mathbf{y}\in\mathbb{R}^n}(\pi_1(\mathbf{x},\mathbf{y},\mathbf{r}),\dots,\pi_l(\mathbf{x},\mathbf{y},\mathbf{r}),\|\mathbf{y}\|^2/2)
\end{equation*}
for each $(\mathbf{x},\mathbf{r})$.
Here, we introduced an axiliary objective function $\|\mathbf{y}\|^2/2$ to consider regularization of the inverse kinematics solution near singularities.
Among those inverse kinematics solutions, we put a special emphasis on the case that $\mathbf{R}(\mathbf{x})\mathbf{u}(\mathbf{x},\mathbf{r})$ is the optimal solution of the multi-objective optimization with the lexicographical ordering
\begin{equation}
	\label{eqn:multi-objective_optimization_with_lexicographical_ordering}
	\lexmin_{\mathbf{y}\in\mathbb{R}^n}(\pi_1(\mathbf{x},\mathbf{y},\mathbf{r}),\dots,\pi_l(\mathbf{x},\mathbf{y},\mathbf{r}),\|\mathbf{y}\|^2/2)
\end{equation}
because it allows us to consider priority relations between multiple tasks; the first task has priority over the second, the second has priority over the third, and so on.
We call the inverse kinematics with prioritized multiple tasks as the prioritized inverse kinematics (PIK).
However, not every objective function $\boldsymbol{\pi}$ is proper for the PIK problem.
For example, if $\pi_1(\mathbf{x},\mathbf{y},\mathbf{r}) = \|\mathbf{y}\|$, then we have a trivial optimal solution $\mathbf{y}^* = \mathbf{0}$ of \eqref{eqn:multi-objective_optimization_with_lexicographical_ordering} regardless of the choice of $\pi_2,\dots,\pi_l$.
So, An and Lee \cite{An2019}\cite{An2019a} proposed three properties (dependence, uniqueness, and representation) for an objective function to be (strongly or weakly) proper for the PIK problem.
We say that a map $\mathbf{u}$ is a PIK solution if there exists a proper objective function $\boldsymbol{\pi}$ such that $\mathbf{R}(\mathbf{x})\mathbf{u}(\mathbf{x},\mathbf{r})$ is the optimal solution of \eqref{eqn:multi-objective_optimization_with_lexicographical_ordering} for each $(\mathbf{x},\mathbf{r})$.
In this paper, we consider a class of PIK solutions that can be written as
\begin{equation}
\label{eqn:class_of_pik_solutions}
\mathbf{u} = \mathbf{R}^{-1}\hat{\mathbf{J}}^T\mathbf{C}_D^T\underbrace{\begin{bmatrix} \mathbf{L}_{11} & \mathbf{0} & \cdots & \mathbf{0} \\ \mathbf{L}_{21} & \mathbf{L}_{22} & \cdots & \mathbf{0} \\ \vdots & \vdots & \ddots & \vdots \\ \mathbf{L}_{l1} & \mathbf{L}_{l2} & \cdots & \mathbf{L}_{ll} \end{bmatrix}}_{\mathbf{L} = [\mathbf{L}_{ij}]:X\to\mathbb{R}^{m\times m}}\mathbf{r}'
\end{equation}
where $\hat{\mathbf{J}}:X\to\mathbb{R}^{m\times n}$ is the top ($m\times n$) block of $\hat{\mathbf{J}}_e$, $\mathbf{C}_D = \mathrm{diag}(\mathbf{C}_{11},\dots,\mathbf{C}_{ll}):X\to\mathbb{R}^{m\times m}$ is block diagonal whose diagonal blocks are $\mathbf{C}_{11},\dots,\mathbf{C}_{ll}$ starting from the top left corner, and $\mathbf{L} = [\mathbf{L}_{ij}]:X\to\mathbb{R}^{m\times m}$ is block lower triangular with $\mathbf{L}_{ab}:X\to\mathbb{R}^{m_a\times m_b}$.
Some examples of PIK solutions that can be written as \eqref{eqn:class_of_pik_solutions} can be found from \cite{An2019}\cite{An2019a}.

\subsection{Problem Statement}

The inverse kinematics with multiple tasks in the velocity level can be used to solve the inverse kinematics with multiple tasks in the position level.
Assume that the forward kinematic function $\mathbf{f}:X\to\mathbb{R}^m$ and the desired task trajectory $\mathbf{p}:\mathbb{R}\to\mathbb{R}^m$ are differentiable on $X$ and $\mathbb{R}$, respectively.
Then, we can define the velocity mapping function as
\begin{equation*}
	\mathbf{F} = \begin{bmatrix} \mathbf{f}_t & \mathbf{F}_q\end{bmatrix} = \begin{bmatrix} \frac{\partial \mathbf{f}}{\partial t} & \frac{\partial\mathbf{f}}{\partial\mathbf{q}}\end{bmatrix}.
\end{equation*}
A typical choice of the reference is 
\begin{equation}
	\label{eqn:reference_clik}
	\mathbf{r} = \dot{\mathbf{p}} + \mathbf{K}(\mathbf{p} - \mathbf{f})
\end{equation}
where $\mathbf{K} = \mathrm{diag}(k_1\mathbf{I}_{m_1},\dots,k_l\mathbf{I}_{m_l})\in\mathbb{R}^{m\times m}$ with $k_a\in(0,\infty)$ is the feedback gain matrix \cite{Chiacchio1991} and $\mathbf{I}_{m_a}\in\mathbb{R}^{m_a\times m_a}$ is the identity matrix.
The preconditioner function $\mathbf{R}:X\to GL_n(\mathbb{R})$ can be chosen arbitrarily.
Let $\mathbf{u}:X\times(\mathbb{R}^m)^X\to\mathbb{R}^n$ be a PIK solution in the form of \eqref{eqn:class_of_pik_solutions}.
Since the desired task trajectory and the reference are fixed, we may write $\mathbf{e}_a(\mathbf{x}) = \mathbf{e}_a(\mathbf{x},\mathbf{p})$, $\mathbf{e}_a^\mathrm{res}(\mathbf{x},\dot{\mathbf{q}}) = \mathbf{e}_a^\mathrm{res}(\mathbf{x},\dot{\mathbf{q}},\mathbf{r})$, and $\mathbf{u}(\mathbf{x}) = \mathbf{u}(\mathbf{x},\mathbf{r})$.
Let $\mathbf{q}:[t_0,\infty)\to\mathbb{R}^n$ be a joint trajectory satisfying the differential equation \eqref{eqn:differential_equation} with an initial value $(t_0,\mathbf{q}_0)\in X$.
The $a$-th reference $\mathbf{r}_a(\mathbf{x})$ consists of the feedforward term $\dot{\mathbf{p}}_a(t)$ and the feedback term $k_a(\mathbf{p}_a(t) - \mathbf{f}_a(\mathbf{x}))$ and $\mathbf{u}(\mathbf{x})$ minimizes residuals $\mathbf{e}_a^\mathrm{res}(\mathbf{x},\dot{\mathbf{q}})$ for $a\in\overline{1,l}$ with respect to $\dot{\mathbf{q}}$ under the priority relations in the sense of \eqref{eqn:multi-objective_optimization_with_lexicographical_ordering}.
So, we may expect
\begin{equation}
	\label{eqn:convergence_of_task_errors}
	\lim_{t\to\infty}\|\mathbf{e}_a(\mathbf{x}(t))\| = 0.
\end{equation}
Also, we may anticipate stability of the nonautonomous system \eqref{eqn:differential_equation} to guarantee that a small change in the initial configuration does not produce large changes in the joint trajectory.
In this paper, we study those issues in both continuous and discrete time in order to provide the theoretical foundation and the practical application of the PIK problem.
We also analyze preconditioning to show that the number of tasks can be increased by preconditioning in discrete time.
When we study those issues, we must keep in mind that existence and uniqueness of the joint trajectory $\mathbf{q}(t)$ should be guaranteed somehow.

We use $\|\cdot\|_p$ to denote the $p$-norm for vectors and the induced $p$-norm for matrices and $\|\cdot\| = \|\cdot\|_2$ for both vectors and matrices.
$\|\cdot\|_F$ is the Frobenius norm for matrices.
For a three dimensional array $\mathbf{M} = [m_{ijk}]\in\mathbb{R}^{a\times b\times c}$, we define $\|\mathbf{M}\| = (\sum_{k=1}^c\|[m_{ijk}]\|^2)^{1/2}$ and $\|\mathbf{M}\|_F = (\sum_{k=1}^c\|[m_{ikj}]\|_F^2)^{1/2}$.
We use $\sigma_i(\cdot)$, $\sigma_{\max}(\cdot)$, and $\sigma_{\min}(\cdot)$ to denote the $i$-th, the maximum, and the minimum singular values of matrices, respectively.
$2^S$, $\mathrm{cl}(S)$, $\mathrm{int}(S)$, and $\mathrm{bd}(S)$ are the power set, the closure, the interior, and the boundary of a set $S$, respectively.
We say that $\mathbf{u}(\mathbf{x})$ is linearly bounded if there exist $\gamma,c\in[0,\infty)$ satisfying $\|\mathbf{u}(t,\mathbf{q})\|\le\gamma\|\mathbf{q}\| + c$ for all $(t,\mathbf{q})\in X$.
We will need the following assumptions and lemma:
\begin{assumption}
	\item\label{ass:p_bounded_and_f_linearly_bounded} $\mathbf{p}$ is bounded and $\mathbf{f}$ is linearly bounded;
	\item\label{ass:dp_F_invR_locally_Lipschitz_and_bounded} $\dot{\mathbf{p}}$, $\mathbf{F}$, and $\mathbf{R}^{-1}$ are locally Lipschitz and bounded.
\end{assumption}
\begin{lemma}
	\label{lem:existence_of_solution_of_differential_equation}
	If $\mathbf{u}:\mathbb{R}\times\mathbb{R}^n\to\mathbb{R}^n$ is continuous and linearly bounded, then for every $(t_0,\mathbf{q}_0)\in\mathbb{R}\times\mathbb{R}^n$ there exists $\mathbf{q}:[t_0,\infty)\to\mathbb{R}^n$ satisfying $\mathbf{q}(t_0) = \mathbf{q}_0$ and $\dot{\mathbf{q}}(t) = \mathbf{u}(t,\mathbf{q}(t))$ for all $t\in(t_0,\infty)$.
	If additionally $\mathbf{u}$ is locally Lipschitz, then the trajectory $\mathbf{q}$ is unique \cite[pp. 178]{Clarke2008}.
\end{lemma}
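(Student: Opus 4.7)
The plan is to treat this as a standard ODE result and assemble it from three classical ingredients: Peano existence, a Grönwall-type a priori bound to rule out finite-time blow-up, and Picard--Lindelöf uniqueness. Since the paper itself cites \cite{Clarke2008} for this lemma, my proposal is really to sketch the route rather than re-derive the machinery.

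First, for local existence, I would invoke Peano's theorem on a closed box $[t_0, t_0+\delta]\times\overline{B(\mathbf{q}_0, r)}$: continuity of $\mathbf{u}$ alone gives a bound $M$ on $\|\mathbf{u}\|$ over this compact set, and the Arzelà--Ascoli argument applied to Euler polygonal approximations (or equivalently Schauder's fixed point theorem on the integral operator $\mathbf{q}\mapsto\mathbf{q}_0+\int_{t_0}^t\mathbf{u}(s,\mathbf{q}(s))\,ds$) yields a $C^1$ solution on $[t_0,t_0+\delta]$ with $\delta=\min(r/M,\ldots)$.

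Second, to get a solution on all of $[t_0,\infty)$, I would use the linear bound $\|\mathbf{u}(t,\mathbf{q})\|\le\gamma\|\mathbf{q}\|+c$ to preclude finite-time escape. Any solution $\mathbf{q}$ on a maximal interval $[t_0,T^*)$ satisfies
\begin{equation*}
\|\mathbf{q}(t)\|\le\|\mathbf{q}_0\|+\int_{t_0}^t(\gamma\|\mathbf{q}(s)\|+c)\,ds,
\end{equation*}
so Grönwall's inequality gives $\|\mathbf{q}(t)\|\le(\|\mathbf{q}_0\|+c(t-t_0))e^{\gamma(t-t_0)}$, which is finite on every compact subinterval. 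Hence $\mathbf{q}$ stays in a bounded set on any finite horizon $[t_0,T]$, so the standard continuation argument (re-apply Peano at $T^*$ starting from a limit point) contradicts maximality unless $T^*=\infty$.

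Third, under the additional local Lipschitz assumption, uniqueness is Picard--Lindelöf: if $\mathbf{q}_1,\mathbf{q}_2$ are two solutions with the same initial value, then on any compact subinterval $[t_0,T]$ both trajectories lie in a common compact set where $\mathbf{u}$ has a Lipschitz constant $L$ in $\mathbf{q}$, so
\begin{equation*}
\|\mathbf{q}_1(t)-\mathbf{q}_2(t)\|\le L\int_{t_0}^t\|\mathbf{q}_1(s)-\mathbf{q}_2(s)\|\,ds,
\end{equation*}
and Grönwall forces the difference to be identically zero on $[t_0,T]$; letting $T\to\infty$ gives global uniqueness. The only genuinely delicate step is the continuation/no-escape argument in the second paragraph, since Peano alone does not propagate a local solution forward in time without an a priori bound; the linear bound hypothesis is exactly what makes that step routine.
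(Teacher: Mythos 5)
Your sketch is correct and is exactly the standard argument (Peano local existence, Grönwall a~priori bound to rule out finite-time escape, then continuation to $[t_0,\infty)$; Picard--Lindelöf via a second Grönwall application for uniqueness under local Lipschitz continuity), which is what the cited reference \cite[pp.~178]{Clarke2008} uses. The paper itself offers no proof beyond that citation, and your identification of the continuation step as the one place where the linear-boundedness hypothesis does real work is the right observation.
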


\begin{figure}[t]
	\centering
	\resizebox{0.96\columnwidth}{!}{%
		\begin{tikzpicture}
		\draw[thick] (0,0) rectangle (3.6,3.6);
		\node[above] at (1.8,3.6) {$\mathbb{R}^n$};
		\draw[thick,fill=gray,fill opacity=0.5] (1.8,1.8) circle (1.4);
		\node at (1.8,2.65) {$\Theta(t)+r_\Theta B_n$};
		\draw[thick,fill=gray,fill opacity=0.5] (1.8,1.5) circle (0.9);
		\node at (1.8,2) {$\Theta(t)$};
		\draw[thick] (4.1,0) rectangle (9.9,3.6);
		\node[above] at (7.0,3.6) {$\mathbb{R}^m$};
		\draw[thick,fill=gray,fill opacity=0.5] (7.0,2) ellipse (1.6 and 1.3);
		\node at (7.0,2.8) {$\mathbf{f}(t,\Theta(t))$};
		\draw[thick,fill=gray,fill opacity=0.5] (8.4,1.5) circle (1.2);
		\node at (8.4,2.1) {$\mathbf{f}(t,\Theta_b(t))$};
		\draw[thick,fill=gray,fill opacity=0.5] (5.6,1.5) circle (1.2);
		\node at (5.0,2.1) {$P(t)$};
		\draw[fill] (1.8,1.5) circle (0.03);
		\node[below] at (1.8,1.5) {$\mathbf{q}$};
		\draw[fill] (6.1,1.7) circle (0.03);
		\node[below] at (6.1,1.7) {$\mathbf{f}(t,\mathbf{q})$};
		\draw[-latex] (1.8,1.5) to[bend left=5] (6.1,1.7);
		\draw[fill] (2.75,1) circle (0.03);
		\node[below right] at (2.75,1) {$\mathbf{q}'$};
		\draw[fill] (7.8,1.4) circle (0.03);
		\node[below right] at (7.8,1.4) {$\mathbf{f}(t,\mathbf{q}')$};
		\draw[-latex] (2.75,1) to[bend left=-10] (7.8,1.4);
		\draw[-latex] (2.2,3.8) to node[pos=0.37,above]{$\mathbf{f}(t,\cdot)$} (6.6,3.8);
		\end{tikzpicture}
	}
	\caption{A diagram that shows relations between sets when \ref{ass:set_valued_map_Theta} holds where $P(t) = \bigtimes_{b=1}^l(\mathbf{p}_b(t) + \theta_bB_{m_b})$ and $\Theta_b(t) = (\Theta(t)+r_\Theta B_n)\setminus\Theta(t)$.}
	\label{fig:diagram_for_relations_between_sets}
\end{figure}
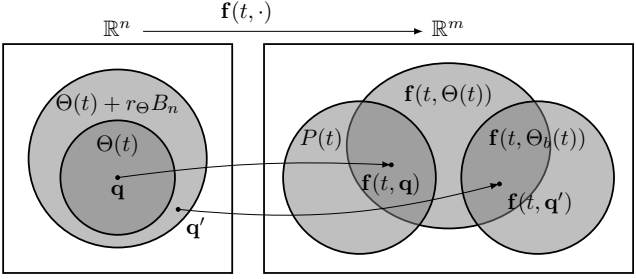

\section{Continuous Time}
\label{sec:continuous_time}

It is not always the case that a PIK solution in the form of \eqref{eqn:class_of_pik_solutions} is locally Lipschitz such that existence and uniqueness of a solution of \eqref{eqn:differential_equation} can be guaranteed by classical theorems such as Lemma \ref{lem:existence_of_solution_of_differential_equation}.
The reason comes from the fact that even if we assume that $\mathbf{J}$ is locally Lipschitz on $X$, its orthogonalization \eqref{eqn:orthogonalization_of_J} can be discontinuous such that a PIK solution in the form of \eqref{eqn:class_of_pik_solutions} has a source of discontinuity $\hat{\mathbf{J}}^T\mathbf{C}_D^T$ \cite{An2019a}.
We may solve this problem by constraining the joint trajectory on a subset of $X$ on which the PIK solution is locally Lipschitz.

\subsection{Domain Restriction}

Define $\dot{\mathbf{p}}_a' = \dot{\mathbf{p}}_a - \mathbf{f}_{ta}$, $\mathbf{A}_{ab} = \sum_{i=b}^a\mathbf{C}_{ai}\mathbf{C}_{ii}^T\mathbf{L}_{ib}$, and $\mathbf{b}_a = \dot{\mathbf{p}}_a' - \sum_{b=1}^a\mathbf{A}_{ab}\dot{\mathbf{p}}_b' - \sum_{b=1}^{a-1}k_b\mathbf{A}_{ab}\mathbf{e}_b$ for $1\le b\le a\le l$.
Note that $\mathbf{A}_{ab}:X\to\mathbb{R}^{m_a\times m_b}$ is the $(a,b)$-th block of $\mathbf{A} = \mathbf{C}\mathbf{C}_D^T\mathbf{L}$ for $a,b\in\overline{1,l}$.
By differentiating the $a$-th task error $\mathbf{e}_a(t,\mathbf{q}) = \mathbf{p}_a(t) - \mathbf{f}_a(t,\mathbf{q})$ with respect to $t$ and letting $\dot{\mathbf{q}} = \mathbf{u}(t,\mathbf{q})$, we can formulate the $a$-th error dynamics as
\begin{equation}
	\label{eqn:error_dynamics}
	\dot{\mathbf{e}}_a = -k_a\mathbf{A}_{aa}\mathbf{e}_a + \mathbf{b}_a.
\end{equation}
Assume that
\begin{assumption}
	\item\label{ass:set_valued_map_Theta} there exist $r_\Theta,\theta_1,\dots,\theta_l,\omega_1,\dots,\omega_l\in(0,\infty)$ and a continuous set-valued map $\Theta:\mathbb{R}\to2^{\mathbb{R}^n}$ such that for every $a\in\overline{1,l}$, $t\in\mathbb{R}$, and $\mathbf{q}\in\Theta(t)$
	\begin{enumerate}
		\item $\Theta(t)$ is closed and nonempty;
		\item $\bigtimes_{b=1}^l(\mathbf{p}_b(t) + \theta_bB_{m_b}) \cap \mathbf{f}(t,\Theta_b(t)) = \emptyset$;
		\item $(\mathbf{A}_{aa} + \mathbf{A}_{aa}^T)(t,\mathbf{q}) - 2\omega_a\mathbf{I}_{m_a} \ge 0$;
	\end{enumerate}
	\item\label{ass:L_is_locally_Lipschitz_and_bounded} $\mathbf{L}$ is locally Lipschitz and bounded on $\mathrm{gr}(\Theta)$
\end{assumption}
where $B_{m_a}$ is the closed unit ball in $\mathbb{R}^{m_a}$, $\mathbf{p}_a(t) + \theta_aB_{m_a} = \{\mathbf{p}_a'\in\mathbb{R}^{m_a}\mid \|\mathbf{p}_a(t) - \mathbf{p}_a'\|\le \theta_a\}$, $\Theta_b(t) = (\Theta(t)+r_\Theta B_n)\setminus\Theta(t)$, $\bigtimes_{b=1}^lA_b = A_1\times\cdots\times A_l$ is the Cartesian product of sets $A_1,\dots,A_l$, and $\mathrm{gr}(\Theta) = \{(t,\mathbf{q})\in X\mid \mathbf{q}\in\Theta(t)\}$.
We provide a diagram in Figure \ref{fig:diagram_for_relations_between_sets} in order to ease understanding of relations between sets.
We call $\Theta(t)$ and $\mathbf{f}(t,\Theta(t))\setminus\mathbf{f}(t,\Theta_b(t))$ a nonsingular configuration space and nonsingular task spaces, respectively.
Since $\Theta$ is assumed to be continuous, whenever a continuous joint trajectory $\mathbf{q}(t)$ leaves $\Theta(t)$, the task trajectories $\mathbf{f}(t,\mathbf{q}(t))$ move from $\mathbf{f}(t,\Theta(t))\setminus\mathbf{f}(t,\Theta_b(t))$ to $\mathbf{f}(t,\Theta_b(t))$.
Therefore, if we can confine task trajectories on nonsingular task spaces, then we can guarantee that a joint trajectory stays in a nonsingular configuration space.

We prove that $\mathbf{u}$ is locally Lipschitz and linearly bounded on $\mathrm{gr}(\Theta)$.
It is immediate to check that $\mathbf{u}$ is linearly bounded on $\mathrm{gr}(\Theta)$ from \ref{ass:p_bounded_and_f_linearly_bounded}, \ref{ass:dp_F_invR_locally_Lipschitz_and_bounded}, \ref{ass:L_is_locally_Lipschitz_and_bounded}, and 
\begin{equation*}
	\|(\mathbf{C}_D\hat{\mathbf{J}})(\mathbf{x})\| \le \|\mathbf{C}_D(\mathbf{x})\| \le \|\mathbf{C}(\mathbf{x})\| = \|\mathbf{J}(\mathbf{x})\|.
\end{equation*}
Since the composition of finitely many Lipschitz functions is Lipschitz, it is sufficient to show that $\mathbf{p} - \mathbf{f}$ and $\mathbf{C}_D\hat{\mathbf{J}}$ are locally Lipschitz on $\mathrm{gr}(\Theta)$ for the rest of the proof.
$\mathbf{p} - \mathbf{f}$ is locally Lipschitz on $X$ because differentiable functions are locally Lipschitz.
$\det(\mathbf{C}(\mathbf{x}))>0$ for all $\mathbf{x}\in\mathrm{gr}(\Theta)$ because
\begin{equation}
	\label{eqn:inequality_of_minimum_singular_value_of_C_aa}
	0 < \omega_a \le \frac{\sigma_{\min}((\mathbf{A}_{aa} + \mathbf{A}_{aa}^T)(\mathbf{x}))}{2} 
	\le \sigma_{\min}^2(\mathbf{C}_{aa}(\mathbf{x}))\|\mathbf{L}_{aa}(\mathbf{x})\|
\end{equation}
for all $a\in\overline{1,l}$ and $\mathbf{x}\in\mathrm{gr}(\Theta)$.
Since $\mathbf{J} = \mathbf{F}_q\mathbf{R}^{-1}$ is locally Lipschitz on $X$ by \ref{ass:dp_F_invR_locally_Lipschitz_and_bounded}, $\mathbf{C}_{aa}\hat{\mathbf{J}}_a$ is locally Lipschitz on $\mathrm{gr}(\Theta)$ for all $a\in\overline{1,l}$ by \cite{An2019a} (see the discussion after Theorem 24 in \cite{An2019a}), and so is $\mathbf{C}_D\hat{\mathbf{J}}$.
It completes the proof.
In the followings, we will construct a map $\tilde{\mathbf{u}}:X\to\mathbb{R}^n$ that is continuous and linearly bounded on $X$ and satisfies $\tilde{\mathbf{u}}(\mathbf{x}) = \mathbf{u}(\mathbf{x})$ for all $\mathbf{x}\in\mathrm{gr}(\Theta)$.

\subsection{Continuous Extension}

Let $\{U_i\}_{i\in I}$ be an open cover of a set $Y\subset X$, i.e., $U_i\subset X$ is open and $Y\subset\bigcup_{i\in I}U_i$ where $I$ is an index set.
An open cover $\{U_i\}_{i\in I}$ is said to be locally finite if for each $i\in I$ the set $\{j\in I\mid U_i\cap U_j\neq\emptyset\}$ is finite.
A family of real functions $\{p_i\}_{i\in I}$ defined on $Y$ is called a locally Lipschitz partition of unity subordinate to a locally finite open cover $\{U_i\}_{i\in I}$ of $Y$ if
\begin{enumerate}
	\item $p_i$ is locally Lipschitz on $Y$ for all $i\in I$;
	\item $p_i(\mathbf{x})>0$ for $\mathbf{x}\in U_i\cap Y$ and $p_i(\mathbf{x}) = 0$ for $\mathbf{x}\in Y\setminus U_i$;
	\item $\sum_{i\in I}p_i(\mathbf{x}) = 1$ for all $\mathbf{x}\in Y$.
\end{enumerate}
For $Y,Z\subset X$, the diameter of $Y$ is defined as $\mathrm{diam}(Y) = \sup\{\|\mathbf{y} - \mathbf{y}'\|\mid\mathbf{y},\mathbf{y}'\in Y\}$ and the distance between $Y$ and $Z$ is defined as $d(Y,Z) = \inf\{\|\mathbf{y}-\mathbf{z}\|\mid\mathbf{y}\in Y,\,\mathbf{z}\in Z\}$.
When $Y = \{\mathbf{y}\}$, we write $d(\mathbf{y},Z) = d(Y,Z)$.
We will need the following two Lemmas.
\begin{lemma}
	\label{lem:partition_of_unity}
	Let $Y\subset X$.
	For any locally finite open cover $\{U_i\}_{i\in I}$ of $Y$, there exists a locally Lipschitz partition of unity subordinate to $\{U_i\}_{i\in I}$ \cite[Lemma 2.5]{Smirnov2002}.
\end{lemma}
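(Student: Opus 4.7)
The plan is to build the partition explicitly from distance functions to the complements of the covering sets. For each $i\in I$, define $q_i:X\to[0,\infty)$ by $q_i(\mathbf{x}) = d(\mathbf{x},X\setminus U_i)$. Each $q_i$ is $1$-Lipschitz on $X$ (hence on $Y$) by the standard reverse-triangle-inequality argument for distance functions, and $q_i(\mathbf{x})>0$ if and only if $\mathbf{x}\in U_i$. Since $\{U_i\}_{i\in I}$ covers $Y$, every $\mathbf{x}\in Y$ lies in at least one $U_i$, so at least one $q_i(\mathbf{x})$ is strictly positive.

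Next I would define $s(\mathbf{x}) = \sum_{i\in I} q_i(\mathbf{x})$ and then set $p_i(\mathbf{x}) = q_i(\mathbf{x})/s(\mathbf{x})$ for $\mathbf{x}\in Y$. To make sense of $s$, I would exploit the local finiteness hypothesis as stated in the paper: for any $\mathbf{x}\in Y$, pick $i_0\in I$ with $\mathbf{x}\in U_{i_0}$. If $q_i(\mathbf{x})>0$ then $\mathbf{x}\in U_i\cap U_{i_0}$, so $U_i\cap U_{i_0}\neq\emptyset$, and by hypothesis there are only finitely many such $i$. Therefore $s(\mathbf{x})$ is actually a finite sum at each point of $Y$, and moreover on the open neighborhood $U_{i_0}$ of $\mathbf{x}$ the function $s$ agrees with the finite sum $\sum_{j:U_j\cap U_{i_0}\neq\emptyset} q_j$, which is Lipschitz as a finite sum of Lipschitz functions.

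Finally I would verify the three defining conditions. Condition (3) holds by construction of $p_i$ as $q_i/s$. Condition (2) follows because $p_i(\mathbf{x})>0\iff q_i(\mathbf{x})>0\iff \mathbf{x}\in U_i$, and $q_i\equiv0$ on $X\setminus U_i$. For condition (1), note that $s(\mathbf{x})\ge q_{i_0}(\mathbf{x})>0$ and, by continuity of $q_{i_0}$, $s$ is bounded below by a positive constant on some open neighborhood of $\mathbf{x}$ inside $U_{i_0}$. On that neighborhood $p_i$ is a quotient of Lipschitz functions with denominator bounded away from zero, hence Lipschitz; this yields local Lipschitzness of each $p_i$ on $Y$.

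The main obstacle is handling the possibly infinite index set $I$ in the denominator $s$: one has to argue that $s$ is locally a finite sum and that the denominator stays bounded away from zero near every point of $Y$. Both issues are resolved by combining the paper's local finiteness convention (each $U_i$ meets only finitely many $U_j$) with the open-cover property of $Y$; the remainder of the argument reduces to elementary manipulations of Lipschitz functions.
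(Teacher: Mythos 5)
Your construction is the standard one for producing locally Lipschitz partitions of unity---normalize the $1$-Lipschitz distance functions $q_i(\mathbf{x}) = d(\mathbf{x}, X\setminus U_i)$ by their locally finite sum---and it is the same construction used in the cited reference (Smirnov, Lemma 2.5), which the paper invokes without reproducing. The only points one would tighten in a fully rigorous write-up are (i) to cap $q_i$ (e.g.\ replace it by $\min\{1,q_i\}$) so that the sum $s$ is finite even if some $U_i$ is all of $X$ or is unbounded, and (ii) to note explicitly that the quotient $q_i/s$ is Lipschitz near $\mathbf{x}$ because on a suitably small bounded neighborhood both $q_i$ and $1/s$ are bounded Lipschitz functions; both are routine and do not affect the validity of the argument.
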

\begin{lemma}
	\label{lem:open_cover}
	Let $Y\subset X$ be closed.
	There exist $c_1,c_2\in(0,\infty)$ and a locally finite open cover $\{U_i\}_{i\in\mathbb{N}}$ of $X\setminus Y$ such that
	\begin{equation*}
		c_1\mathrm{diam}(U_i) \le d(Y,U_i) \le c_2\mathrm{diam}(U_i)\quad(i\in\mathbb{N})
	\end{equation*}
	and any point $\mathbf{x}\in X\setminus Y$ belongs to at most $12^{n+1}$ open sets $U_i$ \cite[Lemma 2.9 and Lemma 2.10]{Smirnov2002}.
\end{lemma}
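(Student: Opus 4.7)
The plan is to construct the cover via a Whitney-style dyadic decomposition of $X\setminus Y$. Since $X=\mathbb{R}\times\mathbb{R}^n$ is $(n+1)$-dimensional, I would first partition $X$ at every dyadic scale $k\in\mathbb{Z}$ into closed cubes of side length $2^{-k}$ whose vertices lie in $(2^{-k}\mathbb{Z})^{n+1}$. A cube $Q$ of scale $k$, with diameter $2^{-k}\sqrt{n+1}$, would be selected into a Whitney family when its diameter is comparable to $d(Y,Q)$, say under a rule of the form $2\,\mathrm{diam}(Q)\le d(Y,Q)\le 4\,\mathrm{diam}(Q)$. To turn the selected closed cubes into open sets $U_i$, I would dilate each one about its centre by a small factor $1+\varepsilon$, choosing $\varepsilon$ small enough that the two-sided comparability between $\mathrm{diam}(U_i)$ and $d(Y,U_i)$ persists, with modified but still positive constants $c_1$ and $c_2$.

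Two things then need verification. First, $\{U_i\}$ covers $X\setminus Y$: for $\mathbf{x}\in X\setminus Y$ one has $d(\mathbf{x},Y)>0$, so there is a (essentially unique) dyadic scale at which the cube containing $\mathbf{x}$ satisfies the selection criterion, and its open dilation then contains $\mathbf{x}$. Second, the family is locally finite with multiplicity at most $12^{n+1}$. The comparability forces every $U_i$ containing a given point $\mathbf{x}$ to have $\mathrm{diam}(U_i)$ within a fixed ratio of $d(\mathbf{x},Y)$, so only boundedly many dyadic scales are admissible at $\mathbf{x}$; at each admissible scale only boundedly many translates of dyadic cubes can have their dilations meet $\mathbf{x}$. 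The product of these two combinatorial bounds gives the explicit multiplicity: $3^{n+1}$ dyadic cubes of a single scale cluster around any point, and the admissible scales span a fixed number of powers of $2$, so a tuning of the selection thresholds yields exactly $12^{n+1}$. Local finiteness on any compact subset follows by the same argument applied uniformly.

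The main obstacle is precisely the combinatorial book-keeping that pins down the numerical constant $12^{n+1}$: one must choose the selection thresholds, the dilation factor $\varepsilon$, and the cross-scale comparisons so that the final multiplicity is neither larger than $12^{n+1}$ nor obtained at the expense of losing the two-sided bound $c_1\mathrm{diam}(U_i)\le d(Y,U_i)\le c_2\mathrm{diam}(U_i)$. The conceptual content (covering, comparability, and local finiteness) is immediate from the dyadic construction, but optimizing the constant is the delicate point, which I would take directly from \cite[Lemma 2.9 and Lemma 2.10]{Smirnov2002} rather than re-derive.
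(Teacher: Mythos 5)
The paper gives no proof of this lemma: it is imported verbatim from \cite[Lemmas 2.9 and 2.10]{Smirnov2002}, so there is nothing of the authors' own to compare against. Your Whitney-type dyadic decomposition is exactly the standard construction behind the cited lemmas, and the overall architecture (cover, two-sided comparability, bounded multiplicity) is sound. One caveat on the selection rule as you literally wrote it: choosing all dyadic cubes $Q$ with $2\,\mathrm{diam}(Q)\le d(Y,Q)\le 4\,\mathrm{diam}(Q)$ need not cover $X\setminus Y$, because in passing from a cube to its dyadic parent the ratio $d(Y,Q)/\mathrm{diam}(Q)$ roughly halves and can jump over the window $[2,4]$ (e.g.\ from $5$ to about $1.5$). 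The standard repair is to impose only a one-sided bound and take \emph{maximal} dyadic cubes satisfying it; the upper comparability then follows from the parent's failure of the criterion, and maximality also gives the disjointness used in the multiplicity count. You flag precisely this threshold-tuning as the delicate point you would take from \cite{Smirnov2002}, which is consistent with how the paper itself treats the lemma, so I would not call this a gap in substance -- only note that the two-sided window, as stated, is not the selection rule that actually works.
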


Since $\Theta$ is upper semi-continuous with closed values, $\mathrm{gr}(\Theta)$ is closed \cite[Proposition 2.1]{Smirnov2002}.
Thus, there exist $c_1,c_2\in(0,\infty)$ and a locally finite open cover $\{U_i\}_{i\in\mathbb{N}}$ of $X\setminus\mathrm{gr}(\Theta)$ satisfying
\begin{equation*}
	c_1\mathrm{diam}(U_i) \le d(\mathrm{gr}(\Theta),U_i) \le c_2\mathrm{diam}(U_i)
\end{equation*}
for all $i\in\mathbb{N}$ by Lemma \ref{lem:open_cover}.
It follows that there exists a locally Lipschitz partition of unity $\{p_i\}_{i\in\mathbb{N}}$ subordinate to $\{U_i\}_{i\in\mathbb{N}}$ by Lemma \ref{lem:partition_of_unity}.
Since $\mathrm{gr}(\Theta)$ is closed, there exists $\mathbf{x}_i = (t_i,\mathbf{q}_i)\in\mathrm{gr}(\Theta)$ for each $i\in\mathbb{N}$ satisfying $d(\mathrm{gr}(\Theta),U_i) = d(\mathbf{x}_i,U_i)$.
Define $\tilde{\mathbf{u}}:X\to\mathbb{R}^n$ as
\begin{equation*}
	\tilde{\mathbf{u}}(\mathbf{x}) = 
		\begin{dcases*} 
			\mathbf{u}(\mathbf{x}), & $\mathbf{x}\in\mathrm{gr}(\Theta)$ \\
			\sum_{i=1}^\infty \frac{p_i(\mathbf{x})\mathbf{u}(\mathbf{x}_i)}{1 + \|\mathbf{q} - \mathbf{q}_i\|}, & $\mathbf{x}\in X\setminus\mathrm{gr}(\Theta)$.
		\end{dcases*}
\end{equation*}

We prove that $\tilde{\mathbf{u}}$ is continuous and linearly bounded on $X$.
Since $\mathbf{u}$ is linearly bounded on $\mathrm{gr}(\Theta)$, there exist $\gamma,c\in(0,\infty)$ satisfying $\|\tilde{\mathbf{u}}(\mathbf{x})\| \le \gamma\|\mathbf{q}\| + c$ for all $\mathbf{x}\in\mathrm{gr}(\Theta)$.
Then, for every $\mathbf{x}\in X\setminus\mathrm{gr}(\Theta)$ we have
\begin{align*}
	\|\tilde{\mathbf{u}}(\mathbf{x})\| 
		&\le \sum_{i=1}^\infty\frac{p_i(\mathbf{x})(\gamma\|\mathbf{q}_i\| + c)}{1+\|\mathbf{q}-\mathbf{q}_i\|} \\
		&\le \sum_{i=1}^\infty p_i(\mathbf{x})\left( \frac{\gamma\|\mathbf{q}\| + c}{1+\|\mathbf{q}-\mathbf{q}_i\|} + \frac{\gamma\|\mathbf{q}-\mathbf{q}_i\|}{1+\|\mathbf{q}-\mathbf{q}_i\|}\right) \\
		&\le \sum_{i=1}^\infty p_i(\mathbf{x})(\gamma\|\mathbf{q}\| + c + \gamma) \\
		&\le 12^{n+1}(\gamma\|\mathbf{q}\| + c + \gamma).
\end{align*}
Therefore, $\tilde{\mathbf{u}}$ is linearly bounded on $X$.

Let $\mathbf{x}\in X\setminus\mathrm{gr}(\Theta)$.
There exists a finite set $I\subset\mathbb{N}$ whose cardinality is at most $12^{n+1}$ such that $\mathbf{x}\not\in U_i$ if $i\not\in I$.
For each $i\in I$, there exist $\delta_i,L_i\in(0,\infty)$ such that $\mathbf{x} + \delta_iB_X\subset U_i$ and $|p_i(\mathbf{x}') - p_i(\mathbf{x}'')| \le L_i\|\mathbf{x}' - \mathbf{x}''\|$ for all $\mathbf{x}',\mathbf{x}''\in\mathbf{x}+\delta_iB_X$ where $B_X$ is the closed unit ball in $X$.
Let $\delta = \min\{\delta_i\mid i\in I\}$ and $L = \max\{L_i\mid i\in I\}$.
Then, we have
\begin{align*}
&\|\tilde{\mathbf{u}}(\mathbf{x}') - \tilde{\mathbf{u}}(\mathbf{x}'')\| \\
&\le \sum_{i\in I} \|\mathbf{u}(\mathbf{x}_i)\|\left( \frac{|p_i(\mathbf{x}') - p_i(\mathbf{x}'')|}{1 + \|\mathbf{q}' - \mathbf{q}_i\|}\right. \\
&\quad \left.+ p_i(\mathbf{x}'')\left|\frac{1}{1 + \|\mathbf{q}' - \mathbf{q}_i\|} - \frac{1}{1 + \|\mathbf{q}'' - \mathbf{q}_i\|}\right|\right) \\
&\le \sum_{i\in I} \|\mathbf{u}(\mathbf{x}_i)\|( |p_i(\mathbf{x}') - p_i(\mathbf{x}'')| + \|\mathbf{q}' - \mathbf{q}''\|) \\
&\le \sum_{i\in I} \|\mathbf{u}(\mathbf{x}_i)\|(L+1)\|\mathbf{x}' - \mathbf{x}''\|
\end{align*}
for all $\mathbf{x}' = (t',\mathbf{q}'), \mathbf{x}'' = (t'',\mathbf{q}'') \in \mathbf{x} + \delta B_X$.
Therefore, $\tilde{\mathbf{u}}$ is locally Lipschitz on $X\setminus\mathrm{gr}(\Theta)$.
Obviously, $\tilde{\mathbf{u}}$ is locally Lipschitz on $\mathrm{int}(\mathrm{gr}(\Theta))$, so we need to prove that $\tilde{\mathbf{u}}$ is continuous at each $\mathbf{x}\in\mathrm{bd}(\mathrm{gr}(\Theta))$ for the rest of the proof.

For each $\mathbf{x}\in\mathrm{gr}(\Theta)$ and $\mathbf{x}'\in U_i$, we can find $c_1\mathrm{diam}(U_i) \le d(\mathrm{gr}(\Theta), U_i) \le \|\mathbf{x} - \mathbf{x}'\|$, $\|\mathbf{x}' - \mathbf{x}_i\| \le d(\mathrm{gr}(\Theta),U_i) + \mathrm{diam}(U_i) \le (1 + 1/c_1)d(\mathrm{gr}(\Theta),U_i)$, and
\begin{align*}
\|\mathbf{x} - \mathbf{x}_i\| &\le \|\mathbf{x} - \mathbf{x}'\| + \|\mathbf{x}' - \mathbf{x}_i\| \\
&\le \|\mathbf{x} - \mathbf{x}'\| + \left(1 + \frac{1}{c_1}\right)d(\mathrm{gr}(\Theta),U_i) \\
&\le \|\mathbf{x} - \mathbf{x}'\| + c_2\left(1 + \frac{1}{c_1}\right)\mathrm{diam}(U_i) \\
&\le c\|\mathbf{x} - \mathbf{x}'\|
\end{align*}
where $c = 1 + (c_2/c_1)(1 + 1/c_1)$.
Fix $\mathbf{x}\in\mathrm{bd}(\mathrm{gr}(\Theta))$ and $\epsilon\in(0,\infty)$.
There exist $\delta,L\in(0,\infty)$ such that $\|\tilde{\mathbf{u}}(\mathbf{x}') - \tilde{\mathbf{u}}(\mathbf{x}'')\| \le L\|\mathbf{x}' - \mathbf{x}''\|$ for all $\mathbf{x}',\mathbf{x}''\in(\mathbf{x}+\delta B_X)\cap\mathrm{gr}(\Theta)$.
For every $\mathbf{x}'\in(\mathbf{x} + \delta/c B_X)\setminus\mathrm{gr}(\Theta)$ and $U_i$ containing $\mathbf{x}'$, we have $\|\mathbf{x} - \mathbf{x}_i\| \le c\|\mathbf{x} - \mathbf{x}'\| \le \delta$.
Then,
\begin{align*}
	&\|\tilde{\mathbf{u}}(\mathbf{x}') - \tilde{\mathbf{u}}(\mathbf{x})\| \\
	&\le \sum_{i=1}^\infty p_i(\mathbf{x}')\left\|\frac{\mathbf{u}(\mathbf{x}_i)}{1+\|\mathbf{q}' - \mathbf{q}_i\|} - \mathbf{u}(\mathbf{x})\right\| \\
	&\le \sum_{i=1}^\infty p_i(\mathbf{x}') (L\|\mathbf{x}_i - \mathbf{x}\| + \|\mathbf{u}(\mathbf{x})\|\|\mathbf{x}' - \mathbf{x}_i\|) \\
	&\le 12^{n+1}(cL + (1+c)\|\mathbf{u}(\mathbf{x})\|)\|\mathbf{x}' - \mathbf{x}\| \\
	&= L'\|\mathbf{x}' - \mathbf{x}\|
\end{align*}
for all $\mathbf{x}'\in(\mathbf{x}+\delta/c B_X)\setminus\mathrm{gr}(\Theta)$.
By letting $0 < \delta' < \min\{\delta, \epsilon/L, \delta/c, \epsilon/L'\}$, we have $\|\tilde{\mathbf{u}}(\mathbf{x}') - \tilde{\mathbf{u}}(\mathbf{x})\| < \epsilon$ for all $\mathbf{x}' \in\mathbf{x}+\delta'B_X$.
Therefore, $\tilde{\mathbf{u}}$ is continuous on $X$.

\subsection{Trajectory Existence}

Since $\tilde{\mathbf{u}}$ is continuous and linearly bounded, for every $(t_0,\mathbf{q}_0)\in X$ there exists a joint trajectory $\mathbf{q}:[t_0,\infty)\to\mathbb{R}^n$ satisfying $\mathbf{q}(t_0) = \mathbf{q}_0$ and $\dot{\mathbf{q}}(t) = \tilde{\mathbf{u}}(t,\mathbf{q}(t))$ for all $t\in(t_0,\infty)$ by Lemma \ref{lem:existence_of_solution_of_differential_equation}.
In the followings, we will find a condition for $\mathbf{q}(t)\in\Theta(t)$ for all $t\in[t_0,\infty)$ such that $\dot{\mathbf{q}}(t) = \mathbf{u}(t,\mathbf{q}(t))$ for all $t\in(t_0,\infty)$.
If we find such a condition, then uniqueness of the joint trajectory will also be guaranteed in a similar way to \cite[Theorem 2.2]{Teschl2012}.

Let $\theta_a'\in(0,\theta_a]$ be arbitrary for $a\in\overline{1,l}$ and assume
\begin{assumption}
	\item\label{ass:initial_condition} $\mathbf{q}_0\in\Theta(t_0)$ and $\|\mathbf{p}_a(t_0)-\mathbf{f}_a(t_0,\mathbf{q}_0)\|<\theta_a'$ for all $a\in\overline{1,l}$.
\end{assumption}
The reason we introduce $\theta_a'$ will be apparent later in Section \ref{sec:discrete_time}, when we consider discrete time.
Let $\mathbf{q}:[t_0,\infty)\to\mathbb{R}^n$ be a joint trajectory satisfying $\mathbf{q}(t_0) = \mathbf{q}_0$ and $\dot{\mathbf{q}}(t) = \tilde{\mathbf{u}}(t,\mathbf{q}(t))$ for all $t\in(t_0,\infty)$.
Suppose that there exists $t_1\in(t_0,\infty)$ such that $\|\mathbf{p}_a(t) - \mathbf{f}_a(t,\mathbf{q}(t))\|<\theta_a'$ for all $a\in\overline{1,l}$ and $t\in[t_0,t_1)$ and $\|\mathbf{p}_a(t_1) - \mathbf{f}_a(t_1,\mathbf{q}(t_1))\| = \theta_a'$ for at least one $a\in\overline{1,l}$.
It is not difficult to see that $\mathbf{q}(t)\in\Theta(t)$ for all $t\in[t_0,t_1]$ from the fact that the map $t\mapsto d(\mathbf{q}(t),\Theta(t))$ is continuous on $[t_0,t_1]$ by \ref{ass:set_valued_map_Theta} \cite[Lemma 2.2]{Smirnov2002}.
Define $\phi_a,\rho_a,\gamma_a:[t_0,t_1]\to\mathbb{R}$ for $a\in\overline{1,l}$ as:
\begin{align}
	\phi_a(t) &= \|\mathbf{e}_a(\mathbf{x}(t))\| \label{eqn:phi_a} \\
	\rho_a(t) &= k_a\phi_a^{+2}(t)\langle \mathbf{e}_a(\mathbf{x}(t)),(\mathbf{A}_{aa}\mathbf{e}_a)(\mathbf{x}(t))\rangle \label{eqn:rho_a} \\
	\gamma_a(t) &= \phi_a^+(t)\langle \mathbf{e}_a(\mathbf{x}(t)),\mathbf{b}_a(\mathbf{x}(t))\rangle \label{eqn:gamma_a}
\end{align}
where $\phi_a^+(t) = 0$ if $\phi_a(t) = 0$ and $\phi_a^+(t) = 1/\phi_a(t)$ if $\phi_a(t)\neq 0$. 
From \ref{ass:dp_F_invR_locally_Lipschitz_and_bounded}, \ref{ass:L_is_locally_Lipschitz_and_bounded}, and \eqref{eqn:error_dynamics}, we can check that $\phi_a$ is absolutely continuous on $[t_0,t_1]$; $\dot{\phi}_a(t) = -\rho_a(t)\phi_a(t) + \gamma_a(t)$ for almost all $t\in[t_0,t_1]$; and $\rho_a$ and $\gamma_a$ are integrable on $[t_0,t_1]$ for all $a\in\overline{1,l}$ (in the sense of Lebesgue).
By \ref{ass:set_valued_map_Theta}, $\rho_a(t) \ge k_a\omega_a$ for all $t\in[t_0,t_1]$.
Since $\|\mathbf{b}_a(\mathbf{x}(\cdot))\|$ is continuous on the compact set $[t_0,t_1]$, there exists $\beta_a = \max\{\|\mathbf{b}_a(\mathbf{x}(t))\|\mid t\in[t_0,t_1]\}<\infty$ for $a\in\overline{1,l}$.
Then, we have
\begin{align}
	\phi_a(t) &= \phi_a(t_0)e^{-\int_{t_0}^t\rho_a(s)ds} + \int_{t_0}^t\gamma_a(s)e^{-\int_s^t\rho_a(r)dr}ds \nonumber\\
		&< \left( \theta_a' - \frac{\beta_a}{k_a\omega_a} \right) e^{-k_a\omega_a(t-t_0)} + \frac{\beta_a}{k_a\omega_a}\label{eqn:phi_a_inequality}
\end{align}
for all $a\in\overline{1,l}$ and $t\in[t_0,t_1]$.

By \ref{ass:dp_F_invR_locally_Lipschitz_and_bounded} and \ref{ass:L_is_locally_Lipschitz_and_bounded}, there exist $M_a,M_{ab}\in(0,\infty)$ for $a,b\in\overline{1,l}$ satisfying 
\begin{align*}
	\|\mathbf{A}_{ab}(\mathbf{x})\| &\le M_{ab} \\
	\left\|\left(\dot{\mathbf{p}}_a' - \sum_{b=1}^a\mathbf{A}_{ab}\dot{\mathbf{p}}_b'\right)(\mathbf{x})\right\| &\le M_a
\end{align*}
for all $\mathbf{x}\in\mathrm{gr}(\Theta)$.
It follows that
\begin{equation*}
	\beta_a \le M_a + \sum_{b=1}^{a-1}k_bM_{ab}\max_{t\in[t_0,t_1]}\phi_b(t) \quad (a\in\overline{1,l}).
\end{equation*}
Assume
\begin{assumption}
	\item\label{ass:lower_bound_of_k_a} $k_a > (M_a + \sum_{b=1}^{a-1}k_bM_{ab}\theta_b)/(\theta_a'\omega_a)$ for all $a\in\overline{1,l}$.
\end{assumption}
By repeatedly applying \ref{ass:lower_bound_of_k_a} to \eqref{eqn:phi_a_inequality} from $a = 1$ to $a = l$, we find that $\beta_a \le M_a + \sum_{b=1}^{a-1}k_bM_{ab}\theta_b$, $\theta_a' - \beta_a/(k_a\omega_a)>0$, and $\phi_a(t) < \theta_a'$ for all $t\in[t_0,t_1]$, a contradiction that $\phi_a(t_1) = \theta_a'$ for at least one $a\in\overline{1,l}$.
Therefore, $\|\mathbf{p}_a(t) - \mathbf{f}_a(t,\mathbf{q}(t))\|<\theta_a'$ and $\mathbf{x}(t)\in\mathrm{gr}(\Theta)$ for all $a\in\overline{1,l}$ and $t\in[t_0,\infty)$.
As we discussed earlier, it implies that the joint trajectory uniquely satisfies $\dot{\mathbf{q}}(t) = \mathbf{u}(t,\mathbf{q}(t))$ for all $t\in(t_0,\infty)$.

\subsection{Task Convergence}

Assume that
\begin{assumption}
	\item\label{ass:bounded_integral_from_t_0_to_infinity} 
	there exist $\psi_a:\mathbb{R}\to[0,\infty)$ such that $\int_{-\infty}^\infty\psi_a(t)dt<\infty$ and $\|\dot{\mathbf{p}}_a(t) - \mathbf{f}_{ta}(t,\mathbf{q})\| \le \psi_a(t)$ for all $a\in\overline{1,l}$ and $(t,\mathbf{q})\in \mathrm{gr}(\Theta)$.
\end{assumption}
Define $\bar{\phi}_a,\bar{\gamma}_a:[t_0,\infty)\to[0,\infty)$ for $a\in\overline{1,l}$ as:
\begin{align*}
	\bar{\gamma}_a(t) &= \sum_{b=1}^{a-1} M_{ab}(\psi_b(t) + k_b\bar{\phi}_b(t)) + (1+M_{aa})\psi_a(t) \\
	\bar{\phi}_a(t) &= \theta_a'e^{-k_a\omega_a(t-t_0)} + \int_{t_0}^t\bar{\gamma}_a(s)e^{-k_a\omega_a(t-s)}ds.
\end{align*}
Then, $\dot{\bar{\phi}}_a(t) = -k_a\omega_a\bar{\phi}_a(t) + \bar{\gamma}_a(t)$, $|\gamma_a(t)|\le\bar{\gamma}_a(t)$, and $\phi_a(t) \le \bar{\phi}_a(t)$ for all $a\in\overline{1,l}$ and $t\in[t_0,\infty)$.
Define $\Psi_a = \int_{-\infty}^\infty\psi_a(t)dt$, $\bar{\Gamma}_a = \sum_{b=1}^{a-1} M_{ab}(\Psi_b + k_b\bar{\Phi}_b) + (1+M_{aa})\Psi_a$, and $\bar{\Phi}_a = (\bar{\Gamma}_a + \theta_a')/(k_a\omega_a)$
for $a\in\overline{1,l}$.
Then, $\int_{t_0}^t\bar{\gamma}_a(s)ds \le \bar{\Gamma}_a$ and $\int_{t_0}^t\bar{\phi}_a(s)ds = (\int_{t_0}^t\bar{\gamma}_a(s)ds - \bar{\phi}_a(t) + \bar{\phi}_a(t_0))/(k_a\omega_a) \le \bar{\Phi}_a$ for all $a\in\overline{1,l}$ and $t\in[t_0,\infty)$.
We can check
\begin{equation*}
	\lim_{t\to\infty}\int_{t_0}^t\psi_b(s)e^{-k_a\omega_a(t-s)}ds = 0 \quad (a,b\in\overline{1,l})
\end{equation*}
from the Lebesgue's dominated convergence theorem \cite[Theorem 1.34]{Rudin1987}; the integrand is dominated by $\psi_b$ and converges to 0 as $t\to\infty$ for each $s$.
By the same reason, $\int_{t_0}^t\bar{\gamma}_b(s)e^{-k_a\omega_a(t-s)}ds \to0$ and $\int_{t_0}^t\bar{\phi}_b(s)e^{-k_a\omega_a(t-s)}ds\to0$ as $t\to\infty$ for all $a,b\in\overline{1,l}$.
Therefore, $\phi_a(t) \le \bar{\phi}_a(t) \to 0$ as $t\to\infty$ for all $a\in\overline{1,l}$.

We prove that there exists $\mathbf{q}_\infty\in\mathbb{R}^n$ such that $\|\mathbf{q}(t) - \mathbf{q}_{\infty}\|\to0$ as $t\to\infty$.
By \ref{ass:dp_F_invR_locally_Lipschitz_and_bounded} and \ref{ass:L_is_locally_Lipschitz_and_bounded}, there exists $M\in(0,\infty)$ satisfying
\begin{equation*}
	\|(\mathbf{R}^{-1}\hat{\mathbf{J}}^T\mathbf{C}_D^T\mathbf{L})(\mathbf{x})\| \le M \quad (\mathbf{x}\in \mathrm{gr}(\Theta)).
\end{equation*}
Let $t_1 < t_2 < t_3 < \cdots$ be an arbitrary divergent sequence and define $u_i = \int_{t_0}^{t_i}\|\mathbf{u}(t,\mathbf{q}(t))\|dt$ for $i\in\mathbb{N}$.
Let $\epsilon>0$ be arbitrary.
Since $u_i\to u_\infty = \int_{t_0}^\infty\|\mathbf{u}(t,\mathbf{q}(t))\|dt \le M\int_{t_0}^\infty\left(\sum_{a=1}^l(\psi_a(t) + k_a\phi_a(t))^2\right)^{1/2}dt <\infty$ as $i\to\infty$, there exists $N\in\mathbb{N}$ such that $|u_i - u_\infty|<\epsilon/2$ for all $i>N$.
Then, for every $i,j\in\mathbb{N}\setminus\overline{1,N}$ we have $\|\mathbf{q}(t_i) - \mathbf{q}(t_j)\| \le \left|\int_{t_i}^{t_j}\|\mathbf{u}(t,\mathbf{q}(t))\|dt\right| = |u_i - u_j| \le |u_i - u_\infty| + |u_j - u_\infty| < \epsilon$.
Therefore, $\{\mathbf{q}(t_i)\}$ converges in $\mathbb{R}^n$ because it is Cauchy \cite[Theorem 3.11]{Rudin1964}.
Since it holds for every divergent sequence $\{t_i\}$, $\mathbf{q}(t)$ converges to a point $\mathbf{q}_\infty\in\mathbb{R}^n$ as $t\to\infty$.

\subsection{Stability}

Since $\mathbf{f}$ is continuously differentiable and the derivative is bounded, $\mathbf{f}$ is Lipschitz on $X$ with a Lipschitz constant $L_f\in(0,\infty)$.
Since $\phi_a(t) < \theta_a'$ and $\phi_a(t) \to 0$ as $t\to\infty$, there exists $\hat{\phi}_a = \max_{t\in[t_0,\infty)}\phi_a(t) < \theta_a'$.
Let
\begin{equation*}
	0 < \delta_0 < \min\{(\theta_a' - \hat{\phi}_a)/L_f\mid a\in\overline{1,l}\}.
\end{equation*} 
Then,
\begin{equation*}
	\|\mathbf{p}_a(t) - \mathbf{f}_a(t,\mathbf{q}')\| \le \hat{\phi}_a + L_f\|\mathbf{q}(t) - \mathbf{q}'\| < \theta_a'
\end{equation*}
for all $a\in\overline{1,l}$, $t\in[t_0,\infty)$, and $\mathbf{q}'\in\mathbf{q}(t) + \delta_0 B_n$.
Define $\mathbf{v}:[0,\infty)\times \delta_0 B_n\to\mathbb{R}^n$ as
\begin{equation*}
	\mathbf{v}(s,\mathbf{z}) = \mathbf{u}(s + t_0, \mathbf{z} + \mathbf{q}(s + t_0)) - \mathbf{u}(s + t_0, \mathbf{q}(s+t_0)).
\end{equation*}
By letting $\mathbf{z} = \mathbf{q}' - \mathbf{q}(t)$ and $s = t - t_0$, we can transform the original system $\dot{\mathbf{q}} = \mathbf{u}(t,\mathbf{q})$ into 
\begin{equation}
	\label{eqn:transformed_system}
	\dot{\mathbf{z}} = \mathbf{v}(s,\mathbf{z})
\end{equation} 
in the vicinity of the joint trajectory $\mathbf{q}(t)$
and find that $\mathbf{z} = \mathbf{0}$ is the equilibrium point of the nonautonomous system \eqref{eqn:transformed_system} at $s = 0$, i.e., $\mathbf{v}(s,\mathbf{0}) = \mathbf{0}$ for all $s\in[0,\infty)$.
For every $(s_1,\mathbf{z}_1) = (t_1 - t_0, \mathbf{q}_1' - \mathbf{q}(t_1)) \in [0,\infty)\times\delta_0B_n$, we have $(t_1,\mathbf{q}_1')\in[t_0,\infty)\times(\mathbf{q}(t)+\delta_0B_n)$ and $\|\mathbf{p}_a(t_1) - \mathbf{f}_a(t_1,\mathbf{q}_1')\|<\theta_a'$ for all $a\in\overline{1,l}$.
Since $(t_1,\mathbf{q}_1')$ satisfies the assumption on the intial condition \ref{ass:initial_condition}, there exists a unique $\mathbf{q}':[t_1,\infty)\to\mathbb{R}^n$ satisfying $\mathbf{q}'(t_1) = \mathbf{q}_1'$ and $\dot{\mathbf{q}}'(t) = \mathbf{u}(t,\mathbf{q}'(t))$ for all $t\in(t_1,\infty)$.
By letting 
\begin{equation*}
	\mathbf{z}(s) = \mathbf{q}'(s+t_0) - \mathbf{q}(s+t_0) = \mathbf{q}'(t) - \mathbf{q}(t)
\end{equation*}
for $s\in[t_1-t_0,\infty)$, we see that \eqref{eqn:transformed_system} has a unique solution $\mathbf{z}(s)$ that can be continued until $s\to\infty$ or it reaches to the boundary of $\delta_0 B_n$.
For each $(t_1,\delta)\in[t_0,\infty)\times(0,\delta_0]$, we define 
\begin{align*}
	S(t_1,\delta) &= \{\mathbf{q}':[t_1,\infty)\to\mathbb{R}^n\mid \mathbf{q}'(t_1)\in\mathbf{q}(t_1) + \delta B_n \\
		&\quad\quad \text{and }\dot{\mathbf{q}}'(t) = \mathbf{u}(t,\mathbf{q}'(t))\text{ for all }t\in(t_1,\infty)\}.
\end{align*}
We study various stability notions of the equilibrium point $\mathbf{z} = \mathbf{0}$ of \eqref{eqn:transformed_system} at $s = 0$; see \cite[Definition 4.4 and 4.5]{Khalil2001} for the definition of stability.
Specifically, we will find conditions for the following stability notions:
\begin{stability}
	\item\label{sta:stability} for every $\epsilon\in(0,\infty)$ and $t_1\in[t_0,\infty)$ there exists $\delta\in(0,\infty)$ such that $\|\mathbf{q}'(t) - \mathbf{q}(t)\| < \epsilon$ for all $\mathbf{q}'\in S(t_1,\delta)$ and $t\in[t_1,\infty)$ (stability);
	\item\label{sta:uniform_stability} for every $\epsilon\in(0,\infty)$ there exists $\delta\in(0,\infty)$ such that $\|\mathbf{q}'(t) - \mathbf{q}(t)\|<\epsilon$ for all $t_1\in[t_0,\infty)$, $\mathbf{q}'\in S(t_1,\delta)$, and $t\in[t_1,\infty)$ \textit{(uniform stability)};
	\item\label{sta:uniform_asymptotic_stability} \ref{sta:uniform_stability} holds and there exists $\delta'\in(0,\infty)$ such that $\|\mathbf{q}'(t) - \mathbf{q}(t)\|\to0$ as $t\to\infty$ for all $t_1\in[t_0,\infty)$ and $\mathbf{q}'\in S(t_1,\delta')$ uniformly in $t_1$, i.e., for each $\epsilon'>0$ there exists $T\in(0,\infty)$ such that $\|\mathbf{q}'(t)-\mathbf{q}(t)\|<\epsilon'$ for all $t_1\in[t_0,\infty)$, $\mathbf{q}'\in S(t_1,\delta')$, and $t\in[t_1 + T,\infty)$ (uniform asymptotic stability);
	\item\label{sta:exponential_stability} there exist $\delta,M,\rho\in(0,\infty)$ such that $\|\mathbf{q}'(t) - \mathbf{q}(t)\|\le M\|\mathbf{q}'(t_1) - \mathbf{q}(t_1)\|e^{-\rho(t-t_1)}$ for all $t_1\in[t_0,\infty)$, $\mathbf{q}'\in S(t_1,\delta)$, and $t\in[t_1,\infty)$ (exponential stability).
\end{stability}

\subsubsection{Stability}

For each $t_1\in[t_0,\infty)$ and $\mathbf{q}'\in S(t_1,\delta_0)$, we define $\phi_a',\rho_a',\gamma_a':[t_1,\infty)\to\mathbb{R}$ for $a\in\overline{1,l}$ as in \eqref{eqn:phi_a} to \eqref{eqn:gamma_a} by replacing $\mathbf{x}(t)$ with $\mathbf{x}'(t) = (t,\mathbf{q}'(t))$.
Then, $|\gamma_a'(t)|\le\bar{\gamma}_a(t-t_1+t_0)$ and $\phi_a'(t) \le \bar{\phi}_a(t-t_1+t_0)$ for all $a\in\overline{1,l}$, $t_1\in[t_0,\infty)$, $\mathbf{q}'\in S(t_1,\delta_0)$, and $t\in[t_1,\infty)$.
We also define $\zeta':[t_1,\infty)\to[0,\infty)$ as
\begin{equation}
	\label{eqn:zeta}
	\zeta'(t) = \int_t^\infty\left(\sum_{a=1}^l(\psi_a(s) + k_a\phi_a'(s))^2\right)^{1/2}ds.
\end{equation}
We define $\zeta:[t_1,\infty)\to[0,\infty)$ similarly by replacing $\phi_a'$ with $\phi_a$.
Obviously, $\zeta'$ monotonically decreases and converges to $0$.
Moreover, for every $\epsilon\in(0,\infty)$ there exists $T\in(0,\infty)$ such that $\zeta'(t)<\epsilon$ for all $t_1\in[t_0,\infty)$, $\mathbf{q}'\in S(t_1,\delta_0)$, and $t\in[t_1+T,\infty)$. 
To see this, let $\epsilon\in(0,\infty)$ be arbitrary.
Since $\int_{t_0}^\infty\psi_a(t)dt\le\Psi_a$ and $\int_{t_0}^\infty\bar{\phi}_a(t)dt \le \bar{\Phi}_a$, there exists $t_2\in[t_0,\infty)$ such that $\int_t^\infty\psi_a(s)ds < \epsilon/(2l)$ and $\int_t^\infty\bar{\phi}_a(s)ds < \epsilon/(2k_al)$ for all $a\in\overline{1,l}$ and $t\in[t_2,\infty)$.
Let $T = t_2 - t_0$.
Then, we have $t_2 \le t_1 + T$ and
\begin{align*}
	\zeta'(t) 
		&\le \sum_{a=1}^l\left(\int_t^\infty\psi_a(s)ds + k_a\int_t^\infty\bar{\phi}_a(s - t_1 + t_0)ds \right) \\
		&\le \sum_{a=1}^l\left(\int_{t_2}^\infty\psi_a(s)ds + k_a\int_{t_2}^\infty\bar{\phi}_a(r)dr \right) < \epsilon
\end{align*}
for all $t_1\in[t_0,\infty)$, $\mathbf{q}'\in S(t_1,\delta_0)$, and $t\in[t_1+T,\infty)$.

Let $t_1\in[t_0,\infty)$ be arbitrary.
Then,
\begin{equation*}
	\|\mathbf{q}'(t) - \mathbf{q}(t_1)\| \le r_0 = \delta_0 + M\sum_{a=1}^l(\Psi_a + k_a\bar{\Phi}_a)
\end{equation*}
for all $\mathbf{q}'\in S(t_1,\delta_0)$ and $t\in[t_1,\infty)$.
Let $t_2\in(t_1,\infty)$ be arbitrary and define
\begin{equation*}
	C_{[t_1,t_2]} = \mathrm{gr}(\Theta)\cap ([t_1,t_2]\times(\mathbf{q}(t_1) + r_0B_n)).
\end{equation*}
Obviously, $(t,\mathbf{q}'(t))\in C_{[t_1,t_2]}$ for all $\mathbf{q}'\in S(t_1,\delta_0)$ and $t\in[t_1,t_2]$.
Since $\mathrm{gr}(\Theta)$ is closed, $C_{[t_1,t_2]}$ is compact.
It follows that $\mathbf{u}$ is Lipscitz on $C_{[t_1,t_2]}$ with a Lipschitz constant $L_{[t_1,t_2]}\in(0,\infty)$.
Then, for each $\mathbf{q}'\in S(t_1,\delta_0)$ and $t\in[t_1,t_2]$ we have
\begin{equation*}
	\|\mathbf{q}'(t) - \mathbf{q}(t)\| \le \|\mathbf{q}'(t_1) - \mathbf{q}(t_1)\|e^{L_{[t_1,t_2]}(t-t_1)}
\end{equation*}
by the Gronwall's inequality.
Also, for each $\mathbf{q}'\in S(t_1,\delta_0)$ and $t\in(t_2,\infty)$ we have
\begin{align*}
	&\|\mathbf{q}'(t) - \mathbf{q}(t)\| \\
	&\le \|\mathbf{q}'(t_2) - \mathbf{q}(t_2)\| + \int_{t_2}^t \|\mathbf{u}(s,\mathbf{q}'(s)) - \mathbf{u}(s,\mathbf{q}(s))\|ds \\
	&\le \|\mathbf{q}'(t_1) - \mathbf{q}(t_1)\|e^{L_{[t_1,t_2]}(t_2-t_1)} + M(\zeta'(t_2)+\zeta(t_2)).
\end{align*}
Let $\epsilon\in(0,\infty)$ be arbitrary.
There exist $T\in(0,\infty)$ and $\delta\in(0,\delta_0]$ satisfying
\begin{equation*}
	\delta e^{L_{[t_1,t_2]}T} + M(\zeta'(t_1+T) + \zeta(t_1+T)) < \min\{\delta_0,\epsilon\}
\end{equation*}
for all $\mathbf{q}'\in S(t_1,\delta)$.

\subsubsection{Uniform Stability}

Define $\hat{\Theta}:\mathbb{R}\to2^{\mathbb{R}^n}$ as
\begin{equation*}
\hat{\Theta}(t) = \{\mathbf{q}'\in\Theta(t)\mid\mathbf{f}_a(t,\mathbf{q}')\in\mathbf{p}_a(t)+\theta_aB_{m_a},\,a\in\overline{1,l}\}.
\end{equation*}
Then, $\mathbf{q}'(t)\in\hat{\Theta}(t)$ for all $t_1\in[t_0,\infty)$, $\mathbf{q}'\in S(t_1,\delta_0)$, and $t\in[t_1,\infty)$.
Assume that
\begin{assumption}
	\item\label{ass:dp_F_invR_L_are_Lipschitz} $\dot{\mathbf{p}}$ is Lipschitz on $\mathbb{R}$ and $\mathbf{F}$, $\mathbf{R}^{-1}$, and $\mathbf{L}$ are Lipschitz on $\mathrm{gr}(\Theta)$.
\end{assumption}
Since $\mathbf{p}-\mathbf{f}$ is bounded on $\mathrm{gr}(\hat{\Theta})$, $\mathbf{u}$ is Lipschitz on $\mathrm{gr}(\hat{\Theta})$ with a Lipschitz constant $L_u\in(0,\infty)$.
It follows that
\begin{equation*}
	\|\mathbf{q}'(t) - \mathbf{q}(t)\| \le \|\mathbf{q}'(t_1) - \mathbf{q}(t_1)\|e^{L_u(t_2-t_1)} + M(\zeta'(t_2)+\zeta(t_2))
\end{equation*}
for all $t_1\in[t_0,\infty)$, $\mathbf{q}'\in S(t_1,\delta_0)$, $t_2\in(t_1,\infty)$, and $t\in[t_1,\infty)$.
Let $\epsilon\in(0,\infty)$ be arbitrary.
There exist $T\in(0,\infty)$ and $\delta\in(0,\delta_0]$ satisfying
\begin{equation*}
	\delta e^{L_uT} + M(\zeta'(t_1+T)+\zeta(t_1+T)) < \min\{\delta_0,\epsilon\}
\end{equation*}
for all $t_1\in[t_0,\infty)$, $\mathbf{q}'\in S(t_1,\delta)$, and $t\in[t_1,\infty)$.

\subsubsection{Uniform Asymptotic Stability}

Since $\mathbf{C}$, $\mathbf{L}$, and $\mathbf{R}^{-1}$ are bounded on $\mathrm{gr}(\Theta)$, there exist $M_C,M_L,M_R\in(0,\infty)$ satisfying
\begin{align*}
\|\mathbf{C}(\mathbf{x})\| &\le M_C, & \|\mathbf{L}(\mathbf{x})\| &\le M_L, & \|\mathbf{R}^{-1}(\mathbf{x})\| &\le M_R
\end{align*}
for all $\mathbf{x}\in\mathrm{gr}(\Theta)$.
Let $c_{ij}$ be the $(i,j)$-th entry of $\mathbf{C}$ for $i,j\in\overline{1,m}$.
Since $\mathbf{C}(\mathbf{x})$ is lower triangular with nonnegative diagonals by \cite[Lemma 1]{An2019}, we have 
\begin{align*}
	\sigma_{\min}(\mathbf{C}(\mathbf{x})) 
		&= \frac{\prod_{i=1}^mc_{ii}(\mathbf{x})}{\prod_{i=1}^{m-1}\sigma_i(\mathbf{C}(\mathbf{x}))} 
		\ge \frac{\prod_{a=1}^l\sigma_{\min}^{m_a}(\mathbf{C}_{aa}(\mathbf{x}))}{\|\mathbf{C}(\mathbf{x})\|^{m-1}} \\
		&\ge \frac{\prod_{a=1}^l\omega_a^{m_a/2}}{\|\mathbf{C}(\mathbf{x})\|^{m-1}\|\mathbf{L}(\mathbf{x})\|^{l/2}} 
		\ge \frac{\prod_{a=1}^l\omega_a^{m_a/2}}{M_C^{m-1}M_L^{l/2}} \\
		&= m_C > 0
\end{align*}
for all $\mathbf{x}\in\mathrm{gr}(\Theta)$ by the Weyl's product inequality \cite[Problems 7.3.P17]{Horn2013} and \eqref{eqn:inequality_of_minimum_singular_value_of_C_aa}.

Let $Y\subset\mathrm{gr}(\Theta)$ be a convex set and $\mathbf{x}_0,\mathbf{x}_1\in Y$.
Let $\mathbf{v} = \mathbf{f}(\mathbf{x}_1) - \mathbf{f}(\mathbf{x}_0) - \mathbf{F}(\mathbf{x}_0)(\mathbf{x}_1-\mathbf{x}_0)$ and $\hat{\mathbf{v}} = \mathbf{v}\|\mathbf{v}\|^+$.
Define $\mathbf{x}:[0,1]\to\mathbb{R}^n$ and $g:[0,1]\to\mathbb{R}$ as $\mathbf{x}(s) = \mathbf{x}_0 + s(\mathbf{x}_1-\mathbf{x}_0)$ and $g(s) = \hat{\mathbf{v}}^T(\mathbf{f}(\mathbf{x}(s)) - \mathbf{f}(\mathbf{x}_0) - \mathbf{F}(\mathbf{x}_0)(\mathbf{x}(s) - \mathbf{x}_0))$.
Since $g$ is continuous on $[0,1]$ and differentiable in $(0,1)$, there exists $s_0\in(0,1)$ satisfying $g(1) - g(0) = dg(s_0)/ds = \hat{\mathbf{v}}^T(\mathbf{F}(\mathbf{x}(s_0)) - \mathbf{F}(\mathbf{x}_0))(\mathbf{x}_1-\mathbf{x}_0)$ by the mean value theorem and the chain rule.
By \ref{ass:dp_F_invR_L_are_Lipschitz}, $\mathbf{F}$ is Lipschitz on $\mathrm{gr}(\Theta)$ with a Lipschitz constant $L_F\in(0,\infty)$.
Then,
\begin{align*}
O(\mathbf{x}_0,\mathbf{x}_1) &= \|\mathbf{f}(\mathbf{x}_1) - \mathbf{f}(\mathbf{x}_0) - \mathbf{F}(\mathbf{x}_0)(\mathbf{x}_1-\mathbf{x}_0)\| \\
&\le \|\mathbf{F}(\mathbf{x}(s_0)) - \mathbf{F}(\mathbf{x}_0)\|\|\mathbf{x}_1-\mathbf{x}_0\| \\
&\le L_F\|\mathbf{x}_1-\mathbf{x}_0\|^2.
\end{align*}

Let $\epsilon\in(0,m_C/(L_FM_R))$.
By the uniform stability, there exists $\delta\in(0,\delta_0]$ satisfying $\|\mathbf{q}'(t) - \mathbf{q}(t)\| < \min\{\delta_0,\epsilon\}$ for all $t_1\in[t_0,\infty)$, $\mathbf{q}'\in S(t_1,\delta)$, and $t\in[t_1,\infty)$.
Then, $\mathbf{x}(t), \mathbf{x}'(t) = (t,\mathbf{q}'(t)) \in \{t\}\times(\mathbf{q}(t) + \delta_0 B_n) \subset \mathrm{gr}(\Theta)$ for all $t\in[t_1,\infty)$.
Assume
\begin{assumption}
	\item\label{ass:m=n} $m = n$.
\end{assumption}
Then, for every $\Delta\mathbf{q}(t) = \mathbf{q}'(t) - \mathbf{q}(t)\neq \mathbf{0}$ we have
\begin{align}
	&\|\mathbf{p}(t) - \mathbf{f}(\mathbf{x}'(t))\| + \|\mathbf{p}(t) - \mathbf{f}(\mathbf{x}(t))\| \nonumber \\
	&\ge \|\mathbf{f}(\mathbf{x}'(t)) - \mathbf{f}(\mathbf{x}(t))\| \nonumber \\
	&\ge \left( \frac{\|\mathbf{F}_q(\mathbf{x}(t))\Delta\mathbf{q}(t)\|}{\|\Delta\mathbf{q}(t)\|} - \frac{O(\mathbf{x}(t),\mathbf{x}'(t))}{\|\Delta\mathbf{q}(t)\|} \right)\|\Delta\mathbf{q}(t)\| \nonumber \\
	&\ge \left( \frac{\sigma_{\min}(\mathbf{C}(\mathbf{x}(t)))}{M_R} - L_F\|\Delta\mathbf{q}(t)\| \right)\|\Delta\mathbf{q}(t)\| \nonumber \\
	&\ge \left( \frac{m_C}{M_R} - L_F\epsilon \right)\|\Delta\mathbf{q}(t)\| \nonumber \\
	&= C_\epsilon\|\Delta\mathbf{q}(t)\|. \label{eqn:convergence_of_delta_q}
\end{align}
Let $\epsilon'\in(0,\infty)$ be arbitrary.
Since $\bar{\phi}_a(t)\to0$ as $t\to\infty$ for all $a\in\overline{1,l}$, there exists $T\in(0,\infty)$ satisfying $\sum_{a=1}^l\bar{\phi}_a(t) < \epsilon' C_\epsilon/2$ for all $t\in[t_0+T,\infty)$.
Then, we have $\|\mathbf{q}'(t) - \mathbf{q}(t)\| \le C_\epsilon^{-1}\sum_{a=1}^l(\phi'_a(t) + \phi_a(t)) \le 2C_\epsilon^{-1}\sum_{a=1}^l\bar{\phi}_a(t-t_1+t_0) < \epsilon'$ for all $t_1\in[t_0,\infty)$, $\mathbf{q}'\in S(t_1,\delta)$, and $t\in[t_1+T,\infty)$.

\subsubsection{Exponential Stability}

Assume
\begin{assumption}
	\item\label{ass:zero_integral_of_psi_from_t_0_to_infinity_and_zero_initial_task_error} $\int_{-\infty}^{\infty}\psi_a(t)dt = \phi_a(t_0) = 0$ for all $a\in\overline{1,l}$.
\end{assumption}
Let $0 < \rho < \min\{k_a\omega_a\mid a\in\overline{1,l}\}$ and define $\boldsymbol{\Phi} = [\Phi_{ij}] \in\mathbb{R}^{l\times l}$ as
\begin{equation*}
	\Phi_{ab} = \begin{dcases*}
		0,& $a < b$ \\
		1,& $a = b$ \\
		\sum_{i=b}^{a-1}\frac{k_iM_{ai}\Phi_{ib}}{k_a\omega_a-\rho},& $a > b$.
	\end{dcases*}
\end{equation*}
Let $t_1\in[t_0,\infty)$, $\mathbf{q}'\in S(t_1,\delta_0)$, and $t\in[t_1,\infty)$ be arbitrary.
Then, $\phi_1'(t) \le \Phi_{11}\phi_1'(t_1)e^{-\rho(t-t_1)}$.
Let $a\in\overline{2,l}$ and assume $\phi_b'(t) \le \sum_{i=1}^b\Phi_{bi}\phi_i'(t_1)e^{-\rho(t-t_1)}$ for all $b\in\overline{1,a-1}$.
Then,
\begin{align*}
&\phi_a'(t)e^{\rho(t-t_1)} \\
&\le \int_{t_1}^t|\gamma_a'(s)|e^{\rho(s-t_1)}e^{-(k_a\omega_a-\rho)(t-s)}ds + \phi_a'(t_1) \\
&\le \sum_{b=1}^{a-1}\sum_{i=1}^b\frac{k_bM_{ab}\Phi_{bi}}{k_a\omega_a-\rho}\phi_i'(t_1) + \phi_a'(t_1) \\
&= \sum_{b=1}^{a-1}\sum_{i=b}^{a-1}\frac{k_iM_{ai}\Phi_{ib}}{k_a\omega_a-\rho}\phi_b'(t_1) + \phi_a'(t_1) \\
&= \sum_{b=1}^a\Phi_{ab}\phi_b'(t_1).
\end{align*}
Thus, $\|\mathbf{p}(t) - \mathbf{f}(\mathbf{x}(t))\| = 0$.
Fix $\epsilon\in(0,m_C/(L_FM_R))$.
There exists $\delta\in(0,\delta_0]$ satisfying $\|\mathbf{q}'(t) - \mathbf{q}(t)\| < \min\{\delta_0,\epsilon\}$ for all $t_1\in[t_0,\infty)$, $\mathbf{q}'\in S(t_1,\delta)$, and $t\in[t_1,\infty)$.
By \eqref{eqn:convergence_of_delta_q},
\begin{align}
	&\|\mathbf{q}'(t) - \mathbf{q}(t)\| \nonumber \\
	&\le C_\epsilon^{-1}\|\mathbf{p}(t) - \mathbf{f}(\mathbf{x}'(t))\| \nonumber \\
	&\le C_\epsilon^{-1}\|\boldsymbol{\Phi}\|\|\mathbf{p}(t_1) - \mathbf{f}(\mathbf{x}'(t_1))\|e^{-\rho(t-t_1)} \nonumber \\
	&= C_\epsilon^{-1}\|\boldsymbol{\Phi}\|\|\mathbf{f}(\mathbf{x}(t_1)) - \mathbf{f}(\mathbf{x}'(t_1))\|e^{-\rho(t-t_1)} \nonumber \\
	&\le C_\epsilon^{-1}L_f\|\boldsymbol{\Phi}\|\|\mathbf{q}'(t_1) - \mathbf{q}(t_1)\|e^{-\rho(t-t_1)}. \label{eqn:exponential_stability}
\end{align}

We summarize analysis results of trajectory existence, task convergence, and stability in continuous time as follows:
\begin{theorem}
	\label{thm:task_convergence_and_stability_in_continuous_time}
	Assume \ref{ass:p_bounded_and_f_linearly_bounded} to \ref{ass:lower_bound_of_k_a}.
	Then, there exists a unique $\mathbf{q}:[t_0,\infty)\to\mathbb{R}^n$ satisfying $\mathbf{q}(t_0) = \mathbf{q}_0$ and $\dot{\mathbf{q}}(t) = \mathbf{u}(t,\mathbf{q}(t))$ for all $t\in(t_0,\infty)$ such that $\|\mathbf{p}_a(t) - \mathbf{f}_a(t,\mathbf{q}(t))\| < \theta_a'$ for all $a\in\overline{1,l}$ and $t\in[t_0,\infty)$.
	Assume additionally \ref{ass:bounded_integral_from_t_0_to_infinity}.
	Then, $\int_{t_0}^\infty\|\mathbf{p}_a(t) - \mathbf{f}_a(t,\mathbf{q}(t))\|dt < \infty$ and $\|\mathbf{p}_a(t) - \mathbf{f}_a(t,\mathbf{q}(t))\| \to 0$ as $t\to\infty$ for all $a\in\overline{1,l}$ and there exists $\mathbf{q}_\infty\in\mathbb{R}^n$ such that $\|\mathbf{q}(t) - \mathbf{q}_\infty\|\to0$ as $t\to\infty$.
	Various stability notions of the equilibrium point $\mathbf{z} = \mathbf{0}$ of \eqref{eqn:transformed_system} at $s = 0$ can be satisfied by the following conditions:
	\begin{enumerate}
		\item \ref{ass:p_bounded_and_f_linearly_bounded} to \ref{ass:bounded_integral_from_t_0_to_infinity} imply \ref{sta:stability};
		\item \ref{ass:p_bounded_and_f_linearly_bounded} to \ref{ass:dp_F_invR_L_are_Lipschitz} imply \ref{sta:uniform_stability};
		\item \ref{ass:p_bounded_and_f_linearly_bounded} to \ref{ass:m=n} imply \ref{sta:uniform_asymptotic_stability};
		\item \ref{ass:p_bounded_and_f_linearly_bounded} to \ref{ass:zero_integral_of_psi_from_t_0_to_infinity_and_zero_initial_task_error} imply \ref{sta:exponential_stability}.
	\end{enumerate}
\end{theorem}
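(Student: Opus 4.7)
Since the four bullets of the theorem line up one-for-one with the four subsections (Trajectory Existence, Task Convergence, and the four parts of Stability) that precede it, my plan is to assemble the proof by invoking the analysis already carried out in each subsection rather than redoing it. The strategy has two conceptual phases: first, produce a joint trajectory for the modified dynamics $\dot{\mathbf{q}} = \tilde{\mathbf{u}}(t,\mathbf{q})$ and confine it to $\mathrm{gr}(\Theta)$ so that it actually solves \eqref{eqn:differential_equation}; second, derive task convergence and the four stability conclusions along this trajectory under successively stronger hypotheses.

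For the first phase, I would (a) recall the continuous extension $\tilde{\mathbf{u}}$ built from the locally finite open cover of Lemma \ref{lem:open_cover} and the partition of unity of Lemma \ref{lem:partition_of_unity}, which was shown to be continuous and linearly bounded on $X$ with $\tilde{\mathbf{u}}|_{\mathrm{gr}(\Theta)} = \mathbf{u}$; (b) apply Lemma \ref{lem:existence_of_solution_of_differential_equation} to obtain a trajectory $\mathbf{q}$ with $\mathbf{q}(t_0) = \mathbf{q}_0$; and (c) run the bootstrap argument on the error dynamics \eqref{eqn:error_dynamics}. The crucial step is (c): suppose for contradiction that some $\phi_a$ first reaches $\theta_a'$ at $t_1$; then $\mathbf{q}(t)\in\Theta(t)$ on $[t_0,t_1]$ by continuity of $d(\mathbf{q}(t),\Theta(t))$, and integrating the scalar inequality along \eqref{eqn:error_dynamics} yields \eqref{eqn:phi_a_inequality}. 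Induct in $a$ from $1$ to $l$, applying \ref{ass:lower_bound_of_k_a} at each step to bound $\beta_a$ by $M_a+\sum_{b<a}k_bM_{ab}\theta_b$; this forces $\phi_a(t)<\theta_a'$ throughout, contradicting the choice of $t_1$. The confinement gives $\tilde{\mathbf{u}}(t,\mathbf{q}(t)) = \mathbf{u}(t,\mathbf{q}(t))$ for all $t\ge t_0$, and local Lipschitzness of $\mathbf{u}$ on $\mathrm{gr}(\Theta)$, which follows from \ref{ass:dp_F_invR_locally_Lipschitz_and_bounded}, \ref{ass:L_is_locally_Lipschitz_and_bounded}, and the nondegeneracy \eqref{eqn:inequality_of_minimum_singular_value_of_C_aa}, yields uniqueness via the standard argument of \cite[Theorem 2.2]{Teschl2012}.

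For the second phase, under \ref{ass:bounded_integral_from_t_0_to_infinity} I would dominate $\phi_a$ inductively by the scalar $\bar{\phi}_a$ and use Lebesgue dominated convergence to obtain $\phi_a(t)\to 0$ and $\int_{t_0}^\infty\phi_a\,dt<\infty$; a Cauchy argument on $\{\mathbf{q}(t_i)\}$ using integrability of $\|\mathbf{u}(\cdot,\mathbf{q}(\cdot))\|\le M(\sum_a(\psi_a+k_a\phi_a)^2)^{1/2}$ then produces the limit $\mathbf{q}_\infty$. The four stability conclusions then come out in layers by examining the equilibrium $\mathbf{z}=\mathbf{0}$ of \eqref{eqn:transformed_system}: \ref{sta:stability} from a Gronwall estimate on the compact piece $C_{[t_1,t_2]}$ combined with the tail bound $\zeta(t_2),\zeta'(t_2)\to 0$; \ref{sta:uniform_stability} by upgrading the Lipschitz constant to a global $L_u$ on $\mathrm{gr}(\hat{\Theta})$ using \ref{ass:dp_F_invR_L_are_Lipschitz}; \ref{sta:uniform_asymptotic_stability} by combining the Weyl-product lower bound $\sigma_{\min}(\mathbf{C})\ge m_C>0$, the Taylor remainder estimate $O(\mathbf{x}_0,\mathbf{x}_1)\le L_F\|\mathbf{x}_1-\mathbf{x}_0\|^2$, and \ref{ass:m=n} to derive \eqref{eqn:convergence_of_delta_q}, then choosing $T$ so that $\sum_a\bar{\phi}_a(t)<\epsilon'C_\epsilon/2$; and \ref{sta:exponential_stability} by assuming \ref{ass:zero_integral_of_psi_from_t_0_to_infinity_and_zero_initial_task_error}, constructing the matrix $\boldsymbol{\Phi}$ inductively in $a$, and transferring the exponential bound on task errors to joint errors via \eqref{eqn:convergence_of_delta_q} as in \eqref{eqn:exponential_stability}.

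The main obstacle is not any single technical step but the careful propagation of the cascade induction in $a$ that underlies both halves. In phase one, the bound on $\beta_a$ depends on the already-established bounds on $\phi_b$ for $b<a$, so the induction must be performed in exactly the right order and with the right choice of thresholds $\theta_b$; in the exponential case, the definition of $\Phi_{ab}$ is engineered so that the inductive hypothesis $\phi_b'(t)\le\sum_{i\le b}\Phi_{bi}\phi_i'(t_1)e^{-\rho(t-t_1)}$ is preserved as $a$ increases, which works only because the Gram-like matrix $\mathbf{A}=\mathbf{C}\mathbf{C}_D^T\mathbf{L}$ is block lower triangular and the diagonal blocks $\mathbf{A}_{aa}$ are coercive. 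Once these cascades are set up correctly, each successive stability statement follows from the previous one by adding a single assumption, and the theorem is proved.
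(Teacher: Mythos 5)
Your proposal reproduces the paper's own argument step by step: continuous extension via Lemmas \ref{lem:partition_of_unity} and \ref{lem:open_cover}, existence from Lemma \ref{lem:existence_of_solution_of_differential_equation}, the bootstrap/contradiction confinement argument with the cascade induction on $a$ driven by \ref{ass:lower_bound_of_k_a}, dominated convergence with $\bar{\phi}_a$ for task convergence, the Cauchy argument for $\mathbf{q}_\infty$, and the four stability layers built from Gronwall on $C_{[t_1,t_2]}$, the global Lipschitz constant $L_u$, the Weyl/Taylor estimate giving \eqref{eqn:convergence_of_delta_q}, and the inductively defined $\boldsymbol{\Phi}$. This is correct and is essentially the same approach as the paper.
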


\section{Discrete Time}
\label{sec:discrete_time}

In most practical applications, the PIK problem is solved in discrete time.
In other words, the PIK solution is calculated at a sequence of points in time and hold the value between two consecutive points of time.
As a result, we have an approximated joint trajectory that is a piecewise linear function.
Undoubtedly, we need to check if an approximated joint trajectory converges to a joint trajectory as the calculation points in time become finer.
Also, we should find conditions for task convergence and stability of approximated joint trajectories.

\subsection{Approximated Joint Trajectory}

For $t\in\mathbb{R}$, a set $P_t = \{\tau_i\in\mathbb{R}\mid i\in\mathbb{N}\cup\{0\}\}$ is said to be a partition of $[t,\infty)$ if $t = \tau_0 < \tau_1 < \tau_2 < \cdots$ and $\tau_i\to\infty$ as $i\to\infty$.
The norm of a partition $P_t$ is defined as $\|P_t\| = \sup\{\tau_i - \tau_{i-1}\mid i\in\mathbb{N}\}$.
A function $\tilde{\mathbf{q}}:[t_0,\infty)\to\mathbb{R}^n$ is said to be an approximated joint trajectory of the differntial equation \eqref{eqn:differential_equation} with an initial value $(t_0,\mathbf{q}_0)$ given by a partition $P_{t_0}$ if $\tilde{\mathbf{q}}(\tau_0) = \mathbf{q}_0$ and
\begin{equation*}
	\tilde{\mathbf{q}}(t) = \tilde{\mathbf{q}}(\tau_{i-1}) + (t-\tau_{i-1})\mathbf{u}(\tilde{\mathbf{x}}(\tau_{i-1}))
\end{equation*}
for all $t\in(\tau_{i-1},\tau_i]$ and $i\in\mathbb{N}$ where $\tilde{\mathbf{x}}(t) = (t,\tilde{\mathbf{q}}(t))$.
A sequence $\{\hat{\mathbf{q}}_i = \tilde{\mathbf{q}}(\tau_i)\mid i\in\mathbb{N}\cup\{0\}\}$ is said to be a discrete joint trajectory of \eqref{eqn:differential_equation} with $(t_0,\mathbf{q}_0)$ given by $P_{t_0}$. 

Assume \ref{ass:p_bounded_and_f_linearly_bounded} to \ref{ass:lower_bound_of_k_a} and
\begin{assumption}
	\item\label{ass:theta_a_prime_is_less_than_theta_a} $\theta_a'\in(0,\theta_a)$ for all $a\in\overline{1,l}$.
\end{assumption}
Let $\mathbf{q}:[t_0,\infty)\to\mathbb{R}^n$ be the joint trajectory of \eqref{eqn:differential_equation} with $\mathbf{q}(t_0) = \mathbf{q}_0$.
For each $\eta\in (0,\infty)$, we define $\tilde{S}(\eta)$ as the set of all approximated joint trajectories of \eqref{eqn:differential_equation} with $(t_0,\mathbf{q}_0)$ given by a partition $P_{t_0}$ satisfying $\|P_{t_0}\|<\eta$.
We prove that for every $T\in(0,\infty)$ there exists $\eta_T\in(0,\infty)$ such that $\|\mathbf{p}_a(t) - \mathbf{f}_a(t,\tilde{\mathbf{q}}(t))\|<\theta_a$ for all $\tilde{\mathbf{q}}\in\tilde{S}(\eta_T)$, $a\in\overline{1,l}$, and $t\in[t_0,t_0+T]$.
Let $T\in(0,\infty)$ be arbitrary, $P_{t_0}$ be an arbitrary partition of $[t_0,\infty)$, and $\tilde{\mathbf{q}}:[t_0,\infty)\to\mathbb{R}^n$ be the approximated joint trajectory of \eqref{eqn:differential_equation} with $\tilde{\mathbf{q}}(t_0) = \mathbf{q}_0$ given by $P_{t_0}$.
Since $\dot{\mathbf{p}}'$ is assumed to be bounded, there exists $M_p\in(0,\infty)$ satisfying $\|\dot{\mathbf{p}}'(\mathbf{x})\|\le M_p$ for all $\mathbf{x}\in\mathrm{gr}(\Theta)$.
Let $M(M_p + (\sum_{a=1}^lk_a^2\theta_a^2)^{1/2})\le M_u<\infty$, $r_T = TM_u$, and $C_T = \mathrm{gr}(\Theta)\cap([t_0,t_0+T]\times (\mathbf{q}_0 + r_TB_n))$.
Since $\mathbf{u}$ is locally Lipschitz on $\mathrm{gr}(\Theta)$ and $C_T$ is compact, $\mathbf{u}$ is Lipschitz on $C_T$ with a Lipschitz constant $L_T\in(0,\infty)$.
Since $\mathbf{F}$ is bounded, there exist $M_{F_a}\in(0,\infty)$ for $a\in\overline{1,l}$ satisfying $\|\mathbf{F}_a(\mathbf{x})\|\le M_{F_a}$ for all $a\in\overline{1,l}$ and $\mathbf{x}\in X$.
Let
\begin{equation}
	\label{eqn:upper_bound_of_eta_for_phi_to_be_smaller_than_theta_for_finete_time_period}
	\|P_{t_0}\| < \eta_T \le \min_{a\in\overline{1,l}}\frac{k_a\omega_a(\theta_a-\theta_a')}{M_{F_a}L_T(1+M_u)}.
\end{equation}

Suppose that there exists $t_1\in(t_0,t_0+T]$ such that $\|\mathbf{p}_a(t) - \mathbf{f}_a(t,\tilde{\mathbf{q}}(t))\| < \theta_a$ for all $a\in\overline{1,l}$ and $t\in[t_0,t_1)$ and $\|\mathbf{p}_a(t_1) - \mathbf{f}_a(t_1,\tilde{\mathbf{q}}(t_1))\| = \theta_a$ for at least one $a\in\overline{1,l}$.
Let $N\in\mathbb{N}$ be such that $t_1\in(\tau_{N-1},\tau_N]$.
Then, $\|\mathbf{u}(\tilde{\mathbf{x}}(\tau_{i-1}))\| \le M_u$ and $\|\tilde{\mathbf{q}}(t) - \tilde{\mathbf{q}}(t_0)\| \le r_T$ 
for all $t\in[\tau_{i-1},\tau_i]\cap[t_0,t_1]$ and $i\in\overline{1,N}$.
Define $\tilde{\phi}_a,\tilde{\rho}_a,\tilde{\gamma}_a:[t_0,t_1]\to\mathbb{R}$ for $a\in\overline{1,l}$ as in \eqref{eqn:phi_a} to \eqref{eqn:gamma_a} by replacing $\mathbf{x}(t)$ with $\tilde{\mathbf{x}}(t)$.
Define $\tilde{\delta}_a:[t_0,t_1]\to\mathbb{R}$ for $a\in\overline{1,l}$ as
\begin{equation}
	\label{eqn:delta_a}
	\tilde{\delta}_a(t) = \tilde{\phi}_a^+(t) \left\langle \mathbf{e}_a(\tilde{\mathbf{x}}(t)), \mathbf{F}_{qa}(\tilde{\mathbf{x}}(t))(\mathbf{u}(\tilde{\mathbf{x}}(t)) - \mathbf{u}(\tilde{\mathbf{x}}(\tau_{i-1})))\right\rangle
\end{equation}
if $t\in(\tau_{i-1},\tau_i]$.
Since $P_{t_0}\cap[t_0,t_1]$ is finite, we can check that $\tilde{\phi}_a$ is absolutely continuous on $[t_0,t_1]$; $\dot{\tilde{\phi}}_a(t) = -\tilde{\rho}_a(t)\tilde{\phi}_a(t) + \tilde{\gamma}_a(t) + \tilde{\delta}_a(t)$ for almost all $t\in[t_0,t_1]$; and $\tilde{\rho}_a$, $\tilde{\gamma}_a$, and $\tilde{\delta}_a$ are integrable on $[t_0,t_1]$ for all $a\in\overline{1,l}$ (in the sense of Lebesgue).
Since $\|\mathbf{b}_a(\tilde{\mathbf{x}}(\cdot))\|$ is continuous on the compact set $[t_0,t_1]$, there exists $\tilde{\beta}_a = \max\{\|\mathbf{b}_a(\tilde{\mathbf{x}}(t))\|\mid t\in[t_0,t_1]\}$ for $a\in\overline{1,l}$.
Then, we have
\begin{align*}
	\tilde{\phi}_a(t)
		&\le \left(\theta_a' - \frac{\tilde{\beta}_a}{k_a\omega_a}\right)e^{-k_a\omega_a(t-t_0)} + \frac{\tilde{\beta}_a}{k_a\omega_a} \\
		&\quad + M_{F_a}L_T\sum_{j=1}^{i-1}\int_{\tau_{j-1}}^{\tau_j}\|\tilde{\mathbf{x}}(s) - \tilde{\mathbf{x}}(\tau_{j-1})\|e^{-k_a\omega_a(t-s)}ds \\
		&\quad + M_{F_a}L_T\int_{\tau_{i-1}}^t\|\tilde{\mathbf{x}}(s) - \tilde{\mathbf{x}}(\tau_{i-1})\|e^{-k_a\omega_a(t-s)}ds \\
		&< \theta_a' + \frac{M_{F_a}L_T(1 + M_u)}{k_a\omega_a}\|P_{t_0}\| < \theta_a
\end{align*}
for all $a\in\overline{1,l}$, $t\in[\tau_{i-1},\tau_i]\cap[t_0,t_1]$, and $i\in\overline{1,N}$ by \ref{ass:lower_bound_of_k_a}, a contradiction that $\tilde{\phi}_a(t_1) = \theta_a$ for at least one $a\in\overline{1,l}$.
Therefore, $\|\mathbf{p}_a(t) - \mathbf{f}_a(t,\tilde{\mathbf{q}}(t))\| < \theta_a$ for all $a\in\overline{1,l}$ and $t\in[t_0,t_0+T]$.

We prove that for every $T,\epsilon\in(0,\infty)$ there exists $\eta_{T,\epsilon}\in(0,\infty)$ such that $\|\mathbf{q}(t) - \tilde{\mathbf{q}}(t)\|<\epsilon$ for all $\tilde{\mathbf{q}}\in\tilde{S}(\eta_{T,\epsilon})$ and $t\in[t_0,t_0+T]$.
Let $T,\epsilon\in(0,\infty)$ be arbitrary and $\|P_{t_0}\|<\eta_T$.
Then,
\begin{align*}
	&\|\mathbf{q}(t) - \tilde{\mathbf{q}}(t)\| \\
	&\le \sum_{j=1}^{i-1}\int_{\tau_{j-1}}^{\tau_j}\|\mathbf{u}(\mathbf{x}(s)) - \mathbf{u}(\tilde{\mathbf{x}}(\tau_{j-1}))\|ds \\
	&\quad + \int_{\tau_{i-1}}^t\|\mathbf{u}(\mathbf{x}(s)) - \mathbf{u}(\tilde{\mathbf{x}}(\tau_{i-1}))\|ds \\
	&\le (1+M_u)TL_T\|P_{t_0}\| + L_T\int_{t_0}^t\|\mathbf{q}(s) - \tilde{\mathbf{q}}(s)\|ds
\end{align*}
for all $\tilde{\mathbf{q}}\in\tilde{S}(\eta_T)$ and $t\in[t_0,t_0+T]$.
Let
\begin{equation*}
	\|P_{t_0}\| < \eta_{T,\epsilon} \le \min\left\{\eta_T, \frac{\epsilon}{(1+M_u)T(e^{TL_T}-1)}\right\}.
\end{equation*}
By the Gronwall's inequality,
\begin{equation*}
	\|\mathbf{q}(t) - \tilde{\mathbf{q}}(t)\| \le (1+M_u)T(e^{TL_T}-1)\|P_{t_0}\| < \epsilon
\end{equation*}
for all $\tilde{\mathbf{q}}\in\tilde{S}(\eta_{T,\epsilon})$ and $t\in[t_0,t_0+T]$.
Therefore, $\|\mathbf{q}(t) - \tilde{\mathbf{q}}(t)\|\to0$ as $\|P_{t_0}\|\to0$ uniformly on $[t_0,t_0+T]$.

\subsection{Task Convergence}

Assume \ref{ass:dp_F_invR_L_are_Lipschitz} and let 
\begin{equation}
	\label{eqn:upper_bound_of_eta_for_phi_to_be_smaller_than_theta_for_all_time}
	\|P_{t_0}\| < \eta_\infty \le \min_{a\in\overline{1,l}}\frac{k_a\omega_a(\theta_a-\theta_a')}{M_{F_a}L_u(1+M_u)}.
\end{equation}
It is not difficult to see that $\|\mathbf{p}_a(t) - \mathbf{f}_a(t,\tilde{\mathbf{q}}(t))\|<\theta_a$ for all $\tilde{\mathbf{q}}\in\tilde{S}(\eta_\infty)$, $a\in\overline{1,l}$, and $t\in[t_0,\infty)$.
Assume \ref{ass:bounded_integral_from_t_0_to_infinity} and
\begin{assumption}
	\item\label{ass:psi_a_monotonic_decrease} $\psi_a$ monotonically decreases on $[t_0-\eta_\infty,\infty)$ for all $a\in\overline{1,l}$.
\end{assumption}
We prove that there exists $\eta_0\in(0,\eta_\infty)$ such that for every $\tilde{\mathbf{q}}\in\tilde{S}(\eta_0)$, we have $\int_{t_0}^\infty\|\mathbf{p}_a(t) - \mathbf{f}_a(t,\tilde{\mathbf{q}}(t))\|dt<\infty$ and $\|\mathbf{p}_a(t) - \mathbf{f}_a(t,\tilde{\mathbf{q}}(t))\|\to0$ as $t\to\infty$ for all $a\in\overline{1,l}$ and there exists $\tilde{\mathbf{q}}_\infty\in\mathbb{R}^n$ such that $\|\tilde{\mathbf{q}}(t) - \tilde{\mathbf{q}}_\infty\|\to0$ as $t\to\infty$.

Let $\psi = \|(\psi_1,\dots,\psi_l)\|$, $\tilde{\phi} = \|(\tilde{\phi}_1,\dots,\tilde{\phi}_l)\|$, $K = \|\mathbf{K}\|$, and $M_F,M_A\in(0,\infty)$ be such that $\|\mathbf{F}_q(\mathbf{x})\|\le M_F$ and $\|\mathbf{A}(\mathbf{x})\|\le M_A$ for all $\mathbf{x}\in\mathrm{gr}(\Theta)$.
Since the product of bounded Lipschitz functions is Lipschitz, $\mathbf{M} = \mathbf{R}^{-1}\hat{\mathbf{J}}^T\mathbf{C}_D^T\mathbf{L}$ is Lipschitz on $\mathrm{gr}(\Theta)$ with a Lipschitz constant $L_M\in(0,\infty)$.
Let $\rho_{\max} = \max\{k_1\omega_1,\dots,k_l\omega_l\}$ and $\rho\in[0,\rho_{\max}]$ be arbitrary.
Then,
\begin{align*}
	&\|\mathbf{e}(\tilde{\mathbf{x}}(t)) - \mathbf{e}(\tilde{\mathbf{x}}(\tau_{i-1}))\| \\
	&\le \int_{\tau_{i-1}}^t\|(\dot{\mathbf{p}}' - \mathbf{A}(\dot{\mathbf{p}}' + \mathbf{K}\mathbf{e}))(\tilde{\mathbf{x}}(s))\|ds \\
	&\quad + \int_{\tau_{i-1}}^t\|\mathbf{F}_q(\tilde{\mathbf{x}}(s))(\mathbf{u}(\tilde{\mathbf{x}}(s)) - \mathbf{u}(\tilde{\mathbf{x}}(\tau_{i-1})))\|ds \\
	&\le (1+M_A)\int_{\tau_{i-1}}^t\psi(s)ds + M_AK\int_{\tau_{i-1}}^t\tilde{\phi}(s)ds \\
	&\quad + M_F\int_{\tau_{i-1}}^t\|\mathbf{u}(\tilde{\mathbf{x}}(s)) - \mathbf{u}(\tilde{\mathbf{x}}(\tau_{i-1}))\|ds
\end{align*}
and
\begin{align*}
	&\int_{\tau_{i-1}}^{t} \|\mathbf{u}(\tilde{\mathbf{x}}(s)) - \mathbf{u}(\tilde{\mathbf{x}}(\tau_{i-1}))\|e^{\rho s}ds \\
	&\le M\int_{\tau_{i-1}}^{t}(\|\dot{\mathbf{p}}'(\tilde{\mathbf{x}}(s))\| + \|\dot{\mathbf{p}}'(\tilde{\mathbf{x}}(\tau_{i-1}))\|)e^{\rho s}ds \\
	&\quad + \int_{\tau_{i-1}}^{t}\|\mathbf{M}(\tilde{\mathbf{x}}(s)) - \mathbf{M}(\tilde{\mathbf{x}}(\tau_{i-1}))\|\|\mathbf{K}\|\|\mathbf{e}(\tilde{\mathbf{x}}(s))\|e^{\rho s}ds \\
	&\quad + \int_{\tau_{i-1}}^{t} \|\mathbf{M}(\tilde{\mathbf{x}}(\tau_{i-1}))\|\|\mathbf{K}\|\|\mathbf{e}(\tilde{\mathbf{x}}(s)) - \mathbf{e}(\tilde{\mathbf{x}}(\tau_{i-1}))\|e^{\rho s}ds \\
	&\le M(1 + K(1+M_A)\|P_{t_0}\|e^{\rho_{\max} \|P_{t_0}\|})\int_{\tau_{i-1}}^t\psi(s)e^{\rho s}ds \\
	&\quad + M\int_{\tau_{i-1}}^t\psi(s - \|P_{t_0}\|)e^{\rho s}ds \\
	&\quad + N_1(\|P_{t_0}\|)\|P_{t_0}\|\int_{\tau_{i-1}}^t\|\mathbf{e}(\tilde{\mathbf{x}}(s))\|e^{\rho s}ds \\
	&\quad + N_2(\|P_{t_0}\|)\int_{\tau_{i-1}}^t\|\mathbf{u}(\tilde{\mathbf{x}}(s)) - \mathbf{u}(\tilde{\mathbf{x}}(\tau_{i-1}))\|e^{\rho s}ds
\end{align*}
for all $t\in[\tau_{i-1},\tau_i)$ and $i\in\mathbb{N}$ where $N_1(\|P_{t_0}\|) = (L_M(1+M_u) + MM_AKe^{\rho_{\max} \|P_{t_0}\|})K$ and $N_2(\|P_{t_0}\|) = MM_FK\|P_{t_0}\|e^{\rho_{\max} \|P_{t_0}\|}$.
Let
\begin{equation*}
	\|P_{t_0}\|< \eta_0' < \max\{\eta\in(0,\eta_\infty]\mid N_2(\eta)\le1\}.
\end{equation*}
Then,
\begin{align*}
	&\int_{\tau_{i-1}}^{t} \|\mathbf{u}(\tilde{\mathbf{x}}(s)) - \mathbf{u}(\tilde{\mathbf{x}}(\tau_{i-1}))\|e^{\rho s}ds \\
	&\le \frac{(1 + K(1+M_A)\eta_0'e^{\rho_{\max}\eta_0'})M}{1 - N_2(\eta_0')}\int_{\tau_{i-1}}^t\psi(s)e^{\rho s}ds \\
	&\quad + \frac{M}{1 - N_2(\eta_0')}\int_{\tau_{i-1}}^t\psi(s - \eta_0')e^{\rho s}ds \\
	&\quad + \frac{N_1(\eta_0')}{1 - N_2(\eta_0')}\|P_{t_0}\|\int_{\tau_{i-1}}^t\tilde{\phi}(s)e^{\rho s}ds \\
	&= \int_{\tau_{i-1}}^t\left(C_1\psi(s) + C_2\psi(s - \eta_0') + C_3\|P_{t_0}\|\tilde{\phi}(s)\right)e^{\rho s}ds
\end{align*}
for all $t\in[\tau_{i-1},\tau_i)$ and $i\in\mathbb{N}$.

Let
\begin{equation*}
	\|P_{t_0}\| \le \eta_0 < \min\{\eta_0',1/\mathrm{sr}(\mathbf{Y}^{-1}\mathbf{Z})\}
\end{equation*}
where
\begin{align*}
	\mathbf{Y} &= \begin{bmatrix} 1 & 0 & \cdots & 0 \\ -\frac{M_{21}k_1}{k_2\omega_2} & 1 & \cdots & 0 \\ \vdots & \vdots & \ddots & \vdots \\ -\frac{M_{l1}k_1}{k_l\omega_l} & -\frac{M_{l2}k_2}{k_l\omega_l} & \cdots & 1 \end{bmatrix} \in \mathbb{R}^{l\times l}, \\
	\mathbf{Z} &= C_3\begin{bmatrix} \frac{M_{F_1}}{k_1\omega_1} & \frac{M_{F_1}}{k_1\omega_1} & \cdots & \frac{M_{F_1}}{k_1\omega_1} \\ \frac{M_{F_2}}{k_2\omega_2} & \frac{M_{F_2}}{k_2\omega_2} & \cdots & \frac{M_{F_2}}{k_2\omega_2} \\ \vdots & \vdots & \ddots & \vdots \\ \frac{M_{F_l}}{k_l\omega_l} & \frac{M_{F_l}}{k_l\omega_l} & \cdots & \frac{M_{F_l}}{k_l\omega_l} \end{bmatrix} \in \mathbb{R}^{l\times l},
\end{align*}
and $\mathrm{sr}(\cdot)$ is the spectral radius of a square matrix.
Define $\hat{\phi}_a,\hat{\gamma}_a,\hat{\delta}_a:[t_0,\infty)\to[0,\infty)$ for $a\in\overline{1,l}$ as:
\begin{align*}
	\hat{\gamma}_a(t) &= \sum_{b=1}^{a-1}M_{ab}(\psi_b(t)+k_b\hat{\phi}_a(t)) + (1+M_{aa})\psi_a(t) \\
	\hat{\delta}_a(t) &= M_{F_a}\left(C_1\psi(t) + C_2\psi(t - \eta_0') + C_3\eta_0\sum_{b=1}^l\hat{\phi}_b(t)\right) \\
	\hat{\phi}_a(t) &= \theta_a'e^{-k_a\omega_a(t-t_0)} + \int_{t_0}^t(\hat{\gamma}_a(s) + \hat{\delta}_a(s))e^{-k_a\omega_a(t-s)}ds.
\end{align*}
Then, $\dot{\hat{\phi}}_a(t) = -k_a\omega_a\hat{\phi}_a(t) + \hat{\gamma}_a(t) + \hat{\delta}_a(t)$,
\begin{align*}
	\tilde{\phi}_a(t) - \hat{\phi}_a(t) 
	&\le (\tilde{\phi}_a(t_0) - \hat{\phi}_a(t_0))e^{-k_a\omega_a(t-t_0)} \\
	&\quad + \int_{t_0}^t(|\tilde{\gamma}_a(s)| - \hat{\gamma}_a(s))e^{-k_a\omega_a(t-s)}ds \\
	&\quad + \int_{t_0}^t(|\tilde{\delta}_a(s)| - \hat{\delta}_a(s))e^{-k_a\omega_a(t-s)}ds \\
	&\le \sum_{b=1}^{a-1}M_{ab}k_b\int_{t_0}^t(\tilde{\phi}_b(s) - \hat{\phi}_b(s))ds \\
	&\quad + M_{F_a}C_3\eta_0\sum_{b=1}^l\int_{t_0}^t(\tilde{\phi}_b(s) - \hat{\phi}_b(s))ds,
\end{align*}
and
\begin{align*}
	\int_{t_0}^t\hat{\phi}_a(s)ds 
	&= \frac{1}{k_a\omega_a}\int_{t_0}^t\left(\hat{\gamma}_a(s) + \hat{\delta}_a(s) - \dot{\hat{\phi}}_a(s)\right)ds \\
	&\le \int_{t_0}^t(\hat{\psi}_a(s) + \hat{\alpha}_a(t-s))ds \\
	&\quad + \sum_{b=1}^{a-1}\frac{M_{ab}k_b}{k_a\omega_a}\int_{t_0}^t\hat{\phi}_b(s)ds \\
	&\quad + \frac{M_{F_a}C_{3}}{k_a\omega_a}\eta_0\sum_{b=1}^l\int_{t_0}^t\hat{\phi}_b(s)ds
\end{align*}
for all $a\in\overline{1,l}$ and $t\in[t_0,\infty)$ where $\hat{\alpha}_a(t) = \theta_a'e^{-k_a\omega_at}$ and $\hat{\psi}_a(t) = (\sum_{b=1}^{a-1}M_{ab}\psi_b(t) + (1+M_{aa})\psi_a(t) + M_{F_a}C_{1}\psi(t) + M_{F_a}C_{2}\psi(t - \eta_0'))/(k_a\omega_a)$.
By the Gronwall's inequality, $\sum_{a=1}^l(\tilde{\phi}_a(t) - \hat{\phi}_a(t)) \le 0$ and $\tilde{\phi}_a(t) \le \hat{\phi}_a(t)$ for all $a\in\overline{1,l}$ and $t\in[t_0,\infty)$.

Let $\hat{\boldsymbol{\phi}} = (\hat{\phi}_1,\dots,\hat{\phi}_l)$, $\hat{\boldsymbol{\psi}} = (\hat{\psi}_1,\dots,\hat{\psi}_l)$, $\hat{\boldsymbol{\alpha}} = (\hat{\alpha}_1,\dots,\hat{\alpha}_l)$, and $\mathbf{X} = \mathbf{Y} - \eta_0\mathbf{Z}$.
Then, we have
\begin{equation*}
	\mathbf{X}\int_{t_0}^t\hat{\boldsymbol{\phi}}(s)ds \le \int_{t_0}^t(\hat{\boldsymbol{\psi}}(s) + \hat{\boldsymbol{\alpha}}(t-s))ds
\end{equation*}
for all $t\in[t_0,\infty)$ where the symbol $\le$ denotes the element-wise order of two vectors.
Since all off-diagonal entries of $\mathbf{X}$ is negative, $\mathbf{X}$ is a Z-matrix.
Since $\mathbf{Y}^{-1}$ and $\eta_0\mathbf{Z}$ are nonnegative, $\mathbf{X}$ is an M-matrix if and only if $\mathrm{sr}(\eta_0\mathbf{Y}^{-1}\mathbf{Z})<1$ \cite{Plemmons1977}.
Since $\mathbf{X}$ is an M-matrix, $\mathbf{X}^{-1}$ is nonnegative.
Therefore,
\begin{equation*}
	0 \le \hat{\boldsymbol{\Phi}} = \mathbf{X}^{-1}\lim_{t\to\infty}\int_{t_0}^t(\hat{\boldsymbol{\psi}}(s) + \hat{\boldsymbol{\alpha}}(t-s))ds < \infty
\end{equation*}
and
\begin{equation*}
	\int_{t_0}^t\hat{\boldsymbol{\phi}}(s)ds \le \mathbf{X}^{-1}\int_{t_0}^t(\hat{\boldsymbol{\psi}}(s) + \hat{\boldsymbol{\alpha}}(t-s))ds \le \hat{\boldsymbol{\Phi}}
\end{equation*}
for all $t\in[t_0,\infty)$ where $0\le\hat{\boldsymbol{\Phi}} = (\hat{\Phi}_1,\dots,\hat{\Phi}_l) < \infty$ denotes $0\le\hat{\Phi}_a<\infty$ for all $a\in\overline{1,l}$.
It follows that $\int_{t_0}^\infty(\hat{\gamma}_a(t) + \hat{\delta}_a(t))dt<\infty$ and
\begin{equation*}
	\lim_{t\to\infty}\int_{t_0}^t(\hat{\gamma}_a(s) + \hat{\delta}_a(s))e^{-k_a\omega_a(t-s)}ds = 0
\end{equation*}
for all $a\in\overline{1,l}$ by the Lebesgue's dominated convergence theorem.
Therefore, $\tilde{\phi}_a(t) \le \hat{\phi}_a(t) \to 0$ as $t\to\infty$ for all $a\in\overline{1,l}$.

For every $t,t'\in[\tau_i,\infty)$ and $i\in\mathbb{N}$, we have
\begin{align*}
	&\|\tilde{\mathbf{q}}(t') - \tilde{\mathbf{q}}(t)\| \\
	&\le \|\tilde{\mathbf{q}}(t') - \tilde{\mathbf{q}}(\tau_i)\| + \|\tilde{\mathbf{q}}(t) - \tilde{\mathbf{q}}(\tau_i)\| \\
	&\le 2M\int_{\tau_i}^\infty\left(\sum_{a=1}^l(\psi_a(s) + k_a\tilde{\phi}_a(s))^2\right)^{1/2}ds \\
	&\quad + 2\int_{\tau_i}^\infty\left(C_1\psi(s) + C_2\psi(s - \eta_0') + C_3\|P_{t_0}\|\tilde{\phi}(s)\right)ds.
\end{align*}
Since $\int_{t_0}^\infty\psi_a(t)dt<\infty$ and $\int_{t_0}^\infty\tilde{\phi}(t)dt \le \int_{t_0}^\infty\hat{\phi}_a(t)dt<\infty$, for every $\epsilon\in(0,\infty)$ there exists $i\in\mathbb{N}$ such that $\|\tilde{\mathbf{q}}(t') - \tilde{\mathbf{q}}(t)\| < \epsilon$ for all $t,t'\in[\tau_i,\infty)$.
It follows that for every divergent sequence $t_0 \le t_1 < t_2 < \cdots$, the sequence $\{\tilde{\mathbf{q}}(t_i)\}$ converges in $\mathbb{R}^n$ because it is Cauchy.
Therefore, $\tilde{\mathbf{q}}(t)$ converges to a point $\tilde{\mathbf{q}}_\infty\in\mathbb{R}^n$ as $t\to\infty$.

\subsection{Stability}

For each $(t_1,\delta,\eta)\in[t_0,\infty)\times(0,\delta_0]\times(0,\eta_0]$, we define $\tilde{S}(t_1,\delta,\eta)$ as the set of all approximated joint trajectories $\tilde{\mathbf{q}}':[t_1,\infty)\to\mathbb{R}^n$ of \eqref{eqn:differential_equation} with an initial value $(t_1,\tilde{\mathbf{q}}_1')\in X$ given by a partition $P_{t_1}$ satisfying $\tilde{\mathbf{q}}_1'\in\mathbf{q}(t_1) + \delta B_n$ and $\|P_{t_1}\|<\eta$.
We study stability of approximated joint trajectories $\tilde{\mathbf{q}}'\in \tilde{S}(t_1,\delta_0,\eta_0)$ on the joint trajectory $\mathbf{q}(t)$.
Specifically, we will find conditions for the following stability notions:
\begin{stability}
	\item\label{sta2:uniform_stability} for every $\epsilon\in(0,\infty)$ there exist $\delta,\eta\in(0,\infty)$ such that $\|\tilde{\mathbf{q}}'(t) - \mathbf{q}(t)\| < \epsilon$ for all $t_1\in[t_0,\infty)$, $\tilde{\mathbf{q}}'\in \tilde{S}(t_1,\delta,\eta)$, and $t\in[t_1,\infty)$ (uniform stability);
	\item\label{sta2:uniform_asymptotic_stability} \ref{sta2:uniform_stability} holds and there exist $\delta',\eta'\in(0,\infty)$ such that $\|\tilde{\mathbf{q}}'(t) - \mathbf{q}(t)\|\to0$ as $t\to\infty$ for all $t_1\in[t_0,\infty)$ and $\tilde{\mathbf{q}}'\in \tilde{S}(t_1,\delta',\eta')$ uniformly in $t_1$, i.e., for each $\epsilon'>0$ there exists $T\in(0,\infty)$ such that $\|\tilde{\mathbf{q}}'(t)-\mathbf{q}(t)\|<\epsilon'$ for all $t_1\in[t_0,\infty)$, $\tilde{\mathbf{q}}'\in \tilde{S}(t_1,\delta',\eta')$, and $t\in[t_1 + T,\infty)$ (uniform asymptotic stability);
	\item\label{sta2:exponential_stability} there exist $\delta,\eta,M,\rho\in(0,\infty)$ such that $\|\tilde{\mathbf{q}}'(t) - \mathbf{q}(t)\|\le M\|\tilde{\mathbf{q}}'(t_1) - \mathbf{q}(t_1)\|e^{-\rho(t-t_1)}$ for all $t_1\in[t_0,\infty)$, $\tilde{\mathbf{q}}'\in \tilde{S}(t_1,\delta,\eta)$, and $t\in[t_1,\infty)$ (exponential stability).
\end{stability}

\subsubsection{Uniform Stability}

For each $t_1\in[t_0,\infty)$ and $\tilde{\mathbf{q}}'\in \tilde{S}(t_1,\delta_0,\eta_0)$, we define $\tilde{\phi}_a',\tilde{\rho}_a',\tilde{\gamma}_a',\tilde{\delta}_a':[t_1,\infty)\to\mathbb{R}$ for $a\in\overline{1,l}$ as in \eqref{eqn:phi_a} to \eqref{eqn:gamma_a} and \eqref{eqn:delta_a} by replacing $\mathbf{x}(t)$ and $\tilde{\mathbf{x}}(t)$ with $\tilde{\mathbf{x}}'(t) = (t,\tilde{\mathbf{q}}'(t))$.
We also define $\tilde{\zeta}':[t_1,\infty)\to\mathbb{R}$ as in \eqref{eqn:zeta} by replacing $\phi_a'$ with $\tilde{\phi}_a'$.
Obviously, $\tilde{\zeta}'$ monotonically decreases and converges to $0$.
We can prove, similarly as before, that for every $\epsilon\in(0,\infty)$ there exists $T\in(0,\infty)$ such that $\tilde{\zeta}'(t)<\epsilon$ for all $t_1\in[t_0,\infty)$, $\tilde{\mathbf{q}}'\in\tilde{S}(t_1,\delta_0,\eta_0)$, and $t\in[t_1+T,\infty)$.
For each $t_1\in[t_0,\infty)$ and $\tilde{\mathbf{q}}'\in\tilde{S}(t_1,\delta_0,\eta_0)$, we write $P'_{t_1} = \{t_1 = \tau'_0 < \tau'_1 < \tau'_2 < \cdots \}$ to denote the partition of $[t_1,\infty)$ that generates $\tilde{\mathbf{q}}'$.

Let $\epsilon\in(0,\infty)$ be arbitrary.
There exists $\delta\in(0,\delta_0]$ satisfying $\|\mathbf{q}'(t) - \mathbf{q}(t)\|<\min\{\delta_0,\epsilon/2\}$ for all $t_1\in[t_0,\infty)$, $\mathbf{q}'\in S(t_1,\delta)$, and $t\in[t_1,\infty)$ by Theorem \ref{thm:task_convergence_and_stability_in_continuous_time}.
Let $t_1\in[t_0,\infty)$ and $\tilde{\mathbf{q}}'\in\tilde{S}(t_1,\delta,\eta_0)$ be arbitrary.
There exists $\mathbf{q}'\in S(t_1,\delta)$ satisfying $\tilde{\mathbf{q}}'(t_1) = \mathbf{q}'(t_1)$.
We proved that there exist $\mathbf{q}'_\infty,\tilde{\mathbf{q}}'_\infty\in\mathbb{R}^n$ such that $\|\mathbf{q}'(t) - \mathbf{q}'_\infty\|\to0$ and $\|\tilde{\mathbf{q}}'(t) - \tilde{\mathbf{q}}'_\infty\|\to0$ as $t\to\infty$.
So, we can write $\mathbf{q}'_\infty = \mathbf{q}'(t_1) + \int_{t_1}^\infty\dot{\mathbf{q}}'(t)dt$ and $\tilde{\mathbf{q}}'_\infty = \tilde{\mathbf{q}}'(t_1) + \int_{t_1}^\infty\dot{\tilde{\mathbf{q}}}(t)dt$.
Then,
\begin{equation*}
	\|\mathbf{q}'(t) - \mathbf{q}'_\infty\| 
		\le \int_{\tau'_i}^\infty\|\mathbf{u}(s,\mathbf{q}'(s))\|ds 
		\le M\zeta'(\tau'_i)
\end{equation*}
and
\begin{align*}
	&\|\tilde{\mathbf{q}}'(t) - \tilde{\mathbf{q}}'_\infty\| \\
	&\le \int_{\tau'_i}^\infty\|\dot{\tilde{\mathbf{q}}}'(s) - \mathbf{u}(s,\tilde{\mathbf{q}}'(s))\|ds + \int_{\tau'_i}^\infty\|\mathbf{u}(s,\tilde{\mathbf{q}}'(s))\|ds \\
	&\le \int_{\tau'_i}^\infty\left(C_1\psi(s) + C_2\psi(s-\eta'_0) + C_3\eta_0\tilde{\phi}'(s)\right)ds + M\tilde{\zeta}'(\tau'_i)
\end{align*}
for all $t\in[\tau'_i,\tau'_{i+1}]$ and $i\in\mathbb{N}$ where $\tilde{\phi}' = \|(\tilde{\phi}'_1,\dots,\tilde{\phi}'_l)\|$.
Since $\int_t^\infty\tilde{\phi}'(s)ds \le \sum_{a=1}^l\int_t^\infty\hat{\phi}_a(s-t_1+t_0)ds\to0$ as $t\to\infty$, there exists $T\in(0,\infty)$ that only depends on $\epsilon$ such that $\|\mathbf{q}'(t) - \mathbf{q}'_\infty\|<\epsilon/10$ and $\|\tilde{\mathbf{q}}'(t) - \tilde{\mathbf{q}}'_\infty\| < \epsilon/10$ for all $t\in[t_1+T,\infty)$.
Let
\begin{equation*}
	\|P'_{t_1}\| < \eta = \min\left\{\eta_0, \frac{\epsilon/10}{(1+M_u)T(e^{TL_u}-1)}\right\}.
\end{equation*}
Then, $\|\mathbf{q}'(t) - \tilde{\mathbf{q}}'(t)\| < \epsilon/10$ for all $t\in[t_1,t_1+T]$ and
\begin{align*}
	&\|\mathbf{q}'(t) - \tilde{\mathbf{q}}'(t)\| \le \|\mathbf{q}'(t) - \mathbf{q}'_\infty\| + \|\mathbf{q}'_\infty - \mathbf{q}'(t_1+T)\| \\
	&\quad\quad\quad + \|\mathbf{q}'(t_1+T) - \tilde{\mathbf{q}}'(t_1+T)\| + \|\tilde{\mathbf{q}}'(t_1+T) - \tilde{\mathbf{q}}'_\infty\| \\
	&\quad\quad\quad + \|\tilde{\mathbf{q}}'_\infty - \tilde{\mathbf{q}}'(t)\| < \epsilon/2
\end{align*}
for all $t\in(t_1+T,\infty)$.
Therefore,
\begin{equation*}
	\|\tilde{\mathbf{q}}'(t) - \mathbf{q}(t)\| \le \|\tilde{\mathbf{q}}'(t) - \mathbf{q}'(t)\| + \|\mathbf{q}'(t) - \mathbf{q}(t)\| < \epsilon
\end{equation*}
for all $t\in[t_1,\infty)$.

\subsubsection{Uniform Asymptotic Stability}

Assume \ref{ass:m=n} and let $\epsilon\in(0,m_C/(L_FM_R))$.
By the uniform stability, there exist $\delta\in(0,\delta_0]$ and $\eta\in(0,\eta_0]$ satisfying $\|\tilde{\mathbf{q}}'(t) - \mathbf{q}(t)\|<\min\{\delta_0,\epsilon\}$ for all $t_1\in[t_0,\infty)$, $\tilde{\mathbf{q}}'\in\tilde{S}(t_1,\delta,\eta)$, and $t\in[t_1,\infty)$.
Then, $\mathbf{x}(t),\tilde{\mathbf{x}}'(t) = (t,\tilde{\mathbf{q}}'(t))\in\{t\}\times(\mathbf{q}(t) + \delta_0B_n)\subset\mathrm{gr}(\Theta)$ for all $t\in[t_1,\infty)$.
Let $\epsilon'\in(0,\infty)$ be arbitrary.
Since $\hat{\phi}_a(t) + \bar{\phi}_a(t)\to0$ as $t\to\infty$ for all $a\in\overline{1,l}$, there exists $T\in(0,\infty)$ satisfying $\sum_{a=1}^l(\hat{\phi}_a(t) + \bar{\phi}_a(t)) < \epsilon'C_\epsilon/2$ for all $t\in[t_0+T,\infty)$.
By replacing $\mathbf{x}'(t)$ with $\tilde{\mathbf{x}}'(t)$ in \eqref{eqn:convergence_of_delta_q}, we have $\|\tilde{\mathbf{q}}'(t) - \mathbf{q}(t)\| \le C_\epsilon^{-1}\sum_{a=1}^l(\tilde{\phi}'_a(t) + \phi_a(t)) \le C_\epsilon^{-1}\sum_{a=1}^l(\hat{\phi}_a(t-t_1+t_0) + \bar{\phi}_a(t-t_1+t_0)) <\epsilon'$ for all $t\in[t_1 + T,\infty)$.

\subsubsection{Exponential Stability}

Assume \ref{ass:zero_integral_of_psi_from_t_0_to_infinity_and_zero_initial_task_error}.
Let $\eta\in(0,\eta_0]$ and $\rho\in(0,\min\{k_1\omega_1,\dots,k_l\omega_l\})$ be such that 
\begin{equation*}
\mathrm{sr}(\mathbf{Y}^{-1}(\eta\mathbf{Z} + \rho\tilde{\mathbf{Z}})) < 1
\end{equation*}
where $\tilde{\mathbf{Z}} = \mathrm{diag}(1/(k_1\omega_1),\dots,1/(k_l\omega_l))\in\mathbb{R}^{l\times l}$.
Let $\tilde{\mathbf{X}} = \mathbf{Y} - \eta\mathbf{Z} - \rho\tilde{\mathbf{Z}}$, $\tilde{\boldsymbol{\Phi}} = \mathbf{I}_l + \tilde{\mathbf{Z}}^{-1}(\mathbf{I}_l - \mathbf{Y} + \eta\mathbf{Z})\tilde{\mathbf{X}}^{-1}\tilde{\mathbf{Z}}$, and $\tilde{\boldsymbol{\phi}}' = (\tilde{\phi}_1',\dots,\tilde{\phi}_l')$.
Then, we can check, similarly as before, that $\tilde{\mathbf{X}}$ is an M-matrix,
\begin{equation*}
\int_{t_1}^t\tilde{\boldsymbol{\phi}}'(s)e^{\rho(s-t_1)}ds \le \tilde{\mathbf{X}}^{-1}\tilde{\mathbf{Z}}\tilde{\boldsymbol{\phi}}'(t_1),
\end{equation*}
and
\begin{equation*}
\tilde{\boldsymbol{\phi}}'(t) \le \tilde{\boldsymbol{\Phi}}\tilde{\boldsymbol{\phi}}'(t_1)e^{-\rho(t-t_1)}
\end{equation*}
for all $t_1\in[t_0,\infty)$, $\mathbf{q}'\in\tilde{S}(t_1,\delta_0,\eta)$, and $t\in[t_1,\infty)$ from
\begin{align*}
&\int_{t_1}^t\tilde{\phi}_a'(s)e^{\rho(s-t_1)}ds \\
&\le \frac{1}{k_a\omega_a}\int_{t_1}^t\left(|\tilde{\gamma}_a'(s)| + |\tilde{\delta}_a'(s)|\right)e^{\rho(s-t_1)}ds \\
&\quad + \frac{\rho}{k_a\omega_a}\int_{t_1}^t\tilde{\phi}_a'(s)e^{\rho(s-t_1)}ds + \frac{\tilde{\phi}_a'(t_1)}{k_a\omega_a} \\
&\le \sum_{b=1}^{a-1}\frac{M_{ab}k_b}{k_a\omega_a}\int_{t_1}^t\tilde{\phi}_b'(s)e^{\rho(s-t_1)}ds \\
&\quad + \frac{\eta M_{F_a}C_3}{k_a\omega_a}\sum_{b=1}^l\int_{t_1}^t\tilde{\phi}_b'(s)e^{\rho(s-t_1)}ds \\
&\quad + \frac{\rho}{k_a\omega_a}\int_{t_1}^t\tilde{\phi}_a'(s)e^{\rho(s-t_1)}ds + \frac{\tilde{\phi}_a'(t_1)}{k_a\omega_a}
\end{align*}
and
\begin{align*}
\tilde{\phi}_a'(t)e^{\rho(t-t_1)} 
&\le \tilde{\phi}_a'(t_1) + \int_{t_1}^t(|\tilde{\gamma}_a'(s)| + |\tilde{\delta}_a'(s)|)e^{\rho(s-t_1)}ds \\
&\le \tilde{\phi}_a'(t_1) + \sum_{b=1}^{a-1}M_{ab}k_b\int_{t_1}^t\tilde{\phi}_b'(s)e^{\rho(s-t_1)}ds \\
&\quad + \eta M_{F_a}C_3\sum_{b=1}^l\int_{t_1}^t\tilde{\phi}_b'(s)e^{\rho(s-t_1)}ds.
\end{align*}
Fix $\epsilon\in(0,m_C/(L_FM_R))$.
By the uniform stability, there exists $\delta\in(0,\delta_0]$ satisfying $\|\tilde{\mathbf{q}}'(t) - \mathbf{q}(t)\| < \min\{\delta_0,\epsilon\}$ for all $t_1\in[t_0,\infty)$, $\tilde{\mathbf{q}}'\in\tilde{S}(t_1,\delta,\eta)$, and $t\in[t_1,\infty)$. 
Similarly as in \eqref{eqn:convergence_of_delta_q} and \eqref{eqn:exponential_stability}, we can show
\begin{equation*}
\|\tilde{\mathbf{q}}'(t) - \mathbf{q}(t)\| 
\le C_\epsilon^{-1}L_f\|\tilde{\boldsymbol{\Phi}}\|\|\tilde{\mathbf{q}}'(t_1) - \mathbf{q}(t_1)\|e^{-\rho(t-t_1)}
\end{equation*}
for all $t_1\in[t_0,\infty)$, $\tilde{\mathbf{q}}'\in\tilde{S}(t_1,\delta,\eta)$, and $t\in[t_1,\infty)$.

We summarize analysis results of task convergence and stability in discrete time as follows:
\begin{theorem}
	\label{thm:task_convergence_and_stability_in_discrete_time}
	Assume \ref{ass:p_bounded_and_f_linearly_bounded} to \ref{ass:lower_bound_of_k_a} and let $\mathbf{q}:[t_0,\infty)\to\mathbb{R}^n$ be such that $\mathbf{q}(t_0) = \mathbf{q}_0$ and $\dot{\mathbf{q}}(t) = \mathbf{u}(t,\mathbf{q}(t))$ for all $t\in(t_0,\infty)$; existence and uniqueness of such $\mathbf{q}$ is guaranteed by Theorem \ref{thm:task_convergence_and_stability_in_continuous_time}.
	Assume additionally \ref{ass:theta_a_prime_is_less_than_theta_a}.
	Then, for every $T\in(0,\infty)$ there exists $\eta_T\in(0,\infty)$ such that $\|\mathbf{p}_a(t) - \mathbf{f}_a(t,\tilde{\mathbf{q}}(t))\| < \theta_a$ for all $a\in\overline{1,l}$, $\tilde{\mathbf{q}}\in\tilde{S}(\eta_T)$, and $t\in[t_0,\infty)$.
	Also, for every $T,\epsilon\in(0,\infty)$ there exists $\eta_{T,\epsilon}\in(0,\infty)$ such that $\|\tilde{\mathbf{q}}(t) - \mathbf{q}(t)\|<\epsilon$ for all $\tilde{\mathbf{q}}\in\tilde{S}(\eta_{T,\epsilon})$ and $t\in[t_0,t_0+T]$.
	Assume additionally \ref{ass:dp_F_invR_L_are_Lipschitz}.
	Then, there exists $\eta_\infty\in(0,\infty)$ such that $\|\mathbf{p}_a(t) - \mathbf{f}_a(t,\tilde{\mathbf{q}}(t))\|<\theta_a$ for all $a\in\overline{1,l}$, $\tilde{\mathbf{q}}\in\tilde{S}(\eta_\infty)$, and $t\in[t_0,\infty)$.
	Assume additionally \ref{ass:bounded_integral_from_t_0_to_infinity} and \ref{ass:psi_a_monotonic_decrease}.
	Then, there exists $\eta_0\in(0,\eta_\infty)$ such that for every $\tilde{\mathbf{q}}\in\tilde{S}(\eta_0)$, we have $\int_{t_0}^\infty\|\mathbf{p}_a(t) - \mathbf{f}_a(t,\tilde{\mathbf{q}}(t))\|dt<\infty$ and $\|\mathbf{p}_a(t) - \mathbf{f}_a(t,\tilde{\mathbf{q}}(t))\|\to0$ as $t\to\infty$ for all $a\in\overline{1,l}$ and there exists $\tilde{\mathbf{q}}_\infty\in\mathbb{R}^n$ such that $\|\tilde{\mathbf{q}}(t) - \tilde{\mathbf{q}}_\infty\|\to0$ as $t\to\infty$.
	Various stability notions of approximated joint trajectories on the joint trajectory $\mathbf{q}(t)$ can be satisfied by the following conditions:
	\begin{enumerate}
		\item \ref{ass:p_bounded_and_f_linearly_bounded} to \ref{ass:psi_a_monotonic_decrease} except for \ref{ass:m=n} and \ref{ass:zero_integral_of_psi_from_t_0_to_infinity_and_zero_initial_task_error} imply \ref{sta2:uniform_stability};
		\item \ref{ass:p_bounded_and_f_linearly_bounded} to \ref{ass:psi_a_monotonic_decrease} except for \ref{ass:zero_integral_of_psi_from_t_0_to_infinity_and_zero_initial_task_error} imply \ref{sta2:uniform_asymptotic_stability};
		\item \ref{ass:p_bounded_and_f_linearly_bounded} to \ref{ass:psi_a_monotonic_decrease} imply \ref{sta2:exponential_stability}.
	\end{enumerate}
\end{theorem}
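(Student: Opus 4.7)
The plan is to recognize that Theorem~\ref{thm:task_convergence_and_stability_in_discrete_time} is essentially a bookkeeping statement that assembles the chain of results built piecewise throughout Section~\ref{sec:discrete_time}, and the proof consists of checking that the hypotheses line up with each block. I would organize the argument following the order of progressively stronger claims in the theorem. First, I invoke Theorem~\ref{thm:task_convergence_and_stability_in_continuous_time} to obtain the continuous-time joint trajectory $\mathbf{q}$ under \ref{ass:p_bounded_and_f_linearly_bounded}--\ref{ass:lower_bound_of_k_a}. For the finite-time error bound under \ref{ass:theta_a_prime_is_less_than_theta_a}, I run the contradiction argument on a first crossing time $t_1 \in (t_0, t_0 + T]$ at which some $\tilde{\phi}_a(t_1) = \theta_a$, using the perturbed error dynamics $\dot{\tilde{\phi}}_a = -\tilde{\rho}_a \tilde{\phi}_a + \tilde{\gamma}_a + \tilde{\delta}_a$ where $\tilde{\delta}_a$ is the discretization term \eqref{eqn:delta_a}. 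The key quantitative inputs are the local Lipschitz constant $L_T$ of $\mathbf{u}$ on the compact set $C_T = \mathrm{gr}(\Theta) \cap ([t_0, t_0+T] \times (\mathbf{q}_0 + r_T B_n))$ and the bound $M_u$ on $\|\mathbf{u}\|$; choosing $\eta_T$ as in \eqref{eqn:upper_bound_of_eta_for_phi_to_be_smaller_than_theta_for_finete_time_period} together with \ref{ass:lower_bound_of_k_a} yields $\tilde{\phi}_a(t) < \theta_a$ by induction on $a$, contradicting $\tilde{\phi}_a(t_1) = \theta_a$. The uniform convergence $\|\tilde{\mathbf{q}}(t) - \mathbf{q}(t)\| \to 0$ on $[t_0,t_0+T]$ then follows by Gronwall applied to $\|\mathbf{q}(t) - \tilde{\mathbf{q}}(t)\| \le (1+M_u)TL_T\|P_{t_0}\| + L_T\int_{t_0}^t\|\mathbf{q}-\tilde{\mathbf{q}}\|\,ds$.

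Upgrading to an infinite time horizon uses \ref{ass:dp_F_invR_L_are_Lipschitz}, which promotes $L_T$ to the global constant $L_u$ on $\mathrm{gr}(\hat{\Theta})$; the same contradiction argument with $\eta_\infty$ from \eqref{eqn:upper_bound_of_eta_for_phi_to_be_smaller_than_theta_for_all_time} then yields $\tilde{\phi}_a(t) < \theta_a$ for all $t \in [t_0,\infty)$. For task convergence under \ref{ass:bounded_integral_from_t_0_to_infinity} and \ref{ass:psi_a_monotonic_decrease}, I construct the dominating scalars $\hat{\phi}_a$ satisfying $\dot{\hat{\phi}}_a = -k_a\omega_a \hat{\phi}_a + \hat{\gamma}_a + \hat{\delta}_a$, where \ref{ass:psi_a_monotonic_decrease} is what lets me bound the shifted term $\psi_b(s - \eta_0')$ cleanly. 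Gronwall gives $\tilde{\phi}_a \le \hat{\phi}_a$, and then the central trick is the M-matrix argument: integrating yields $\mathbf{X}\int_{t_0}^t \hat{\boldsymbol{\phi}}\,ds \le \int_{t_0}^t(\hat{\boldsymbol{\psi}} + \hat{\boldsymbol{\alpha}})\,ds$ with $\mathbf{X} = \mathbf{Y} - \eta_0 \mathbf{Z}$; choosing $\eta_0 < 1/\mathrm{sr}(\mathbf{Y}^{-1}\mathbf{Z})$ makes $\mathbf{X}$ a nonsingular M-matrix so $\mathbf{X}^{-1} \ge 0$, giving integrability of $\hat{\boldsymbol{\phi}}$. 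Dominated convergence applied to the integral representation of $\hat{\phi}_a$ then yields $\tilde{\phi}_a(t) \to 0$, and a Cauchy argument on $\tilde{\mathbf{q}}$ produces the limit $\tilde{\mathbf{q}}_\infty$.

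For the three stability notions, I mirror the corresponding continuous-time proofs but with the discretization term carried along. For \ref{sta2:uniform_stability}, I couple the uniform stability of $\mathbf{q}'$ from Theorem~\ref{thm:task_convergence_and_stability_in_continuous_time} with the finite-time convergence $\tilde{\mathbf{q}}'\to\mathbf{q}'$ already established; the triangle inequality splitting at $t_1 + T$ (with $T$ chosen so the tails $\|\mathbf{q}'(t) - \mathbf{q}'_\infty\|$ and $\|\tilde{\mathbf{q}}'(t) - \tilde{\mathbf{q}}'_\infty\|$ are each below $\epsilon/10$, then $\eta$ chosen so $\|\mathbf{q}' - \tilde{\mathbf{q}}'\| < \epsilon/10$ on $[t_1, t_1+T]$ via Gronwall) delivers the bound $\|\tilde{\mathbf{q}}'(t) - \mathbf{q}(t)\| < \epsilon$. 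For \ref{sta2:uniform_asymptotic_stability}, \ref{ass:m=n} lets me reuse the lower bound \eqref{eqn:convergence_of_delta_q} with $\tilde{\mathbf{x}}'$ in place of $\mathbf{x}'$, giving $\|\tilde{\mathbf{q}}'(t) - \mathbf{q}(t)\| \le C_\epsilon^{-1}\sum_a(\hat{\phi}_a(t-t_1+t_0) + \bar{\phi}_a(t-t_1+t_0))$, which decays uniformly in $t_1$. For \ref{sta2:exponential_stability}, \ref{ass:zero_integral_of_psi_from_t_0_to_infinity_and_zero_initial_task_error} kills the forcing terms and reduces the analysis to an exponentially weighted version of the M-matrix argument: I multiply the inequality for $\tilde{\phi}_a'$ by $e^{\rho(s-t_1)}$ and obtain the perturbed linear system $\tilde{\mathbf{X}}\int \tilde{\boldsymbol{\phi}}' e^{\rho(s-t_1)}\,ds \le \tilde{\mathbf{Z}}\tilde{\boldsymbol{\phi}}'(t_1)$ with $\tilde{\mathbf{X}} = \mathbf{Y} - \eta\mathbf{Z} - \rho\tilde{\mathbf{Z}}$. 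Choosing $\eta, \rho$ so that $\mathrm{sr}(\mathbf{Y}^{-1}(\eta\mathbf{Z} + \rho\tilde{\mathbf{Z}})) < 1$ keeps $\tilde{\mathbf{X}}$ an M-matrix, whence $\tilde{\boldsymbol{\phi}}'(t) \le \tilde{\boldsymbol{\Phi}}\tilde{\boldsymbol{\phi}}'(t_1) e^{-\rho(t-t_1)}$; transferring this to $\|\tilde{\mathbf{q}}'(t) - \mathbf{q}(t)\|$ via \eqref{eqn:convergence_of_delta_q} and the Lipschitz continuity of $\mathbf{f}$ finishes the proof.

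The principal obstacle I anticipate is keeping the discretization drift $\tilde{\delta}_a$ compatible with the M-matrix structure under two different asymptotic regimes simultaneously: the plain task convergence argument only needs the spectral radius of $\eta_0 \mathbf{Y}^{-1}\mathbf{Z}$ to be below one, but the exponential case forces a stricter condition $\mathrm{sr}(\mathbf{Y}^{-1}(\eta\mathbf{Z} + \rho\tilde{\mathbf{Z}})) < 1$ that couples the step size with the decay rate. Verifying that such $\eta,\rho$ exist, that they can be taken compatibly with all previous smallness thresholds ($\eta_T, \eta_{T,\epsilon}, \eta_\infty, \eta_0, \eta_0'$), and that the resulting decomposition of the triangle inequality in the uniform-stability step actually produces a threshold $\eta$ that is uniform in $t_1$ (not just $t_1$-dependent) is the delicate bookkeeping that needs care.
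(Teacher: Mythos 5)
Your proposal reproduces the paper's argument essentially step for step: the first-crossing contradiction with the perturbed dynamics $\dot{\tilde{\phi}}_a = -\tilde{\rho}_a\tilde{\phi}_a + \tilde{\gamma}_a + \tilde{\delta}_a$ and the compact-set Lipschitz constant $L_T$ for the finite-horizon bound, Gronwall for uniform convergence on $[t_0,t_0+T]$, promotion of $L_T$ to $L_u$ under \ref{ass:dp_F_invR_L_are_Lipschitz} for the infinite horizon, the M-matrix argument with $\mathbf{X} = \mathbf{Y} - \eta_0\mathbf{Z}$ and $\mathrm{sr}(\eta_0\mathbf{Y}^{-1}\mathbf{Z})<1$ for task convergence, the $\epsilon/10$ triangle-inequality splitting at $t_1+T$ for uniform stability, \eqref{eqn:convergence_of_delta_q} under \ref{ass:m=n} for uniform asymptotic stability, and the exponentially weighted M-matrix inequality with $\tilde{\mathbf{X}} = \mathbf{Y} - \eta\mathbf{Z} - \rho\tilde{\mathbf{Z}}$ for exponential stability. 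This is the paper's proof; no gaps or genuine deviations.
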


\section{Preconditioning}
\label{sec:preconditioning}

In this section, we discuss how preconditioning can be used to overcome a limitation of the PIK problem.
The $a$-th error dynamics \eqref{eqn:error_dynamics} can be rewritten as
\begin{equation*}
	\dot{\mathbf{e}}_a = - k_a\mathbf{A}_{aa}\mathbf{e}_a + (\mathbf{I}_{m_a} - \mathbf{A}_{aa})\dot{\mathbf{p}}_a' - \sum_{b=1}^{a-1}\mathbf{A}_{ab}(\dot{\mathbf{p}}_b' + k_b\mathbf{e}_b).
\end{equation*}
The first term on the right hand side is the feedback that pushes $\mathbf{e}_a$ to zero, the second is the disturbance generated from the imperfect inversion, and the third is the influence from higher priority tasks.
It is apparent that we have the best result when $\mathbf{A}_{aa}(\mathbf{x}) = \mathbf{I}_{m_a}$ and $\mathbf{A}_{ab}(\mathbf{x}) = \mathbf{0}$ for all $1\le b < a\le l$ and $\mathbf{x}\in X$ but it is not possible in general.
Then, we should try to make $\mathbf{A}_{aa}(\mathbf{x}) \approx \mathbf{I}_{m_a}$ and $\mathbf{A}_{ab}(\mathbf{x}) \approx \mathbf{0}$ on some domain, e.g., $\mathrm{gr}(\Theta)$ in this paper, where existence of joint trajectories is guaranteed.
Indeed, the condition $\mathbf{A}_{ab}(\mathbf{x})\approx\mathbf{0}$ is crucial in order to overcome a limitation of the PIK problem.
To see this, observe from \ref{ass:lower_bound_of_k_a} that the lower bound of $k_a$ contains $\sum_{b=1}^{a-1}k_bM_{ab}\theta_b/(\theta_a'\omega_a)$ such that if $M_{ab}\gg0$, then $k_a$ tends to increase as $a$ increases.
It can be interpreted that if $M_{ab}\gg0$, then there is a large influence from higher priority tasks such that a large feedback gain $k_a$ is necessary to cancel it out.
It results that the upper bound of $\|\mathbf{u}(t,\mathbf{q}(t))\|$, defined as $M_u = M(M_p + (\sum_{a=1}^lk_a^2\theta_a^2)^{1/2})$, becomes large if $l$ becomes large.
Then, we can observe from \eqref{eqn:upper_bound_of_eta_for_phi_to_be_smaller_than_theta_for_finete_time_period} or \eqref{eqn:upper_bound_of_eta_for_phi_to_be_smaller_than_theta_for_all_time} that $\|P_{t_0}\|$ tends to decrease as $l$ increases.
However, we cannot decrease $\|P_{t_0}\|$ arbitrarily in most practical applications by various reasons, e.g., round-off error, limited computation time, etc.
Usually, $P_{t_0}$ is predefined as $\{t_0,t_0+\eta,t_0+2\eta,\dots\}$ with a fixed time step $\eta\in(0,\infty)$ in the real-time applications.
Thus, the number of tasks is restricted in most practical cases.
We may alleviate this limitation by preconditioning $\mathbf{F}_q$ properly. 
Our goal of preconditioning is to make $\mathbf{C}(\mathbf{x}) \approx \mathbf{I}_m$ such that $\mathbf{A}_{ab}(\mathbf{x}) \approx\mathbf{0}$ for $1\le b<a\le l$.
To measure how a nonzero square matrix is close to a diagonal matrix, we define a function $\mathrm{dn}:\{\text{all nonzero square matrices}\}\to[0,1]$, which we call {\it diagonalization number}, as
\begin{equation*}
\mathrm{dn}([m_{ij}]) = \frac{\sum_am_{aa}^2}{\sum_{a,b}m_{ab}^2} = \frac{\sum_am_{aa}^2}{\sum_a\sigma_a^2([m_{ij}])}.
\end{equation*}
Obviously, $[m_{ij}] \to \mathbf{I}$ as $\mathrm{dn}([m_{ij}]) \to 1$ and $m_{ii}\to1$ for all $i$.
We denote the range space and the null space of matrices as $\mathcal{R}(\cdot)$ and $\mathcal{N}(\cdot)$, respectively.

\subsection{Analysis of Preconditioning}

Let $\bar{\mathbf{J}}\in\mathbb{R}^{m\times n}$ with $m\le n$ and $r = \mathrm{rank}(\bar{\mathbf{J}})$. 
Let $\bar{\mathbf{J}} = \bar{\mathbf{C}}\hat{\bar{\mathbf{J}}}$ be the reduced QR decomposition of $\bar{\mathbf{J}}^T$ given by \cite[Lemma 1]{An2019} and $\bar{\mathbf{J}} = \mathbf{U}\bar{\boldsymbol{\Sigma}}\bar{\mathbf{V}}^T = \mathbf{U}\begin{bmatrix} \bar{\boldsymbol{\Sigma}}_m & \mathbf{0} \end{bmatrix}\begin{bmatrix} \bar{\mathbf{V}}_m & \bar{\mathbf{V}}_{n-m} \end{bmatrix}^T = \mathbf{U}\bar{\boldsymbol{\Sigma}}_m\bar{\mathbf{V}}_m^T$ be the singular value decomposition satisfying $\mathcal{R}(\hat{\bar{\mathbf{J}}}^T) = \mathcal{R}(\bar{\mathbf{V}}_m)$ where $\mathbf{U},\bar{\boldsymbol{\Sigma}}_m,\bar{\mathbf{C}} = [\bar{c}_{ij}]\in\mathbb{R}^{m\times m}$, $\bar{\boldsymbol{\Sigma}},\bar{\mathbf{V}}_m^T,\hat{\bar{\mathbf{J}}}\in\mathbb{R}^{m\times n}$, and $\bar{\mathbf{V}}\in\mathbb{R}^{n\times n}$. 
Such singular value decomposition always exists because we can freely choose right-singular vectors that correspond to the zero singular values from $\mathcal{N}(\bar{\mathbf{J}})$. 
Let $\bar{\mathbf{V}}_C = \hat{\bar{\mathbf{J}}}\bar{\mathbf{V}}_m$. 
Since $\mathcal{R}(\hat{\bar{\mathbf{J}}}^T) = \mathcal{R}(\bar{\mathbf{V}}_m)$ is assumed, we have $\bar{\mathbf{V}}_C\bar{\mathbf{V}}_C^T = \bar{\mathbf{V}}_C^T\bar{\mathbf{V}}_C = \mathbf{I}_m$ and find the singular value decomposition $\bar{\mathbf{C}} = \mathbf{U}\bar{\boldsymbol{\Sigma}}_m\bar{\mathbf{V}}_C^T$. 

Let $w \in (0,\infty)$. 
We define the preconditioner $\mathbf{R}\in\mathbb{R}^{n\times n}$ from the Cholesky decomposition $\mathbf{W} = \bar{\mathbf{J}}^T\bar{\mathbf{J}} + w^2\mathbf{I}_n = \mathbf{R}^T\mathbf{R}$ as in \cite{An2014} where $\mathbf{W}$ is symmetric and positive definite and $\mathbf{R}$ is upper triangular with positive diagonals. 
Define $\mathbf{J} = \bar{\mathbf{J}}\mathbf{R}^{-1}$ and let $\mathbf{R} = \mathbf{U}_R\boldsymbol{\Sigma}_R\mathbf{V}_R^T$ be the singular value decomposition. 
Then, we find the eigenvalue decomposition $\mathbf{W} = \bar{\mathbf{V}}(\bar{\boldsymbol{\Sigma}}^T\bar{\boldsymbol{\Sigma}} + w^2 \mathbf{I}_n)\bar{\mathbf{V}}^T = \mathbf{V}_R\boldsymbol{\Sigma}_R^2\mathbf{V}_R^T$. 
Since eigenvalues of $\mathbf{W}$ are listed in decreasing order, we have
\begin{equation*}
\boldsymbol{\Sigma}_R = (\bar{\boldsymbol{\Sigma}}^T\bar{\boldsymbol{\Sigma}} + w^2\mathbf{I}_n)^{1/2} = \begin{bmatrix} (\bar{\boldsymbol{\Sigma}}_m^2 + w^2\mathbf{I}_m)^{1/2} & \mathbf{0} \\ \mathbf{0} & w\mathbf{I}_{n-m} \end{bmatrix}.
\end{equation*}
Let $\sigma_{R,a}$ be the $a$-th diagonal entry of $\boldsymbol{\Sigma}_R$ and $\bar{\mathbf{v}}_a$ and $\mathbf{v}_{R,a}$ be the $a$-th columns of $\bar{\mathbf{V}}$ and $\mathbf{V}_R$, respectively. 
If two consecutive eigenvalues of $\mathbf{W}$ are distinctive ($\sigma_{R,a}^2 \neq \sigma_{R,a+1}^2$), then we have $\begin{bmatrix} \bar{\mathbf{v}}_a & \bar{\mathbf{v}}_{a+1} \end{bmatrix}^T\begin{bmatrix} \mathbf{v}_{R,a} & \mathbf{v}_{R,a+1} \end{bmatrix} = \mathrm{diag}(\pm1,\pm1)$ because the eigenspaces that correspond to different eigenvalues are orthogonal \cite{Trefethen1997}. 
If there are repeated eigenvalues ($\sigma_{R,a}^2 = \sigma_{R,a+1}^2$), then we can write $\mathrm{diag}(\sigma_{R,a}^{-1},\sigma_{R,a+1}^{-1}) = \sigma_{R,a}^{-1}\mathbf{I}_2$. 
From these properties, we can derive $(\bar{\mathbf{V}}^T\mathbf{V}_R)\boldsymbol{\Sigma}_R^{-1} = \boldsymbol{\Sigma}_R^{-1}(\bar{\mathbf{V}}^T\mathbf{V}_R)$ and $\mathbf{J} = \bar{\mathbf{J}}\mathbf{R}^{-1} = \mathbf{U}(\bar{\boldsymbol{\Sigma}}\boldsymbol{\Sigma}_R^{-1})(\mathbf{U}_R\mathbf{V}_R^T\bar{\mathbf{V}})^T$. 
Define $\boldsymbol{\Sigma} = \bar{\boldsymbol{\Sigma}}\boldsymbol{\Sigma}_R^{-1}$ and $\mathbf{V} = \mathbf{U}_R\mathbf{V}_R^T\bar{\mathbf{V}}$. 
Let $\bar{\sigma}_a$ and $\sigma_a$ be the $a$-th diagonal entries of $\bar{\boldsymbol{\Sigma}}$ and $\boldsymbol{\Sigma}$, respectively. 
Then, $\sigma_a = \bar{\sigma}_a/\sqrt{\bar{\sigma}_a^2 + w^2}$ is nonnegative and monotonically increasing with respect to $\bar{\sigma}_a$. 
Since $\mathbf{V}$ is orthogonal, we have the singular value decomposition $\mathbf{J} = \mathbf{U}\boldsymbol{\Sigma}\mathbf{V}^T = \mathbf{U}\begin{bmatrix} \boldsymbol{\Sigma}_m & \mathbf{0} \end{bmatrix} \begin{bmatrix} \mathbf{V}_m & \mathbf{V}_{n-m} \end{bmatrix}^T = \mathbf{U}\boldsymbol{\Sigma}_m\mathbf{V}_m^T$ where $\boldsymbol{\Sigma}_m = \bar{\boldsymbol{\Sigma}}_m(\bar{\boldsymbol{\Sigma}}_m^2+w^2\mathbf{I}_m)^{-1/2}$.

Let $\mathbf{J} = \mathbf{C}\hat{\mathbf{J}}$ be the reduced QR decomposition of $\mathbf{J}^T$ given by \cite[Lemma 1]{An2019} where $\mathbf{C}=[c_{ij}]\in\mathbb{R}^{m\times m}$ and $\hat{\mathbf{J}}\in\mathbb{R}^{m\times n}$. 
$\bar{\mathbf{C}}$ and $\mathbf{C}$ have same zero diagonal entries because right multiplying $\bar{\mathbf{J}}$ by $\mathbf{R}^{-1}$ does not change linear dependence of two rows of $\bar{\mathbf{J}}$. 
Construct $\mathbf{C}_r\in\mathbb{R}^{m\times r}$ by removing zero columns from $\mathbf{C}$ and $\hat{\mathbf{J}}_r\in\mathbb{R}^{r\times n}$ by removing rows from $\hat{\mathbf{J}}$ that correspond to zero columns of $\mathbf{C}$ such that $\mathbf{J} = \mathbf{C}_r\hat{\mathbf{J}}_r$. 
Let $\mathbf{U}_r$ be the left $(m\times r)$ block of $\mathbf{U}$; $\bar{\boldsymbol{\Sigma}}_r$ and $\boldsymbol{\Sigma}_r$ be the top left $(r\times r)$ blocks of $\bar{\boldsymbol{\Sigma}}$ and $\boldsymbol{\Sigma}$, respectively; and $\mathbf{V}_r$ be the left $(n\times r)$ block of $\mathbf{V}$ such that $\mathbf{J} = \mathbf{U}_r\boldsymbol{\Sigma}_r\mathbf{V}_r^T$ and $\boldsymbol{\Sigma}_r = \bar{\boldsymbol{\Sigma}}_r(\bar{\boldsymbol{\Sigma}}_r^2 + w^2\mathbf{I}_r)^{1/2}$. 
Then, $\mathbf{C}_r = \mathbf{U}_r\boldsymbol{\Sigma}_r(\hat{\mathbf{J}}_r\mathbf{V}_r)^T$. 
Since $\mathcal{R}(\mathbf{J}^T) = \mathcal{R}(\hat{\mathbf{J}}_r^T) = \mathcal{R}(\mathbf{V}_r)$, $\hat{\mathbf{J}}_r\mathbf{V}_r$ is orthogonal and we can derive
\begin{align*}
\mathbf{C}\mathbf{C}^T &= \mathbf{C}_r\mathbf{C}_r^T \\
&= \mathbf{U}_r\bar{\boldsymbol{\Sigma}}_r(\bar{\boldsymbol{\Sigma}}_r^2 + w^2\mathbf{I}_r)^{-1}\bar{\boldsymbol{\Sigma}}_r\mathbf{U}_r^T \\
&= \mathbf{U}\bar{\boldsymbol{\Sigma}}_m(\bar{\boldsymbol{\Sigma}}_m^2 + w^2\mathbf{I}_m)^{-1}\bar{\boldsymbol{\Sigma}}_m\mathbf{U}^T \\
&= \bar{\mathbf{C}}(\bar{\mathbf{C}}^T\bar{\mathbf{C}} + w^2\mathbf{I}_m)^{-1}\bar{\mathbf{C}}^T.
\end{align*}
Since $\bar{\mathbf{C}}^T\bar{\mathbf{C}} + w^2\mathbf{I}_m$ is symmetric and positive definite, we can consider the reverse Cholesky decomposition $\bar{\mathbf{C}}^T\bar{\mathbf{C}} + w^2\mathbf{I}_m = \tilde{\mathbf{C}}^T\tilde{\mathbf{C}}$ where $\tilde{\mathbf{C}}=[\tilde{c}_{ij}]\in\mathbb{R}^{m\times m}$ is lower triangular with positive diagonals. 
Then, we have $\mathbf{C}\mathbf{C}^T = (\bar{\mathbf{C}}\tilde{\mathbf{C}}^{-1})(\bar{\mathbf{C}}\tilde{\mathbf{C}}^{-1})^T$. 
Observe that if $\bar{c}_{aa} = 0$, then all entries of the $a$-th column and row of $\bar{\mathbf{C}}^T\bar{\mathbf{C}} + w^2\mathbf{I}_m$ are zeros except for the $a$-th diagonal that is $w^2$, so all entries of the $a$-th columns of $\tilde{\mathbf{C}}$ and $\tilde{\mathbf{C}}^{-1}$ are zeros except for $a$-th diagonals that are $w$ for $\tilde{\mathbf{C}}$ and $1/w$ for $\tilde{\mathbf{C}}^{-1}$. 

We prove $\mathbf{C} = \bar{\mathbf{C}}\tilde{\mathbf{C}}^{-1}$ by induction. 
It is obvious if $m = 1$. 
Assume that it holds for some $m - 1 \ge 1$. 
Let $\mathbf{A} = \begin{bmatrix} \mathbf{a}_1 & \cdots & \mathbf{a}_{m-1}\end{bmatrix}\in\mathbb{R}^{(m-1)\times(m-1)}$ and $\mathbf{b} = (b_1,\dots,b_{m-1})\in\mathbb{R}^{1\times(m-1)}$ be such that $\mathbf{C} = \begin{bmatrix} \mathbf{A} & \mathbf{0} \\ \mathbf{b} & c_{mm} \end{bmatrix}$. 
$\bar{\mathbf{A}}$, $\bar{\mathbf{b}}$, $\tilde{\mathbf{A}}$, and $\tilde{\mathbf{b}}$ have the same meaning for $\bar{\mathbf{C}}$ and $\tilde{\mathbf{C}}$. 
From $\mathbf{C}\mathbf{C}^T = (\bar{\mathbf{C}}\tilde{\mathbf{C}}^{-1})(\bar{\mathbf{C}}\tilde{\mathbf{C}}^{-1})^T$ and
\begin{align*}
\bar{\mathbf{C}}\tilde{\mathbf{C}}^{-1} 
&= \begin{bmatrix} \bar{\mathbf{A}} & \mathbf{0} \\ \bar{\mathbf{b}} & c_{mm} \end{bmatrix} \begin{bmatrix} \tilde{\mathbf{A}}^{-1} & \mathbf{0} \\ \frac{1}{\tilde{c}_{mm}}\tilde{\mathbf{b}}\tilde{\mathbf{A}}^{-1} & \frac{1}{\tilde{c}_{mm}} \end{bmatrix} \\
&= \begin{bmatrix} \bar{\mathbf{A}}\tilde{\mathbf{A}}^{-1} & \mathbf{0} \\ \left(\bar{\mathbf{b}} + \frac{\bar{c}_{mm}}{\tilde{c}_{mm}}\tilde{\mathbf{b}}\right)\tilde{\mathbf{A}}^{-1} & \frac{\bar{c}_{mm}}{\tilde{c}_{mm}} \end{bmatrix},
\end{align*}
we find the next three relations:
\begin{align*}
\mathbf{A}\mathbf{A}^T &= (\bar{\mathbf{A}}\tilde{\mathbf{A}}^{-1})(\bar{\mathbf{A}}\tilde{\mathbf{A}}^{-1})^T \\
\mathbf{b}\mathbf{A}^T &= \hat{\mathbf{b}}(\bar{\mathbf{A}}\tilde{\mathbf{A}}^{-1})^T \\
\mathbf{b}\mathbf{b}^T + c_{mm}^2 &= \hat{\mathbf{b}}\hat{\mathbf{b}}^T + \frac{\bar{c}_{mm}^2}{\tilde{c}_{mm}^2}
\end{align*}
where $\hat{\mathbf{b}} = (\hat{b}_1,\dots,\hat{b}_{m-1}) = \left(\bar{\mathbf{b}} + \frac{\bar{c}_{mm}}{\tilde{c}_{mm}}\tilde{\mathbf{b}}\right)\tilde{\mathbf{A}}^{-1}$. 
By the assumption, the first and second relations give $\mathbf{A} = \bar{\mathbf{A}}\tilde{\mathbf{A}}^{-1}$ and $\mathbf{A}(\mathbf{b} - \hat{\mathbf{b}})^T = \sum_{a=1}^{m-1}\mathbf{a}_a(b_a-\hat{b}_a) = \mathbf{0}$. 
If $c_{aa} = 0$ for some $a\in\overline{1,m-1}$, then $b_a = \bar{b}_a = \tilde{b}_a = \hat{b}_a = 0$ because all entries of the $a$-th columns of $\mathbf{C}$ and $\bar{\mathbf{C}}$ are zeros and all entries of the $a$-th columns of $\tilde{\mathbf{C}}$ and $\tilde{\mathbf{C}}^{-1}$ are zeros except for the $a$-th diagonal. 
So, we find $\mathbf{b} = \hat{\mathbf{b}}$ from
\begin{equation*}
\mathbf{A}(\mathbf{b} - \hat{\mathbf{b}})^T = \sum_{\substack{c_{aa}\neq 0\\a\in\overline{1,m-1}}} (b_a-\hat{b}_a)\mathbf{a}_a = \mathbf{0}
\end{equation*}
because $\{\mathbf{a}_a\mid c_{aa} \neq 0,\,a\in\overline{1,m-1}\}$ is linearly independent. 
Then, the third relation gives $c_{mm} = \bar{c}_{mm}/\tilde{c}_{mm}$. 
Therefore, $\mathbf{C} = \bar{\mathbf{C}}\tilde{\mathbf{C}}^{-1}$.

We find bounds and tendencies of $c_{aa}^2$ and $\mathrm{dn}(\mathbf{C})$ when $w\to\infty$ or $w\to0$.
Let $\tilde{\mathbf{C}} = \tilde{\mathbf{U}}\tilde{\boldsymbol{\Sigma}}\tilde{\mathbf{V}}^T$ be the singular value decomposition. 
Then, $\mathbf{W}_C = \bar{\mathbf{C}}^T\bar{\mathbf{C}} + w^2\mathbf{I}_m = \tilde{\mathbf{C}}^T\tilde{\mathbf{C}}$ can be written as $\mathbf{W}_C = \bar{\mathbf{V}}_C(\bar{\boldsymbol{\Sigma}}_m^2 + w^2\mathbf{I}_m)\bar{\mathbf{V}}_C^T = \tilde{\mathbf{V}}\tilde{\boldsymbol{\Sigma}}^2\tilde{\mathbf{V}}^T$ using $\bar{\mathbf{C}} = \mathbf{U}\bar{\boldsymbol{\Sigma}}_m\bar{\mathbf{V}}_C^T$. 
Similarly as before, we find $\tilde{\boldsymbol{\Sigma}} = (\bar{\boldsymbol{\Sigma}}_m^2 + w^2\mathbf{I}_m)^{1/2}$ and the singular value decomposition of $\mathbf{C} = \bar{\mathbf{C}}\tilde{\mathbf{C}}^{-1}$ as $\mathbf{C} = \mathbf{U}(\bar{\boldsymbol{\Sigma}}_m\tilde{\boldsymbol{\Sigma}}^{-1})(\tilde{\mathbf{U}}\tilde{\mathbf{V}}^T\bar{\mathbf{V}}_C)^T = \mathbf{U}\boldsymbol{\Sigma}_m\mathbf{V}_C^T$. 
Let $\bar{\mathbf{C}}_{a:a',b:b'}$ and $\tilde{\mathbf{C}}_{a:a',b:b'}$ be the blocks of $\bar{\mathbf{C}}$ and $\tilde{\mathbf{C}}$ with the top left entries $\bar{c}_{a,b}$ and $\tilde{c}_{a,b}$ and the bottom right entries $\bar{c}_{a',b'}$ and $\tilde{c}_{a',b'}$, respectively. 
By expanding $\bar{\mathbf{C}}^T\bar{\mathbf{C}} + w^2\mathbf{I}_m = \tilde{\mathbf{C}}^T\tilde{\mathbf{C}}$, we find a set of relations
\begin{align*}
\tilde{c}_{aa}^2 &= \bar{c}_{aa}^2 + w^2 + \|\bar{\mathbf{d}}_a\|^2 - \|\tilde{\mathbf{d}}_a\|^2 \\
\tilde{\mathbf{C}}_a^T\tilde{\mathbf{d}}_a &= \bar{\mathbf{C}}_a^T\bar{\mathbf{d}}_a \\
\tilde{\mathbf{C}}_a^T\tilde{\mathbf{C}}_a &= \bar{\mathbf{C}}_a^T\bar{\mathbf{C}}_a + w^2\mathbf{I}_{m-a}
\end{align*}
for $a\in\overline{1,m}$ where $\bar{\mathbf{d}}_a = \bar{\mathbf{C}}_{a+1:m,a:a}$, $\tilde{\mathbf{d}}_a = \tilde{\mathbf{C}}_{a+1:m,a:a}$, $\bar{\mathbf{C}}_a = \bar{\mathbf{C}}_{a+1:m,a+1:m}$, $\tilde{\mathbf{C}}_a = \tilde{\mathbf{C}}_{a+1:m,a+1:m}$, and $\bar{\mathbf{d}}_m = \tilde{\mathbf{d}}_m = \bar{\mathbf{C}}_m = \tilde{\mathbf{C}}_m = \mathbf{I}_0 = 0$. 
Let $\bar{\mathbf{C}}_a = \bar{\mathbf{U}}_{C_a}\bar{\boldsymbol{\Sigma}}_{C_a}\bar{\mathbf{V}}_{C_a}^T$ and $\tilde{\mathbf{C}}_a = \tilde{\mathbf{U}}_{C_a}\tilde{\boldsymbol{\Sigma}}_{C_a}\tilde{\mathbf{V}}_{C_a}^T$ be the singular value decompositions. 
Similarly as before, we find $\tilde{\boldsymbol{\Sigma}}_{C_a} = (\bar{\boldsymbol{\Sigma}}_{C_a}^2 + w^2\mathbf{I}_{m-a})^{1/2}>0$ from the third relation. 
So, $\tilde{\mathbf{C}}_a^T$ is invertible and we can rewrite the second relation as $\tilde{\mathbf{d}}_a = \left( \bar{\mathbf{U}}_{C_a}(\bar{\boldsymbol{\Sigma}}_{C_a}\tilde{\boldsymbol{\Sigma}}_{C_a}^{-1})(\tilde{\mathbf{U}}_{C_a}\tilde{\mathbf{V}}_{C_a}^T\bar{\mathbf{V}}_{C_a})^T\right)^T\bar{\mathbf{d}}_a$ and find the bounds of $\nu_a^2 = \|\bar{\mathbf{d}}_a\|^2 - \|\tilde{\mathbf{d}}_a\|^2$ as
\begin{equation}
\label{eqn:bounds_of_nu}
\underline{\nu}_a^2 = \frac{w^2\|\bar{\mathbf{d}}_a\|^2}{\sigma_{\max}^2(\bar{\mathbf{C}}_a) +w^2} \le \nu_a^2 \le \frac{w^2\|\bar{\mathbf{d}}_a\|^2}{\sigma_{\min}^2(\bar{\mathbf{C}}_a) +w^2} = \overline{\nu}_a^2.
\end{equation}
From $\mathbf{C} = \bar{\mathbf{C}}\tilde{\mathbf{C}}^{-1}$ and the first relation, we can write $c_{aa}^2 = \bar{c}_{aa}^2/\tilde{c}_{aa}^2 = \bar{c}_{aa}^2/(\bar{c}_{aa}^2 + w^2 + \nu_a^2)$. 
Then, we have
\begin{equation*}
\mathrm{dn}(\mathbf{C}) = \frac{\sum_{a=1}^mc_{aa}^2}{\sum_{a=1}^m\sigma_a^2} = \frac{\sum_{a=1}^m\frac{\bar{c}_{aa}^2}{\bar{c}_{aa}^2+w^2+\nu_a^2}}{\sum_{a=1}^m\frac{\bar{\sigma}_a^2}{\bar{\sigma}_a^2+w^2}}
\end{equation*}
where $\sigma_a$ and $\bar{\sigma}_a$ be the $a$-th singular values of $\mathbf{C}$ and $\bar{\mathbf{C}}$, respectively.
We can find the bounds of $c_{aa}^2$ from \eqref{eqn:bounds_of_nu} as
\begin{align*}
\alpha_a^2
= \frac{\bar{c}_{aa}^2}{\bar{c}_{aa}^2 + w^2 + \overline{\nu}_a^2}
\le c_{aa}^2
\le \frac{\bar{c}_{aa}^2}{\bar{c}_{aa}^2 + w^2 + \underline{\nu}_a^2}
= \beta_a^2
\end{align*}
and the bounds of $\mathrm{dn}(\mathbf{C})$ as
\begin{equation*}
\frac{1}{\rho^2}\sum_{a=1}^m\alpha_a^2 \le \mathrm{dn}(\mathbf{C}) \le \frac{1}{\rho^2}\sum_{a=1}^m\beta_a^2
\end{equation*}
where $\rho^2 = \sum_{a=1}^m\bar{\sigma}_a^2/(\bar{\sigma}_a^2+w^2)$.
Since $\nu_a^2 \to \|\bar{\mathbf{d}}_a\|^2$ and $\nu_a^2/w^2\to0$ as $w\to\infty$, we have
\begin{align*}
\lim_{w\to\infty}c_{aa}^2 &= \lim_{w\to\infty}\frac{\bar{c}_{aa}^2}{\bar{c}_{aa}^2 + w^2 + \nu_a^2} = 0 \\
\lim_{w\to\infty}\mathrm{dn}(\mathbf{C}) &= \lim_{w\to\infty}\frac{\sum_{a=1}^m\frac{\bar{c}_{aa}^2}{\bar{c}_{aa}^2/w^2 + 1 + \nu_a^2/w^2}}{\sum_{a=1}^m\frac{\bar{\sigma}_a^2}{\bar{\sigma}_a^2/w^2 + 1}} = \mathrm{dn}(\bar{\mathbf{C}}).
\end{align*}
If $r = m$, then $\bar{c}_{aa}>0$ and $\sigma_{\min}(\bar{\mathbf{C}}_a) > 0$ that leads to $\nu_a^2\to 0$ as $w\to0$, so we have
\begin{align*}
\lim_{w\to\infty}c_{aa}^2 &= \frac{\bar{c}_{aa}^2}{\bar{c}_{aa}^2} = 1 \\
\lim_{w\to\infty}\mathrm{dn}(\mathbf{C}) &= \frac{\sum_{a=1}^m\bar{c}_{aa}^2/\bar{c}_{aa}^2}{\sum_{a=1}^m\bar{\sigma}_a^2/\bar{\sigma}_a^2} = 1.
\end{align*}
We summarize analysis results of preconditioning as follows:
\begin{theorem}
	\label{thm:preconditioning}
	Let $\bar{\mathbf{J}} = \bar{\mathbf{C}}\hat{\bar{\mathbf{J}}}\in\mathbb{R}^{m\times n}$ be the reduced QR decomposition given by \cite[Lemma 1]{An2019} with $m\le n$; $\bar{\mathbf{J}}^T\bar{\mathbf{J}} + w^2\mathbf{I}_n = \mathbf{R}^T\mathbf{R}$ be the Cholesky decomposition with $w\in(0,\infty)$ and an upper triangular matrix $\mathbf{R}\in\mathbb{R}^{n\times n}$ whose diagonals are positive; and $\mathbf{J} = \bar{\mathbf{J}}\mathbf{R}^{-1} = \mathbf{C}\hat{\mathbf{J}}$ be the reduced QR decomposition given by \cite[Lemma 1]{An2019}. 
	Then, the preconditioning of $\bar{\mathbf{J}}$ by $\mathbf{R}$ has the following properties:
	\begin{enumerate}
		\item $\alpha_a^2 \le c_{aa}^2 \le \beta_a^2$ and $\displaystyle\frac{1}{\rho^2}\sum_{a=1}^m\alpha_a^2 \le \mathrm{dn}(\mathbf{C}) \le \frac{1}{\rho^2}\sum_{a=1}^m\beta_a^2$;
		\item $c_{aa}^2\to0$ and $\mathrm{dn}(\mathbf{C})\to\mathrm{dn}(\bar{\mathbf{C}})$ as $w\to\infty$;
		\item if $\mathrm{rank}(\bar{\mathbf{J}}) = m$, then $c_{aa}^2\to1$ and $\mathrm{dn}(\mathbf{C})\to1$ as $w\to0$.
	\end{enumerate}
	where $\bar{c}_{aa}$ and $c_{aa}$ are the $a$-th diagonals of $\bar{\mathbf{C}}$ and $\mathbf{C}$, respectively, $\bar{\sigma}_a$ is the $a$-th singular value of $\bar{\mathbf{J}}$, $\overline{\nu}_a$ and $\underline{\nu}_a$ are as in \eqref{eqn:bounds_of_nu}, $\alpha_a^2 = \bar{c}_{aa}^2/(\bar{c}_{aa}^2 + w^2 + \overline{\nu}_a^2)$, $\beta_a^2 = \bar{c}_{aa}^2/(\bar{c}_{aa}^2+w^2+\underline{\nu}_a^2)$, and $\rho^2=\sum_{a=1}^m\bar{\sigma}_a^2/(\bar{\sigma}_a^2+w^2)$.
\end{theorem}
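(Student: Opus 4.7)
The plan is to follow the analysis laid out in the preceding subsection and reorganize it as a formal proof of the three claims. The core of the argument is to express $\mathbf{C}$ directly in terms of $\bar{\mathbf{C}}$ and the regularization parameter $w$, from which the diagonals and singular values can be bounded and their limits computed.

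First I would connect $\mathbf{J}$ and $\bar{\mathbf{J}}$ at the singular value decomposition level. From the Cholesky decomposition $\mathbf{W} = \bar{\mathbf{J}}^T\bar{\mathbf{J}} + w^2\mathbf{I}_n = \mathbf{R}^T\mathbf{R}$ and the SVD $\bar{\mathbf{J}} = \mathbf{U}\bar{\boldsymbol{\Sigma}}\bar{\mathbf{V}}^T$ (with right-singular vectors in $\mathcal{N}(\bar{\mathbf{J}})$ chosen compatibly with $\mathcal{R}(\hat{\bar{\mathbf{J}}}^T) = \mathcal{R}(\bar{\mathbf{V}}_m)$), I would obtain $\boldsymbol{\Sigma}_R = (\bar{\boldsymbol{\Sigma}}^T\bar{\boldsymbol{\Sigma}} + w^2\mathbf{I}_n)^{1/2}$ and observe that $\bar{\mathbf{V}}^T\mathbf{V}_R$ commutes with $\boldsymbol{\Sigma}_R^{-1}$ (because eigenspaces of $\mathbf{W}$ for different eigenvalues are orthogonal, while repeated eigenvalues correspond to scalar multiples of the identity). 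This yields $\mathbf{J} = \bar{\mathbf{J}}\mathbf{R}^{-1} = \mathbf{U}\boldsymbol{\Sigma}\mathbf{V}^T$ with $\boldsymbol{\Sigma}_m = \bar{\boldsymbol{\Sigma}}_m(\bar{\boldsymbol{\Sigma}}_m^2 + w^2\mathbf{I}_m)^{-1/2}$, which already shows that $\mathbf{J}$ and $\bar{\mathbf{J}}$ share the same left-singular vectors and the singular values transform by $\sigma_a = \bar{\sigma}_a/\sqrt{\bar{\sigma}_a^2 + w^2}$.

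Next I would derive the key identity $\mathbf{C} = \bar{\mathbf{C}}\tilde{\mathbf{C}}^{-1}$, where $\tilde{\mathbf{C}}$ is the reverse Cholesky factor of $\bar{\mathbf{C}}^T\bar{\mathbf{C}} + w^2\mathbf{I}_m$. Starting from $\mathbf{C}\mathbf{C}^T = \bar{\mathbf{C}}(\bar{\mathbf{C}}^T\bar{\mathbf{C}} + w^2\mathbf{I}_m)^{-1}\bar{\mathbf{C}}^T = (\bar{\mathbf{C}}\tilde{\mathbf{C}}^{-1})(\bar{\mathbf{C}}\tilde{\mathbf{C}}^{-1})^T$, both $\mathbf{C}$ and $\bar{\mathbf{C}}\tilde{\mathbf{C}}^{-1}$ are lower triangular with nonnegative diagonals and share the same zero-column structure (because right-multiplication by $\mathbf{R}^{-1}$, and likewise by $\tilde{\mathbf{C}}^{-1}$, preserves linear dependence of rows of $\bar{\mathbf{J}}$ and $\bar{\mathbf{C}}$). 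Equality then follows by induction on $m$, peeling off the last row/column and using that the nonzero columns of $\mathbf{A} = \bar{\mathbf{A}}\tilde{\mathbf{A}}^{-1}$ are linearly independent; this induction step is the main technical obstacle, and I would carry it out exactly as in the paragraph preceding the theorem statement.

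With $\mathbf{C} = \bar{\mathbf{C}}\tilde{\mathbf{C}}^{-1}$ in hand, the first claim reduces to bounding $\tilde{c}_{aa}^2 = \bar{c}_{aa}^2 + w^2 + \|\bar{\mathbf{d}}_a\|^2 - \|\tilde{\mathbf{d}}_a\|^2$, obtained by block-expanding $\bar{\mathbf{C}}^T\bar{\mathbf{C}} + w^2\mathbf{I}_m = \tilde{\mathbf{C}}^T\tilde{\mathbf{C}}$. Using the relation $\tilde{\mathbf{C}}_a^T\tilde{\mathbf{d}}_a = \bar{\mathbf{C}}_a^T\bar{\mathbf{d}}_a$ together with $\tilde{\boldsymbol{\Sigma}}_{C_a} = (\bar{\boldsymbol{\Sigma}}_{C_a}^2 + w^2\mathbf{I}_{m-a})^{1/2}$, a straightforward norm estimate sandwiches $\nu_a^2 = \|\bar{\mathbf{d}}_a\|^2 - \|\tilde{\mathbf{d}}_a\|^2$ between $\underline{\nu}_a^2$ and $\overline{\nu}_a^2$ as in \eqref{eqn:bounds_of_nu}. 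Substituting into $c_{aa}^2 = \bar{c}_{aa}^2/(\bar{c}_{aa}^2 + w^2 + \nu_a^2)$ yields the stated inequalities $\alpha_a^2 \le c_{aa}^2 \le \beta_a^2$, and dividing the sums $\sum_a c_{aa}^2$ and $\sum_a \sigma_a^2 = \sum_a \bar{\sigma}_a^2/(\bar{\sigma}_a^2 + w^2)$ produces the bounds on $\mathrm{dn}(\mathbf{C})$.

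Finally, the limits follow by inspection of these closed-form expressions. For $w\to\infty$, $\nu_a^2$ stays bounded by $\|\bar{\mathbf{d}}_a\|^2$ so $\nu_a^2/w^2 \to 0$, giving $c_{aa}^2 \to 0$; for the diagonalization number I would divide numerator and denominator by $w^2$ to get $\mathrm{dn}(\mathbf{C}) \to (\sum_a \bar{c}_{aa}^2)/(\sum_a \bar{\sigma}_a^2) = \mathrm{dn}(\bar{\mathbf{C}})$. For $w\to 0$ under the assumption $\mathrm{rank}(\bar{\mathbf{J}}) = m$, every $\bar{c}_{aa}$ and every $\sigma_{\min}(\bar{\mathbf{C}}_a)$ is strictly positive, so $\overline{\nu}_a^2 \to 0$ and therefore $c_{aa}^2 \to \bar{c}_{aa}^2/\bar{c}_{aa}^2 = 1$, whence $\mathrm{dn}(\mathbf{C}) \to 1$. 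The hardest step remains the inductive identification $\mathbf{C} = \bar{\mathbf{C}}\tilde{\mathbf{C}}^{-1}$; everything else is algebra on explicit formulas.
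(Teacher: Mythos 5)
Your proposal is correct and follows the same route as the paper: the SVD link between $\mathbf{J}$ and $\bar{\mathbf{J}}$ via the Cholesky factor $\mathbf{R}$, the inductive identification $\mathbf{C}=\bar{\mathbf{C}}\tilde{\mathbf{C}}^{-1}$ with $\tilde{\mathbf{C}}$ the reverse Cholesky factor of $\bar{\mathbf{C}}^T\bar{\mathbf{C}}+w^2\mathbf{I}_m$, the block-expansion bounds on $\nu_a^2$, and the limits obtained by inspection of the resulting closed forms. Nothing is missing beyond the spelled-out induction step, which you correctly flag as the main technical content.
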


\begin{corollary}
	\label{cor:preconditioning_norm_bounds}
	Let $1\le a\le a'\le m$ and $1\le b\le b'\le m$. 
	The preconditioning given by Theorem \ref{thm:preconditioning} has the following properties:
	\begin{enumerate}
		\item $|c_{ab}| < 1$;
		\item $\|\mathbf{C}_{a:a',b:b'}\| < 1$;
		\item $\|\mathbf{C}_{a:a',b:b'}\|_1 < a'-a+1$;
		\item $\|\mathbf{C}_{a:a',b:b'}\|_\infty < b'-b+1$;
		\item $\|\mathbf{C}_{a:a',b:b'}\|_F < \min\{\sqrt{a'-a+1},\sqrt{b'-b+1}\}$.
	\end{enumerate}
\end{corollary}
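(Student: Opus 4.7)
The plan is to reduce all five bounds to the single observation that, because $w\in(0,\infty)$, every singular value of $\mathbf{C}$ computed inside the proof of Theorem~\ref{thm:preconditioning} satisfies $\sigma_a(\mathbf{C}) = \bar{\sigma}_a/\sqrt{\bar{\sigma}_a^2+w^2} < 1$, and in particular $\|\mathbf{C}\| = \sigma_{\max}(\mathbf{C}) < 1$. For any submatrix $\mathbf{M} = \mathbf{C}_{a:a',b:b'}$, the spectral norm contracts under row/column restriction: for any unit vector $\mathbf{x}\in\mathbb{R}^{b'-b+1}$, its zero-padding $\tilde{\mathbf{x}}\in\mathbb{R}^m$ satisfies $\|\mathbf{M}\mathbf{x}\| \le \|\mathbf{C}\tilde{\mathbf{x}}\| \le \|\mathbf{C}\|\|\tilde{\mathbf{x}}\| = \|\mathbf{C}\|$, so $\|\mathbf{M}\| \le \|\mathbf{C}\| < 1$. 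This yields property~(2); specializing to the $1\times 1$ submatrix $[c_{ab}]$ yields property~(1) (with the trivial case $a<b$ handled by the lower triangularity of $\mathbf{C}$).

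For properties~(3) and (4), I would exploit that each entry obeys $|c_{ij}| \le \|\mathbf{C}\| < 1$. The induced $1$-norm $\|\mathbf{M}\|_1$ is the maximum column sum, which contains at most $a'-a+1$ entries each strictly less than $1$, giving $\|\mathbf{M}\|_1 < a'-a+1$; the analogous row argument gives $\|\mathbf{M}\|_\infty < b'-b+1$. For property~(5), I would invoke $\|\mathbf{M}\|_F^2 = \sum_{i=1}^{k}\sigma_i^2(\mathbf{M})$ with $k = \min\{a'-a+1,\,b'-b+1\}$; since $\sigma_i(\mathbf{M}) \le \sigma_1(\mathbf{M}) = \|\mathbf{M}\| < 1$ for every $i\in\overline{1,k}$, a sum of $k$ terms each strictly less than $1$ produces $\|\mathbf{M}\|_F^2 < k$ and hence $\|\mathbf{M}\|_F < \sqrt{k}$. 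No real obstacle arises: the whole corollary follows from the strict inequality $\|\mathbf{C}\| < 1$ already established during the proof of Theorem~\ref{thm:preconditioning}, together with elementary relationships between the induced $p$-norms and the Frobenius norm, so the only thing to nail down is that the inequality remains strict as it propagates through the chain $\sigma_a(\mathbf{C})<1 \Rightarrow \|\mathbf{M}\|<1 \Rightarrow |c_{ij}|<1$, which is immediate from $w>0$.
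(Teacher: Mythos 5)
Your argument is correct and essentially matches the paper's: both hinge on the observation that every singular value of $\mathbf{C}$ equals $\bar{\sigma}_a/\sqrt{\bar{\sigma}_a^2+w^2}<1$ (so $\|\mathbf{C}\|<1$), and then bound the various submatrix norms by standard inequalities. The paper proves (1) and (2) by writing $c_{ab}=\mathbf{u}_a\boldsymbol{\Sigma}_m\mathbf{v}_{C,b}^T$ and $\mathbf{C}_{a:a',b:b'}=\mathbf{U}_{a:a',1:m}\boldsymbol{\Sigma}_m\mathbf{V}_{C,b:b',1:m}^T$ and using submultiplicativity, whereas you prove (2) by the zero-padding/restriction argument and specialize to $1\times1$ blocks for (1); these are interchangeable and the remaining properties follow in both cases from (1)--(2) by the definitions of the $1$-, $\infty$-, and Frobenius norms, exactly as you do.
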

\begin{proof}
	From the singular value decomposition $\mathbf{C} = \mathbf{U}\boldsymbol{\Sigma}_m\mathbf{V}_C^T$, we can write $c_{ab} = \mathbf{u}_a\boldsymbol{\Sigma}_m\mathbf{v}_{C,b}^T$ where $\mathbf{u}_a$ and $\mathbf{v}_{C,b}$ are the $a$-th and $b$-th rows of $\mathbf{U}$ and $\mathbf{V}_C$, respectively. 
	Then, $|c_{ab}| \le \|\mathbf{u}_a\|\|\boldsymbol{\Sigma}_m\|\|\mathbf{v}_{C,b}\| = \sigma_{\max}(\boldsymbol{\Sigma}_m) = \bar{\sigma}_1/\sqrt{\bar{\sigma}_1^2+w^2} < 1$. 
	Similarly, $\|\mathbf{C}_{a:a',b:b'}\| = \|\mathbf{U}_{a:a',1:m}\boldsymbol{\Sigma}_m\mathbf{V}_{C,b:b',1:m}^T\| \le \|\mathbf{U}_{a:a',1:m}\|\|\boldsymbol{\Sigma}_m\|\|\mathbf{V}_{C,b:b',1:m}\| < 1$. 
	The rest of the properties are directly obtained from the first two properties by definitions.
\end{proof}

Let $\bar{\mathbf{J}} = \mathbf{F}_q:X\to\mathbb{R}^{m\times n}$ with $m\le n$ be continuous on $X$ and $w\in(0,\infty)$. 
We recall that a matrix-valued function is continuous if and only if all entries are continuous.
Also, the pseudoinverse of a continuous matrix-valued function is continuous at a point if and only if the function has a local constant rank \cite{Stewart1969}.
Define a preconditioner function $\mathbf{R}_w:X\to\mathbb{R}^{n\times n}$ from the Cholesky decomposition $\mathbf{W}_w(\mathbf{x}) = (\bar{\mathbf{J}}^T\bar{\mathbf{J}})(\mathbf{x}) + w^2\mathbf{I}_n = (\mathbf{R}_w^T\mathbf{R}_w)(\mathbf{x})$ for each $\mathbf{x}\in X$ where $\mathbf{R}_w(\mathbf{x})$ is upper triangular with positive diagonals. 
Since the algorithm to find the Cholesky decomposition consists of continuous functions of continuous entries of $\mathbf{W}_w$, $\mathbf{R}_w$ is continuous on $X$ and so are $\mathbf{R}_w^{-1}$ and $\mathbf{J}_w = \bar{\mathbf{J}}\mathbf{R}_w^{-1}$. 
Define $\mathcal{G} = \{\mathbf{x}\in X\mid\mathrm{rank}(\bar{\mathbf{J}}(\mathbf{x})) = m\}$.
Observe that $\mathcal{G}\subset X$ is open because $\bar{\mathbf{J}}$ is continuous on $X$.
Since the preconditioning does not change linear dependence of rows of $\bar{\mathbf{J}}$, $\mathcal{G} = \{\mathbf{x}\in X\mid\mathrm{rank}(\mathbf{J}_w(\mathbf{x})) = m\}$ for all $w\in(0,\infty)$. 
If $\mathbf{x}\in\mathcal{G}$, then the reduced QR decompositions $\bar{\mathbf{J}}(\mathbf{x}) = (\bar{\mathbf{C}}\hat{\bar{\mathbf{J}}})(\mathbf{x})$ and $\mathbf{J}_w(\mathbf{x}) = (\mathbf{C}_w\hat{\mathbf{J}}_w)(\mathbf{x})$ can be found by the modified Gram-Schmidt orthogonalization that consists of continuous functions of continuous entries of $\bar{\mathbf{J}}$ and $\mathbf{J}_w$.
Thus, $\bar{\mathbf{C}} = [\bar{c}_{ij}]$ and $\mathbf{C}_w = [c_{w,ij}]$ are continuous on $\mathcal{G}$ for all $w\in(0,\infty)$.
The next corollary shows that $\mathbf{C}_w$ converges to $\mathbf{I}_m$ uniformly on any compact set $\Omega\subset \mathcal{G}$ as $w\to0$.

\begin{corollary}
	\label{cor:preconditioning}
	For every compact set $\Omega\subset\mathcal{G}$ and every $\epsilon>0$ there exists $w_0\in(0,\infty)$ such that if $w\in(0,w_0)$, $a\in\overline{1,m}$, and $\mathbf{x}\in\Omega$, then $|1 - c_{w,aa}^2(\mathbf{x})|<\epsilon$ and $|1-\mathrm{dn}(\mathbf{C}_w(\mathbf{x}))|<\epsilon$.
\end{corollary}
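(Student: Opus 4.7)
The plan is to upgrade the pointwise limits $c_{w,aa}^2\to1$ and $\mathrm{dn}(\mathbf{C}_w)\to1$ as $w\to0$, guaranteed by part (3) of Theorem \ref{thm:preconditioning}, into uniform limits on $\Omega$ by a compactness-and-continuity argument applied to the explicit lower bounds $\alpha_a^2$ and $\sum_a\alpha_a^2/\rho^2$ already produced by the theorem. The only thing I need to verify is that each of the ingredients appearing in those bounds admits a \emph{uniform} positive lower (or finite upper) bound over $\Omega$.

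First, I observe that on $\mathcal{G}$ the matrix $\bar{\mathbf{C}}(\mathbf{x})$ is lower triangular with strictly positive diagonal entries $\bar{c}_{aa}(\mathbf{x})>0$, because $\mathrm{rank}(\bar{\mathbf{J}}(\mathbf{x}))=m$. Consequently, for each $a\in\overline{1,m-1}$, the lower-right block $\bar{\mathbf{C}}_a(\mathbf{x})=\bar{\mathbf{C}}_{a+1:m,a+1:m}(\mathbf{x})$ is itself lower triangular with strictly positive diagonals and hence invertible, so $\sigma_{\min}(\bar{\mathbf{C}}_a(\mathbf{x}))>0$; similarly $\bar{\sigma}_a(\mathbf{x})>0$ for every $a\in\overline{1,m}$. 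Since $\bar{\mathbf{C}}$ and $\bar{\mathbf{J}}$ are continuous on $\mathcal{G}$ (the former noted in the paragraph immediately preceding the corollary), the maps $\mathbf{x}\mapsto\bar{c}_{aa}(\mathbf{x})$, $\mathbf{x}\mapsto\|\bar{\mathbf{d}}_a(\mathbf{x})\|$, $\mathbf{x}\mapsto\sigma_{\min}(\bar{\mathbf{C}}_a(\mathbf{x}))$, and $\mathbf{x}\mapsto\bar{\sigma}_a(\mathbf{x})$ are continuous on $\mathcal{G}$ (singular values depend continuously on matrix entries). By compactness of $\Omega$ I extract constants $c_0,s_0,\sigma_0\in(0,\infty)$ and $D_0\in[0,\infty)$ with
\begin{equation*}
\bar{c}_{aa}(\mathbf{x})\ge c_0,\quad \sigma_{\min}(\bar{\mathbf{C}}_a(\mathbf{x}))\ge s_0,\quad \bar{\sigma}_a(\mathbf{x})\ge\sigma_0,\quad \|\bar{\mathbf{d}}_a(\mathbf{x})\|^2\le D_0
\end{equation*}
uniformly over $a\in\overline{1,m}$ and $\mathbf{x}\in\Omega$ (with the usual convention that the $\bar{\mathbf{d}}_m$ and $\bar{\mathbf{C}}_m$ terms are vacuous).

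Plugging these bounds into the definitions from Theorem \ref{thm:preconditioning} and \eqref{eqn:bounds_of_nu} yields $\overline{\nu}_a^2\le w^2 D_0/s_0^2$ and therefore
\begin{equation*}
\alpha_a^2(\mathbf{x})\ge\frac{c_0^2}{c_0^2+w^2(1+D_0/s_0^2)},
\end{equation*}
which tends to $1$ as $w\to0$ at a rate that depends only on $c_0$, $s_0$, $D_0$. Combined with the upper bound $c_{w,aa}^2(\mathbf{x})\le1$ supplied by Corollary \ref{cor:preconditioning_norm_bounds}, this gives the first claim $|1-c_{w,aa}^2(\mathbf{x})|<\epsilon$ uniformly on $\Omega$ for all sufficiently small $w$. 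For the diagonalization number I use $\rho^2(\mathbf{x})=\sum_a\bar{\sigma}_a^2/(\bar{\sigma}_a^2+w^2)\le m$, so that
\begin{equation*}
1\ge\mathrm{dn}(\mathbf{C}_w(\mathbf{x}))\ge\frac{1}{\rho^2(\mathbf{x})}\sum_{a=1}^m\alpha_a^2(\mathbf{x})\ge\frac{1}{m}\sum_{a=1}^m\alpha_a^2(\mathbf{x}),
\end{equation*}
and the right-hand side converges uniformly to $1$ by the estimate already obtained on $\alpha_a^2$; the outer upper bound $\mathrm{dn}(\mathbf{C}_w)\le1$ follows from $\sum c_{w,aa}^2\le\|\mathbf{C}_w\|_F^2=\rho^2$. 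Choosing $w_0$ small enough to make both $1-c_0^2/(c_0^2+w_0^2(1+D_0/s_0^2))$ and $1-(1/m)\sum_a\alpha_a^2$ smaller than $\epsilon$ completes the argument.

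The only step that requires any care is the extraction of the uniform lower bound $s_0$ on $\sigma_{\min}(\bar{\mathbf{C}}_a)$, since this is where the full-rank hypothesis $\Omega\subset\mathcal{G}$ is genuinely used; once continuity and strict positivity of $\sigma_{\min}(\bar{\mathbf{C}}_a)$ on $\mathcal{G}$ are established, compactness of $\Omega$ does the rest, and the remaining manipulations are algebraic substitutions into the bounds already proved in Theorem \ref{thm:preconditioning}.
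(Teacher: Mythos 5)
Your proof is correct. It establishes the same uniformity-from-compactness conclusion, but it routes through the explicit bounds $\alpha_a^2$ from Theorem \ref{thm:preconditioning}, extracting four uniform constants on $\Omega$ (lower bounds on $\bar c_{aa}$, on $\sigma_{\min}(\bar{\mathbf{C}}_a)$, on $\bar\sigma_a$, and an upper bound on $\|\bar{\mathbf{d}}_a\|$) before substituting into $\alpha_a^2 = \bar c_{aa}^2/(\bar c_{aa}^2 + w^2 + \overline{\nu}_a^2)$ and $\overline{\nu}_a^2 \le w^2\|\bar{\mathbf{d}}_a\|^2/\sigma_{\min}^2(\bar{\mathbf{C}}_a)$. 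The paper's proof avoids $\alpha_a^2$ entirely: it extracts a single uniform constant $m_\Omega = \inf_{\mathbf{x}\in\Omega}\sigma_{\min}(\bar{\mathbf{C}}(\mathbf{x})) > 0$, observes that the singular values transform as $\sigma_a(\mathbf{C}_w) = \bar\sigma_a/\sqrt{\bar\sigma_a^2 + w^2}$ so that $\sigma_{\min}^2(\mathbf{C}_w) \ge m_\Omega^2/(m_\Omega^2+w^2)$, and then uses the fact that for a lower triangular matrix each diagonal entry is an eigenvalue, hence $c_{w,aa}^2 \ge \sigma_{\min}^2(\mathbf{C}_w)$; the $\mathrm{dn}$ estimate then follows from $\mathrm{dn}(\mathbf{C}_w)\ge\frac{1}{m}\sum_a c_{w,aa}^2$ exactly as you do. Both arguments are compactness-plus-continuity; yours reuses the machinery of Theorem \ref{thm:preconditioning} more literally, while the paper's sidesteps the $\nu_a^2$ and $\bar{\mathbf{d}}_a$ bookkeeping with the cleaner singular-value shortcut and a single constant, giving a shorter proof and an explicit choice $w_0 = m_\Omega(\epsilon/(1-\epsilon))^{1/2}$.
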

\begin{proof}
	Let $\Omega\subset\mathcal{G}$ be a compact set and $\epsilon\in(0,1)$ be arbitrary.
	Since $\bar{\mathbf{C}}$ is continuous and $\mathrm{rank}(\bar{\mathbf{C}}(\mathbf{x})) = m$ on the compact set $\Omega$, there exists $m_\Omega \in(0,\infty)$ satisfying $\sigma_{\min}(\bar{\mathbf{C}}(\mathbf{x}))\ge m_\Omega$ for all $\mathbf{x}\in\Omega$.
	Let $w_0 = m_\Omega(\epsilon/(1-\epsilon))^{1/2}$.
	Then,
	\begin{equation*}
	|1-c_{w,aa}^2(\mathbf{x})| \le 1 - \sigma_{\min}^2(\mathbf{C}_w(\mathbf{x})) \le 1 - \frac{m_\Omega^2}{m_\Omega^2 + w^2} < \epsilon
	\end{equation*}
	and
	\begin{equation*}
	|1-\mathrm{dn}(\mathbf{C}_w(\mathbf{x}))|
	\le \frac{1}{m}\sum_{a=1}^m(1 - c_{w,aa}^2(\mathbf{x}))
	< \epsilon
	\end{equation*}
	for all $w\in(0,w_0)$, $a\in\overline{1,m}$, and $\mathbf{x}\in\Omega$.
\end{proof}

\subsection{Effect of Preconditioning}

It would be interesting to see the effect of preconditioning on a specific PIK solution.
Let $\mathbf{u} = \mathbf{R}^{-1}\hat{\mathbf{J}}^T\mathbf{C}_D^T(\dot{\mathbf{p}}' + \mathbf{K}\mathbf{e})$; the proper objective function $\boldsymbol{\pi}$ that generates this PIK solution can be found in \cite{An2019}.
In many practical applications, we could reasonably assume that $\mathbf{f}$ is linearly bounded and twice differentiable; the derivatives of $\mathbf{f}$ are bounded; $\mathbf{f}(t,\cdot) = \mathbf{f}(0,\cdot)$ for all $t\in\mathbb{R}$; and $m = n$.
In such a case, we write $\mathbf{f}(\mathbf{q}) = \mathbf{f}(t,\mathbf{q})$ and $\mathbf{F}_q(\mathbf{q}) = \mathbf{F}_q(t,\mathbf{q})$ for the sake of simplicity in the notation.
We need to define a desired task trajectory $\mathbf{p}:\mathbb{R}\to\mathbb{R}^m$ and a set-valued map $\Theta:\mathbb{R}\to2^{\mathbb{R}^n}$ that satisfy \ref{ass:p_bounded_and_f_linearly_bounded}, \ref{ass:dp_F_invR_locally_Lipschitz_and_bounded}, and \ref{ass:set_valued_map_Theta}.
Let $\Theta_0\subset\mathbb{R}^n$ be a compact and convex set and $r_\Theta\in(0,\infty)$ be such that $\mathrm{int}(\Theta_0) \neq\emptyset$; $\mathbf{f}$ is one-to-one on $\Theta_0 + r_\Theta B_n$; and $\mathrm{rank}(\mathbf{F}_q(\mathbf{q})) = m$ for all $\mathbf{q}\in\Theta_0$.
Since $\mathbf{f}(\mathrm{int}(\Theta_0))$ is open \cite[Theorem 9.25]{Rudin1964}, there exist $\hat{\mathbf{p}} = (\hat{\mathbf{p}}_1,\dots,\hat{\mathbf{p}}_l)\in\mathbb{R}^m$ and $\theta\in(0,\infty)$ satisfying $\bigtimes_{a=1}^l(\hat{\mathbf{p}}_a + \theta B_{m_a})\subset\mathbf{f}(\Theta_0)$.
Since $\mathbf{f}$ is assumed to be one-to-one on $\Theta_0+r_\Theta B_n$, we have $\bigtimes_{a=1}^l(\hat{\mathbf{p}}_a+\theta B_{m_a})\cap\mathbf{f}((\Theta_0+r_\Theta B_n)\setminus\Theta_0) = \emptyset$.
Let $\mathbf{p}(t) \equiv \hat{\mathbf{p}}$ and $\Theta(t) \equiv \Theta_0$.

Let $\mathbf{F}_q = \bar{\mathbf{C}}\hat{\bar{\mathbf{J}}}$ be the reduced QR decomposition given by \cite[Lemma 1]{An2019}.
Since $\bar{\mathbf{C}}$ is continuous and $\mathrm{rank}(\bar{\mathbf{C}}(\mathbf{q})) = m$ on the compact set $\Theta_0$, there exist $M_\Theta,m_\Theta\in(0,\infty)$ satisfying $\|\mathbf{F}_q(\mathbf{q})\| = \|\bar{\mathbf{C}}(\mathbf{q})\|\le M_\Theta$ and $\sigma_{\min}(\bar{\mathbf{C}}(\mathbf{q}))\ge m_\Theta$ for all $\mathbf{q}\in\Theta_0$.
Then, $m_\Theta \le \bar{c}_{aa}(\mathbf{q})\le M_\Theta$ for all $a\in\overline{1,m}$ and $\mathbf{q}\in\Theta_0$ by the Weyl's product inequality.
Also, $m_\Theta\le\sigma_i(\bar{\mathbf{C}}_{aa}(\mathbf{q}))\le M_\Theta$ for all $a\in\overline{1,l}$, $i\in\overline{1,m_a}$, and $\mathbf{q}\in\Theta_0$ because $\bar{\mathbf{C}}_{aa}$ and $\bar{\mathbf{C}}_{aa}^{-1}$ are the $a$-th diagonal block of $\bar{\mathbf{C}}$ and $\bar{\mathbf{C}}^{-1}$, respectively.
Let $w\in(0,\infty)$ be arbitrary, $\mathbf{W} = \mathbf{F}_q^T\mathbf{F}_q + w^2\mathbf{I}_n = \mathbf{R}^T\mathbf{R}$ be the Cholesky decomposition, and $\mathbf{J} = \mathbf{F}_q\mathbf{R}^{-1} = \mathbf{C}\hat{\mathbf{J}}$ be the reduced QR decomposition given by \cite[Lemma 1]{An2019}.
Then, $1 > c_{aa}^2(\mathbf{q}), \sigma_a^2(\mathbf{C}(\mathbf{q})) \ge \sigma_{\min}^2(\mathbf{C}(\mathbf{q})) \ge \omega = m_\Theta^2/(m_\Theta^2+w^2)$ for all $a\in\overline{1,m}$ and $\mathbf{q}\in\Theta_0$.
It follows that $\sigma_{\min}(\mathbf{C}_{aa}(\mathbf{q})) \ge \sqrt{\omega}$ and $(\mathbf{A}_{aa} + \mathbf{A}_{aa}^T)(\mathbf{q}) - 2\omega\mathbf{I}_{m_a}\ge0$ for all $a\in\overline{1,l}$ and $\mathbf{q}\in\Theta_0$.
Obviously, $\mathbf{L} = \mathbf{I}_m$ is Lipschitz and bounded.
Let $\theta'\in(0,\theta)$, $t_0\in\mathbb{R}$, and $\mathbf{q}_0\in\Theta_0$ be such that $\|\hat{\mathbf{p}}_a - \mathbf{f}_a(\mathbf{q}_0)\|<\theta'$ for all $a\in\overline{1,l}$.
Let $L_\Theta\in(0,\infty)$ be such that $\|\mathsf{D}_q\mathbf{F}_q(\mathbf{q})\|\le L_\Theta$ for all $\mathbf{q}\in\Theta_0$ where $\mathsf{D}_q\mathbf{F}_q = \partial\mathbf{F}_q/\partial\mathbf{q}$.
Since $\Theta_0$ is convex, $\mathbf{F}_q$ is Lipschitz on $\Theta_0$ with the Lipschitz constant $L_\Theta$. 
From $\|\mathbf{A}_{ab}(\mathbf{q})\| \le \|\mathbf{C}_{ab}(\mathbf{q})\| \le \|\mathbf{C}(\mathbf{q})\|_F\sqrt{1-\mathrm{dn}(\mathbf{C}(\mathbf{q}))} \le \sqrt{m(1-\omega)}$ for all $a\in\overline{1,l}$, $b\in\overline{1,a-1}$, and $\mathbf{q}\in\Theta_0$,
one can easily check that \ref{ass:lower_bound_of_k_a} can be satisfied by letting $k_a = k(1 + \theta\sqrt{m(1-\omega)}/(\theta'\omega))^{a-1}$ for $a\in\overline{1,l}$ with an arbitrary $k\in(0,\infty)$. 
Therefore, there exists a unique $\mathbf{q}:[t_0,\infty)\to\mathbb{R}^n$ satisfying $\mathbf{q}(t_0) = \mathbf{q}_0$ and $\dot{\mathbf{q}}(t) = \mathbf{u}(\mathbf{q}(t))$ for all $t\in(t_0,\infty)$ such that $\|\hat{\mathbf{p}}_a - \mathbf{f}_a(\mathbf{q}(t))\|<\theta'$ for all $a\in\overline{1,l}$ and $t\in[t_0,\infty)$ by Theorem \ref{thm:task_convergence_and_stability_in_continuous_time}.

Since $\|(\mathbf{R}^{-1}\hat{\mathbf{J}}^T\mathbf{C}_D^T\mathbf{L})(\mathbf{q})\|\le (m_\Theta^2 + w^2)^{-1/2}$ for all $\mathbf{q}\in\mathbb{R}^n$ and $k_a\to k$ as $w\to0$ for all $a\in\overline{1,l}$, 
we have $\|\mathbf{u}(\mathbf{q})\| \le M_u = \theta(\sum_{a=1}^lk_a^2)^{1/2}/\sqrt{m_\Theta^2+w^2} \to \theta k\sqrt{l}/m_\Theta$ as $w\to0$ for all $\mathbf{q}\in\hat{\Theta}_0 = \{\mathbf{q}'\in\Theta_0\mid \mathbf{f}_a(\mathbf{q}')\in\hat{\mathbf{p}}_a+\theta B_{m_a},\,a\in\overline{1,l}\}$.
Define $\Phi_u:[m_{ijk}]\mapsto[u_{ijk}]$ and $\Phi_l:[m_{ijk}]\mapsto[l_{ijk}]$ as:
\begin{equation*}
u_{ijk} = \begin{dcases*} m_{ijk}, & $i < j$ \\ m_{iik}/2, & $i = j$ \\ 0, & $i > j$ \end{dcases*},\quad l_{ijk} = \begin{dcases*} 0, & $i < j$ \\ m_{iik}/2, & $i = j$ \\ m_{ijk}, & $i > j$ \end{dcases*}
\end{equation*}
for all three dimensional arrays $[m_{ijk}]$, $[u_{ijk}]$, and $[l_{ijk}]$ with the same dimension.
As in \cite{Murray2016}, we can derive $\mathsf{D}_q\mathbf{R}^{-1}(\mathbf{q}) = - (\mathbf{R}^{-1}\Phi_u(\mathbf{A} + \mathbf{A}^T))(\mathbf{q})$, $\mathsf{D}_q\mathbf{C}(\mathbf{q}) = (\mathbf{C}\Phi_l(\mathbf{B}+\mathbf{B}^T))(\mathbf{q})$, and $\mathsf{D}_q\hat{\mathbf{J}}(\mathbf{q}) = (\mathbf{C}^{-1}\mathsf{D}_q\mathbf{F}_q\mathbf{R}^{-1} - \hat{\mathbf{J}}\Phi_u(\mathbf{A} + \mathbf{A}^T) - \Phi_l(\mathbf{B} + \mathbf{B}^T)\hat{\mathbf{J}})(\mathbf{q})$ for all $\mathbf{q}\in\Theta_0$ where $\mathbf{A} = \mathbf{R}^{-T}\mathsf{D}_q\mathbf{F}_q^T\mathbf{J}$ and $\mathbf{B} = (\mathbf{C}^{-1} - \mathbf{C}^T)\mathsf{D}_q\mathbf{F}_q\mathbf{R}^{-1}\hat{\mathbf{J}}^T$.
Then, $\|\mathsf{D}_q(\mathbf{R}^{-1}\hat{\mathbf{J}}^T\mathbf{C}_D^T)(\mathbf{q})\| \le L_M = L_\Theta(m_\Theta^{-1}(m_\Theta^2 + w^2)^{-1/2} + 2(m_\Theta^2 + w^2)^{-1} + \sqrt{2}nw^2m_\Theta^{-1}(m_\Theta^2+w^2)^{-3/2})$ and $\|\mathsf{D}_q\mathbf{u}(\mathbf{q})\| \le L_u = k_l(\theta L_M\sqrt{l} + M_\Theta(m_\Theta^2 + w^2)^{-1/2}) \to k(3\theta L_\Theta\sqrt{l}m_\Theta^{-2} + M_\Theta m_\Theta^{-1})$ as $w\to0$ for all $\mathbf{q}\in\hat{\Theta}_0$.
If we let $M_{F_a} = M_\Theta$, then we can find that, as $w\to0$, the upper bound of $\eta_\infty$ in \eqref{eqn:upper_bound_of_eta_for_phi_to_be_smaller_than_theta_for_all_time} converges to
\begin{equation*}
\frac{\theta-\theta'}{M_\Theta\left(\frac{M_\Theta}{m_\Theta} + \frac{3\theta L_\Theta\sqrt{l}}{m_\Theta^2}\right)\left(1+\frac{\theta k\sqrt{l}}{m_\Theta}\right)}.
\end{equation*}

In order to justify this result, we also find the upper bound of $\eta_\infty$ when there is no preconditioning.
Let $\mathbf{R} = \mathbf{I}_n$ and $\omega = m_\Theta^2$.
It is not difficult to see that $(\mathbf{A}_{aa}+\mathbf{A}_{aa}^T)(\mathbf{q}) - 2\omega\mathbf{I}_{m_a} \ge 0$ for all $a\in\overline{1,l}$ and $\mathbf{q}\in\Theta_0$.
\ref{ass:lower_bound_of_k_a} can be rewritten as $k_1 > 0$ and $k_a > \sum_{b=1}^{a-1}k_bM_{ab}/m_\Theta^2$ for $a\in\overline{2,l}$.
It would be important to find a tight upper bound $M_{ab}\in(0,\infty)$ for $a,b\in\overline{1,l}$ satisfying $\|\mathbf{A}_{ab}(\mathbf{q})\| = \|(\bar{\mathbf{C}}_{ab}\bar{\mathbf{C}}_{bb}^T)(\mathbf{q})\| \le M_{ab}$ for all $\mathbf{q}\in\Theta_0$, not to increase $k_a$ too much.
Let $k = \min\{k_1,\dots,k_l\}$.
We have $\|\mathbf{u}(\mathbf{q})\| \le M_u = \theta M_\Theta(\sum_{a=1}^lk_a^2)^{1/2}$ for all $\mathbf{q}\in\hat{\Theta}_0$.
Similarly as before, we can find $\mathsf{D}_q\bar{\mathbf{C}}(\mathbf{q}) = (\bar{\mathbf{C}}\Phi_l(\bar{\mathbf{C}}^{-1}\mathsf{D}_q(\mathbf{F}_q\mathbf{F}_q^T)\bar{\mathbf{C}}^{-T}))(\mathbf{q})$ and $\mathsf{D}_q\hat{\bar{\mathbf{J}}}(\mathbf{q}) = \mathsf{D}_q(\bar{\mathbf{C}}^{-1}\mathbf{F}_q)(\mathbf{q})$ for all $\mathbf{q}\in\Theta_0$.
From this, we can formulate $\|\mathsf{D}_q(\hat{\bar{\mathbf{J}}}^T\bar{\mathbf{C}}_D^T)(\mathbf{q})\| \le L_M = (1+\sqrt{2n})M_\Theta L_\Theta/m_\Theta$ and $\|\mathsf{D}_q\mathbf{u}(\mathbf{q})\| \le L_u = \|\mathbf{K}\|(\theta L_M\sqrt{l} + M_\Theta^2)$ for all $\mathbf{q}\in\hat{\Theta}_0$.
By using $(\sum_{a=1}^lk_a^2)^{1/2} > k(\sum_{a=1}^l(\sum_{b=1}^{a-1}M_{ab})^2)^{1/2}/m_\Theta^2$, we can find that the upper bound of $\eta_\infty$ in \eqref{eqn:upper_bound_of_eta_for_phi_to_be_smaller_than_theta_for_all_time} is less than or equal to 
\begin{equation*}
\frac{\theta-\theta'}{M_\Theta\left(\frac{M_\Theta}{m_\Theta} + \frac{(1+\sqrt{2n})\theta L_\Theta\sqrt{l}}{m_\Theta^2}\right)\left(1 +\frac{\theta \|\mathbf{K}\|\sqrt{S}}{m_\Theta}\right)}
\end{equation*}
where $S = \sum_{a=1}^l(\sum_{b=1}^{a-1}M_{ab})^2$.
Observe that if $l,S\gg 0$, then $\|\mathbf{K}\|$ becomes much bigger than $k$.
Therefore, if $l,n,S\gg0$ and $w\ll1$, then the preconditioning given by Theorem \ref{thm:preconditioning} makes the upper bound of $\eta_\infty$ in \eqref{eqn:upper_bound_of_eta_for_phi_to_be_smaller_than_theta_for_all_time} higher than the one without preconditioning.

\section{Conclusion}
\label{sec:conclusion}

We have presented a method that guarantees that a joint trajectory generated from a class of PIK solutions exists uniquely and stays in a nonsingular configuration space.
So far as we know, this is the first time in the study of the PIK problem that a joint trajectory is guaranteed to have this property.
Based on this result, we could find incremental sufficient conditions for task convergence, stability, uniform stability, uniform asymptotic stability, and exponential stability in continuous time.
We further extended the study to discrete time on the consideration of practical applications by finding incremental sufficient conditions for task convergence, uniform stability, uniform asymptotic stability, and exponential stability.
The analysis results in discrete time show that the total number of tasks is limited in the practical applications.
We analyzed preconditioning and showed how preconditioning can be used in order to overcome this limitation.

\ifCLASSOPTIONcaptionsoff
  \newpage
\fi


\begin{IEEEbiography}[{\includegraphics[width=1in,height=1.25in,clip,keepaspectratio]{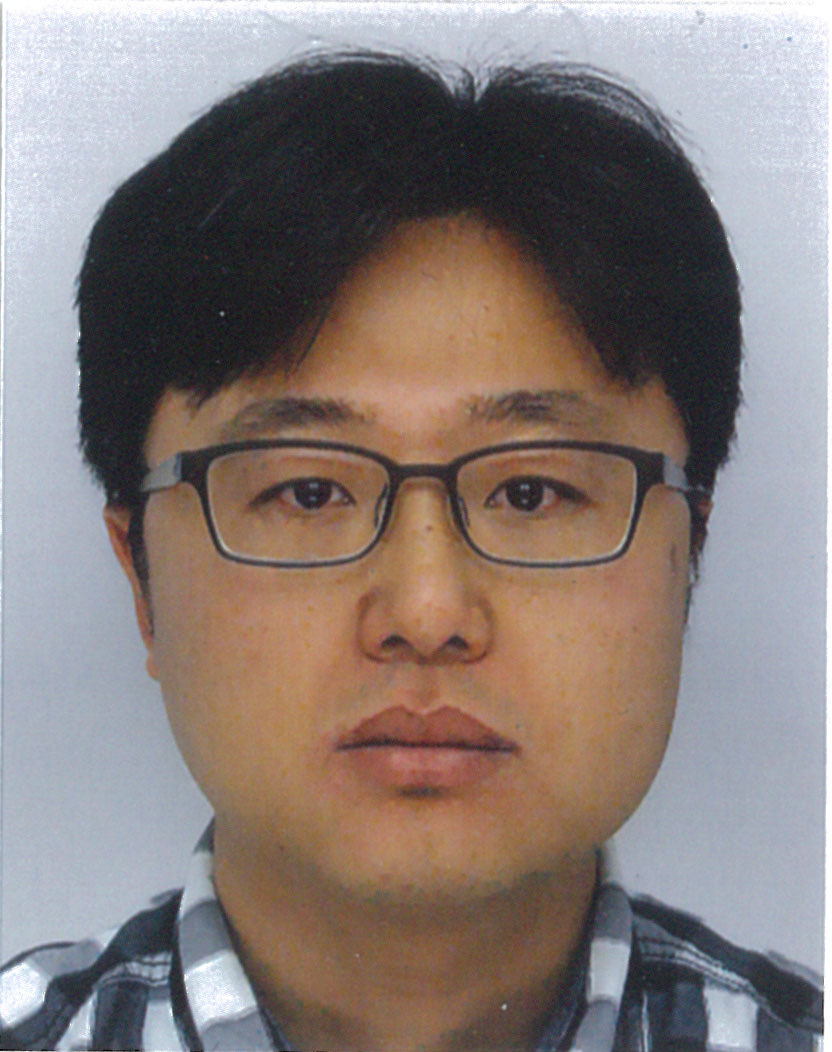}}]{Sang-ik An}
	received the B.S. in both mechanical and electronic engineerings from Korea Aerospace University, Korea in 2008 and the M.S. in mechanical engineering from Korea Advanced Institute of Science and Technology, Korea in 2010. He is currently working towards the Ph.D. degree in electrical and computer engineering at Technical University of Munich, Germany.
\end{IEEEbiography}
\begin{IEEEbiography}[{\includegraphics[width=1in,height=1.25in,clip,keepaspectratio]{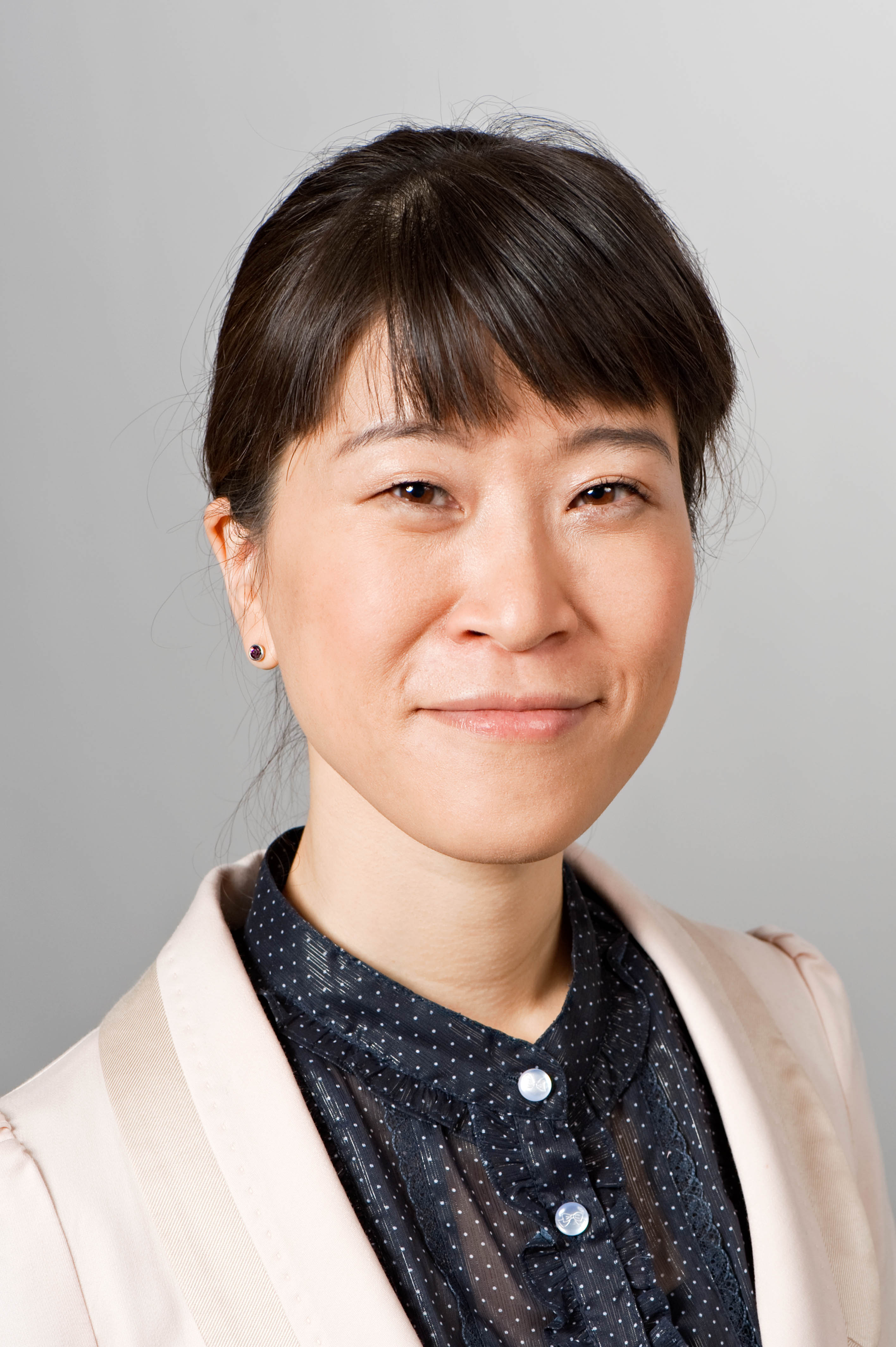}}]{Dongheui Lee}
	is Associate Professor of Human-centered Assistive Robotics at the TUM Department of Electrical and Computer Engineering. She is also director of a Human-centered assistive robotics group at the German Aerospace Center (DLR). Her research interests include human motion understanding, human robot interaction, machine learning in robotics, and assistive robotics.
	
	Prior to her appointment as Associate Professor, she was an Assistant Professor at TUM (2009-2017), Project Assistant Professor at the University of Tokyo (2007-2009), and a research scientist at the Korea Institute of Science and Technology (KIST) (2001-2004). After completing her B.S. (2001) and M.S. (2003) degrees in mechanical engineering at Kyung Hee University, Korea, she went on to obtain a PhD degree from the department of Mechano-Informatics, University of Tokyo, Japan in 2007. She was awarded a Carl von Linde Fellowship at the TUM Institute for Advanced Study (2011) and a Helmholtz professorship prize (2015). She is coordinator of both the euRobotics Topic Group on physical Human Robot Interaction and of the TUM Center of Competence Robotics, Autonomy and Interaction.
\end{IEEEbiography}
\vfill

\end{document}